\providecommand{\U}[1]{\protect\rule{.1in}{.1in}}
\newtheorem{theorem}{Theorem}[section]
\newtheorem{example}[theorem]{Example}
\newtheorem{lemma}[theorem]{Lemma}
\newtheorem{proposition}[theorem]{Proposition}
\newenvironment{proof}[1][Proof]{\noindent\textbf{#1.} }{\ \rule{0.5em}{0.5em}}
\begin{document}

\title{When is an input state always better than the others ?: universally optimal
input states for statistical inference of quantum channels}
\author{Keiji Matsumoto\\National Institute of Informatics, HItotsubashi 2-1-2, Chiyoda-ku, \\Tokyo 101-8430, e-mail: keiji@nii.ac.jp}
\maketitle

\begin{abstract}
Statistical estimation and test of unknown channels have attracted interest of
many researchers. In optimizing the process of inference, an important step is
optimization of the input state, which in general do depend on the kind of
inference (estimation or test, etc.), on the error measure, and so on. But
sometimes, there is a universally optimal input state, or an input state best
for all the statistical inferences and for all the risk functions. In the
paper, the existence of a universally optimal state is shown for group
covariant/contravariant channels, unital qubit channels and some measurement
families. To prove these results, theory of "comparison of state families" are
used. We also discuss about effectiveness of entanglement and adaptation of
input states.

\end{abstract}

\section{Introduction}

Statistical estimation and test of unknown channels have attracted interests
of many researchers. Below, let $\left\{  \Lambda_{\theta}\right\}
_{\theta\in\Theta}$ be a family of unknown channels, where $\theta\in\Theta$
is the unknown parameter. In optimizing the process of inference, one has to
optimize not only the measurement performed upon the output state
$\Lambda_{\theta}\otimes\mathbf{I}\left(  \rho_{in}\right)  $, but also the
input state $\rho_{in}$. (One may also use a process POVM\thinspace
\cite{Ziman}, operators $\left\{  M_{t}\right\}  _{t\in\mathcal{D}}$ such that
$\sum_{t\in\mathcal{D}}M_{t}=\mathbf{1}\otimes\mathrm{tr}\,_{\mathcal{H}_{R}%
}\rho_{in}^{T}$ . But then one also has to optimize $\mathrm{tr}%
\,_{\mathcal{H}_{R}}\rho_{in}$. Since an optimal input state $\rho_{in}$ is a
pure state, optimization of $\mathrm{tr}\,_{\mathcal{H}_{R}}\rho_{in}$ is
equivalent to optimization of $\rho_{in}$ .)

In general, optimal input states depend on whether we are estimating state or
testing hypothesis about unknown channels; they also depend on error measure,
and detail of the setting (Bayesian, minimax, unbiased estimation,
Neyman-Pearson test, etc.).

In some cases, however, the situation is less complicated. For example,
\cite{ChiribellaDarianoSacchi} deals with estimation of group transform
$\left\{  U_{g}\right\}  _{g\in\mathcal{G}}$, where $g\rightarrow U_{g}$ is a
representation of the group $\mathcal{G}$ and $g$ is unknown and to be
estimated. They had shown that there is an input state which is optimal with
respect to any $\mathcal{G}$-invariant loss functions. (In case of
$\mathcal{G}=\mathrm{SU}\left(  d\right)  $ and $U_{g}=g$, maximally entangled
states between the input space and the auxiliary space are optimal.) Meantime,
\cite{Fujiwara:2002} treats estimation of $\mathrm{SU}\left(  2\right)  $
channel by an unbiased estimator, and `the loss function' here is the mean
square error matrix of the estimate $\hat{\theta}$ of the unknown real vector
$\theta$ which parameterizes $\mathcal{G}=\mathrm{SU}\left(  2\right)  $.
Since the space of matrices is not totally ordered, the existence of the
minimum is non-trivial. Put differently, if the loss is scalar valued
increasing function of a mean square error matrix, then, maximally entangled
states are optimal. Also, \cite{Sacchi:2005} studies discrimination of a pair
of generalized Pauli matrices, and shows maximally entangled states minimize
Bayesian error probability for any prior distributions. In case of qubits,
they extended their result to minimax error probability\thinspace
\cite{Sacchi:2005:3}. Another example of such study is \cite{ZimanSedlak},
where discrimination of two unitary operation is discussed. They found that
minimizers of Bayesian error probability and the error probability of
unambiguous discrimination are the same.

These results motivate the following definition: we say the input is
\textit{universally optimal }for the family $\left\{  \Lambda_{\theta
}\right\}  _{\theta\in\Theta}$, roughly speaking, if it is optimal for all the
statistical inferences and for all the loss functions. (The rigorous
definition will be given later.) We show that a universally optimal state
exists (not necessarily uniquely) \ in case of group covariant and
contravariant channels, unital qubit channels and some measurement families.

To prove these results, we have recourse to the theory of "comparison of state
families" \cite{Buscemi:12}\cite{Matsumoto}; we write $\left\{  \rho_{\theta
}\right\}  _{\theta\in\Theta}\succeq^{c}\left\{  \sigma_{\theta}\right\}
_{\theta\in\Theta}$ if the family $\left\{  \rho_{\theta}\right\}  _{\theta
\in\Theta}$ is more informative than another family $\left\{  \sigma_{\theta
}\right\}  _{\theta\in\Theta}$ with respect to any kind of statistical
inferences. Then, our target is to prove
\[
\forall\rho^{\prime}\,\,\left\{  \left(  \Lambda_{\theta}\otimes
\mathbf{I}\right)  \left(  \rho_{\mathrm{opt}}\right)  \right\}  _{\theta
\in\Theta}\succeq_{c}\left\{  \left(  \Lambda_{\theta}\otimes\mathbf{I}%
\right)  \left(  \rho^{\prime}\right)  \right\}  _{\theta\in\Theta}%
,\,\forall\rho^{\prime}%
\]
for an input $\rho_{\mathrm{opt}}$. In particular, we utilize sufficient
conditions for $\left\{  \rho_{\theta}\right\}  \succeq^{c}\left\{
\sigma_{\theta}\right\}  $, Proposition\thinspace\ref{prop:randomize} and
Lemma\thinspace\ref{lem:commute}.

Based on these results, some related topics are discussed. The first topic is
effect of entanglement between the input space and the auxiliary space. For
example, in \cite{Sacchi:2005}\cite{Sacchi:2005:2}\cite{Sacchi:2005:3}, they
study the condition that Bayes risk and minimax risk of discrimination of two
unital qubit channels is smaller on an entangled state than on any separable
state. In our case, in Sections\thinspace\ref{sec:covariant}%
-\ref{sec:measurement} it is shown that a maximally entangle is universally
optimal for some channel families. But there might be a separable state which
is as good as maximally entangled states. So we question whether the
entanglement is really needed or not.

The second topic discussed is the existence of universally optimal states
under the setting where the given channel can be used for several times.

The paper is organized as follows. In Section\thinspace\ref{sec:preliminaries}%
, besides introducing notations and definitions, the theory of comparison of
state families is explained. In Sections \ref{sec:2-unitary}, \thinspace
\ref{sec:covariant} and \ref{sec:d=2}, universally optimal input states are
established for a pair of unitary operations, covariant/contravariant channel
families, and unital qubit channel families, respectively. In the proof,
Proposition\thinspace\ref{prop:randomize} is used. In Section
\ref{sec:measurement}, with the help of Lemma\thinspace\ref{lem:commute}, we
investigate universally optimal states for some families which consist of a
pair of measurements. In Section\thinspace\ref{sec:su(d)}, the family of
$\mathrm{SU}\left(  d\right)  $ is studied. In $d=2$-case, it is shown, with
recourse to Theorem\thinspace\ref{th:unital-qubit}\ in Section\thinspace
\ref{sec:d=2}, that maximally entangled states are universally optimal. On the
other hand, in $d\geq3$-case, it is shown that any state is optimal for some
statistical inferences. In Section\thinspace\ref{sec:entanglement}, we
investigate the conditions that an entangled state is strictly universally
better than any separable states. In Section\thinspace\ref{sec:iteration},
universally optimal input states in case of iterative use of the given channel
is studied.

\section{Preliminaries}

\label{sec:preliminaries}

\subsection{Settings, conventions and notations}

Below, $\mathcal{H}_{in}$, $\mathcal{H}_{out}$, $\mathcal{H}_{R}$ etc. are
finite dimensional Hilbert spaces, and $\mathcal{B}\left(  \mathcal{H}%
_{in}\right)  $, for example, are the set of linear operators over
$\mathcal{H}_{in}$. $\mathbf{1}_{in}$ and $\mathbf{I}_{in}$ denotes identity
operator over $\mathcal{H}_{in}$ and over $\mathcal{B}\left(  \mathcal{H}%
_{in}\right)  $, respectively. A channel is a trace preserving completely
positive (CPTP, hereafter) map from $\mathcal{B}\left(  \mathcal{H}%
_{in}\right)  $ to $\mathcal{B}\left(  \mathcal{H}_{out}\right)  $, and is
represented by $\Lambda$, $\Upsilon$ , etc. with subscripts and superscripts.

To do some statistical inference about a family $\left\{  \Lambda_{\theta
}\right\}  _{\theta\in\Theta}$ of channels $\Lambda_{\theta}:\mathcal{B}%
\left(  \mathcal{H}_{in}\right)  \rightarrow\mathcal{B}\left(  \mathcal{H}%
_{out}\right)  $, a statistician prepares an input state $\rho_{in}%
\in\mathcal{B}\left(  \mathcal{H}_{in}\otimes\mathcal{H}_{R}\right)  $, sends
its $\mathcal{H}_{in}$-part to $\Lambda_{\theta}$, obtaining $\Lambda_{\theta
}\otimes\mathbf{I}_{R}\left(  \rho_{in}\right)  $ as the output. To the output
$\Lambda_{\theta}\otimes\mathbf{I}_{R}\left(  \rho_{in}\right)  $, the
statistician apply a measurement with POVM $M$ which takes values in decision
space $\mathcal{D}$ (an element of $\mathcal{D}$ is usually denoted by $t$).
Without loss of generality, throughout the paper, we suppose $\rho_{in}$ is
pure, and thus we suppose $\dim\mathcal{H}_{in}=\dim\mathcal{H}_{R}=d$. For a
state vector $\left\vert \psi\right\rangle \in$\ $\mathcal{H}_{in}%
\otimes\mathcal{H}_{R}$,
\begin{equation}
\rho_{\psi}:=\mathrm{tr}_{\mathcal{H}_{R}}\,\left\vert \psi\right\rangle
\left\langle \psi\right\vert \in\mathcal{B}\left(  \mathcal{H}_{in}\right)  .
\label{rho-psi}%
\end{equation}
A system of vectors $\left\{  \left\vert i\right\rangle \right\}  _{i=1}^{d}$
is an orthonormal complete basis of $\mathcal{H}_{in}$ . Abusing the notation,
the same symbol is also used to denote an orhonormal basis of $\mathcal{H}%
_{R}$.
\[
\left\vert \Phi_{d}\right\rangle :=\frac{1}{\sqrt{d}}\sum_{i=1}^{d}\left\vert
i\right\rangle \left\vert i\right\rangle
\]
is a maximally entangled state living in $\mathcal{H}_{in}\otimes
\mathcal{H}_{R}$.

Given a linear map $\Gamma$ from $\mathcal{B}\left(  \mathcal{H}_{in}\right)
$ to $\mathcal{B}\left(  \mathcal{H}_{out}\right)  $, its Choi-Jamilokovski's
representation $Ch\left(  \Gamma\right)  $ is defined by
\[
Ch\left(  \Gamma\right)  :=\sum_{i,j=1}^{d}\Gamma\left(  \left\vert
i\right\rangle \left\langle j\right\vert \right)  \otimes\left\vert
i\right\rangle \left\langle j\right\vert .
\]
We also use the following notation:
\[
\Upsilon_{C}\left(  \rho\right)  :=C\rho C^{\dagger}.
\]
Given a state $\rho$ and a POVM $M$, denote $P_{\rho}^{M}\left(  B\right)
:=\mathrm{tr}\,\rho M\left(  B\right)  $.

When $\Theta\subset\mathcal{D}=%
\mathbb{R}
^{m}$, we write
\begin{align*}
\mathrm{E}\left[  M,\rho_{\theta}\right]   &  :=\int_{t\in\mathcal{D}%
}t\,\mathrm{d}P_{\rho_{\theta}}^{M}\left(  t\right)  ,\\
\mathrm{V}\left[  M,\rho_{\theta}\right]   &  :=\left[  \int\left(
t^{i}-\theta^{i}\right)  \left(  t^{j}-\theta^{j}\right)  \mathrm{d}%
P_{\rho_{\theta}}^{M}\left(  t\right)  \right]  .
\end{align*}

\subsection{Comparison of state families}

In comparison of input states, we have recourse to the theory of comparison of
state families\cite{Buscemi:12}\cite{Matsumoto}. Consider a family $\left\{
\rho_{\theta}\right\}  _{\theta\in\Theta}$ of states over $\mathcal{H}$ and a
family $\left\{  \sigma_{\theta}\right\}  _{\theta\in\Theta}$ of states over
$\mathcal{H}^{\prime}$. We say $\left\{  \rho_{\theta}\right\}  _{\theta
\in\Theta}$ is \textit{sufficient} to $\left\{  \sigma_{\theta}\right\}
_{\theta\in\Theta}$ with respect to classical decision problems, and write
$\left\{  \rho_{\theta}\right\}  _{\theta\in\Theta}\succeq^{c}\left\{
\sigma_{\theta}\right\}  _{\theta\in\Theta}$, if and only if, for any decision
space $\mathcal{D}$ equipped with $\sigma$-field $\mathfrak{A}$, any $\sigma
$-field $\mathfrak{B}$ over $\Theta$, any loss function $l:\Theta
\times\mathcal{D}\rightarrow%
\mathbb{R}
_{+}$ which is jointly measurable, any probability measure $\pi$ over $\left(
\Theta,\mathfrak{B}\right)  $, and for any measurement $M^{\prime}$ over
$\left(  \mathcal{D},\mathfrak{A}\right)  $ in $\mathcal{H}^{\prime}$, there
is a measurement $M$ over $\left(  \mathcal{D},\mathfrak{A}\right)  $ in
$\mathcal{H}$ such that
\[
\int_{\Theta\times\mathcal{D}}l_{\theta}\left(  t\right)  \mathrm{d}%
P_{\rho_{\theta}}^{M}\left(  t\right)  \mathrm{d}\pi\left(  \theta\right)
\leq\int_{\Theta\times\mathcal{D}}l_{\theta}\left(  t\right)  \,\mathrm{d}%
P_{\sigma_{\theta}}^{M^{\prime}}\left(  t\right)  \mathrm{d}\pi\left(
\theta\right)  .
\]
When $\left\{  \rho_{\theta}\right\}  _{\theta\in\Theta}\succeq^{c}\left\{
\sigma_{\theta}\right\}  _{\theta\in\Theta}$ and $\left\{  \sigma_{\theta
}\right\}  _{\theta\in\Theta}\succeq^{c}\left\{  \rho_{\theta}\right\}
_{\theta\in\Theta}$ holds, we write $\left\{  \rho_{\theta}\right\}
_{\theta\in\Theta}\equiv^{c}\left\{  \sigma_{\theta}\right\}  _{\theta
\in\Theta}$.

\begin{lemma}
\label{lem:sufficient}\bigskip$\left\{  \rho_{\theta}\right\}  _{\theta
\in\Theta}\succeq^{c}\left\{  \sigma_{\theta}\right\}  _{\theta\in\Theta}$
holds if and only if, for any measurement $M$ on $\left(  \mathcal{D}%
,\mathfrak{A}\right)  $, there is a measurement $M^{\prime}$ on $\left(
\mathcal{D},\mathfrak{A}\right)  $ such that $P_{\rho_{\theta}}^{M^{\prime}%
}=P_{\sigma_{\theta}}^{M}$.
\end{lemma}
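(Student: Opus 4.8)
The plan is to prove the two implications separately; the ``if'' direction is immediate, and the ``only if'' direction rests on a Hahn--Banach separation followed by a compactness argument.

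For the ``if'' direction, suppose that for every measurement $M$ on $\mathcal{H}^{\prime}$ valued in $\left(\mathcal{D},\mathfrak{A}\right)$ there is a measurement $M^{\prime}$ on $\mathcal{H}$ valued in $\left(\mathcal{D},\mathfrak{A}\right)$ with $P_{\rho_{\theta}}^{M^{\prime}}=P_{\sigma_{\theta}}^{M}$ for all $\theta$. Given any decision space, any $\sigma$-fields, any jointly measurable $l$, any prior $\pi$, and any measurement $M$ on $\mathcal{H}^{\prime}$, take the matching $M^{\prime}$ on $\mathcal{H}$. Since the two distribution families coincide, the two Bayes risks are literally equal, so in particular the inequality in the definition of $\succeq^{c}$ holds, and $\left\{\rho_{\theta}\right\}_{\theta\in\Theta}\succeq^{c}\left\{\sigma_{\theta}\right\}_{\theta\in\Theta}$.

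For the ``only if'' direction, assume $\left\{\rho_{\theta}\right\}\succeq^{c}\left\{\sigma_{\theta}\right\}$, fix a measurement $M$ on $\mathcal{H}^{\prime}$ valued in $\left(\mathcal{D},\mathfrak{A}\right)$, and aim to produce $M^{\prime}$ on $\mathcal{H}$ with $P_{\rho_{\theta}}^{M^{\prime}}=P_{\sigma_{\theta}}^{M}$ for all $\theta$. I would first treat finite $\mathcal{D}=\{1,\dots,n\}$ and finite $\Theta$. There the set $K:=\{(P_{\rho_{\theta}}^{N})_{\theta\in\Theta}:N\text{ a POVM on }\mathcal{H}\text{ valued in }\mathcal{D}\}$ is the image of the compact convex set of POVMs under a linear map, hence a compact convex subset of $(\mathbb{R}^{n})^{\Theta}$. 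If the target $p:=(P_{\sigma_{\theta}}^{M})_{\theta\in\Theta}$ were not in $K$, strict separation would give reals $a_{\theta}(t)$ with $\sum_{\theta,t}a_{\theta}(t)P_{\sigma_{\theta}}^{M}(t)<\sum_{\theta,t}a_{\theta}(t)P_{\rho_{\theta}}^{N}(t)$ for every POVM $N$. Choosing $\pi$ uniform on $\Theta$ and $l_{\theta}(t):=c+|\Theta|\,a_{\theta}(t)$ with $c$ large enough that $l\geq0$ converts this into a strict inequality between the Bayes risk of $\left(\rho,N\right)$ and that of $\left(\sigma,M\right)$ valid for all $N$ --- the constant $c$ shifts both sides equally because POVMs are normalized --- contradicting $\succeq^{c}$. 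Hence $p\in K$, i.e. $M^{\prime}$ exists. To drop finiteness of $\Theta$ (still with $\mathcal{D}$ finite), note that any Bayes problem supported on a finite $F\subseteq\Theta$ is a special case of the definition, so by the finite argument the set $K_{F}$ of POVMs $N$ on $\mathcal{H}$ with $P_{\rho_{\theta}}^{N}=P_{\sigma_{\theta}}^{M}$ for all $\theta\in F$ is nonempty, and it is closed in the compact set of all such POVMs; the family $\{K_{F}\}_{F}$ then has the finite intersection property, so $\bigcap_{F}K_{F}\neq\emptyset$, and any element is the required $M^{\prime}$. Finally, for general $\left(\mathcal{D},\mathfrak{A}\right)$ I would approximate $\mathfrak{A}$ by an increasing net of finite sub-$\sigma$-fields (finite measurable partitions), apply the finite-$\mathcal{D}$ result on each, and pass to a limit POVM using weak-$\ast$ (Tychonoff) compactness of the set of operator-valued functions on $\mathfrak{A}$ dominated by $\mathbf{1}$, checking that the limit is genuinely $\sigma$-additive and reproduces $P_{\sigma_{\theta}}^{M}$ on all of $\mathfrak{A}$.

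The routine part is the finite case, which is convex separation. The genuine obstacles are (i) arranging that the separating functional yields an \emph{admissible} loss (nonnegative) and an \emph{admissible} prior (a probability measure), handled by the constant shift and by restricting to finitely supported priors; and (ii) the measure-theoretic passage from finite decision spaces to an arbitrary $\left(\mathcal{D},\mathfrak{A}\right)$, where the approximating POVMs need not be chosen consistently, so one must invoke a compactness/projective-limit argument and verify measurability and $\sigma$-additivity of the limiting POVM. I expect (ii) to be the main technical hurdle.
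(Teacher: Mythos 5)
The paper itself states Lemma~\ref{lem:sufficient} without proof (it belongs to the comparison-of-experiments machinery imported from \cite{Buscemi:12}\cite{Matsumoto}), so there is no in-paper argument to measure you against; judged on its own terms, your proposal is right in spirit and correct where it is actually carried out. The ``if'' direction is indeed immediate, and your finite-$\mathcal{D}$, finite-$\Theta$ argument is sound: the achievable tuples $\left(P_{\rho_{\theta}}^{N}\right)_{\theta}$ form a compact convex set, strict separation produces $a_{\theta}\left(t\right)$, and the shift $l_{\theta}\left(t\right)=c+\left\vert \Theta\right\vert a_{\theta}\left(t\right)$ with a uniform prior is legitimate because $\sum_{t}P^{N}_{\rho_\theta}\left(t\right)=\sum_{t}P^{M}_{\sigma_\theta}\left(t\right)=1$, so the constant cancels and the separation contradicts $\succeq^{c}$. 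The finite-intersection-property step removing finiteness of $\Theta$ (with priors of finite support, which the definition permits) is also fine.

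The genuine gap is in the last step, and it is not the ``routine verification'' your sketch makes it sound like. A Tychonoff limit of the partial POVMs $N_{\mathcal{P}}$ along refining finite partitions yields only a \emph{finitely additive} normalized positive-operator-valued set function $N$ on $\mathfrak{A}$. The only $\sigma$-additive information you control about $N$ is $\mathrm{tr}\,\rho_{\theta}N\left(\cdot\right)=P_{\sigma_{\theta}}^{M}\left(\cdot\right)$; if $\mathrm{span}\left\{\rho_{\theta}\right\}$ is a proper subspace of $\mathcal{B}\left(\mathcal{H}\right)$ (the typical situation in this paper, e.g.\ two-point $\Theta$), the components of $N$ orthogonal to that span are completely unconstrained along the net, and the limit can carry a purely finitely additive charge there, so $\sigma$-additivity of $N$ simply does not follow from $\sigma$-additivity of the traces $\mathrm{tr}\,\rho_{\theta}N\left(\cdot\right)$. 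Nor can you repair it by projecting $N\left(A\right)$ onto $\mathrm{span}\left\{\rho_{\theta}\right\}$, since the projected set function, while $\sigma$-additive, need not be positive. Closing this requires an additional idea, not a check: e.g.\ restrict to standard Borel (or compact metric) decision spaces and use regularity, or exploit that the target family is uniformly dominated, $P_{\sigma_{\theta}}^{M}\left(A\right)\leq\mathrm{tr}\,M\left(A\right)$, to build $M^{\prime}$ as an operator-valued density with respect to $\nu\left(\cdot\right):=\mathrm{tr}\,M\left(\cdot\right)/\dim\mathcal{H}^{\prime}$, or invoke the Le~Cam-type randomization criterion of the cited references directly. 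It is worth noting that every application in this paper uses either a finite $\mathcal{D}$ or $\mathcal{D}\subset\mathbb{R}^{m}$, so proving the lemma for finite and standard Borel decision spaces would suffice for the paper's purposes --- but even in that setting the $\sigma$-additivity of the limiting object must be argued, not merely ``checked''.
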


Due to Lemma\thinspace\ref{lem:sufficient}, $\left\{  \rho_{\theta}\right\}
_{\theta\in\Theta}\succeq^{c}\left\{  \sigma_{\theta}\right\}  _{\theta
\in\Theta}$ has very strong implications: whatever the settings are, and
whatever the error measures are chosen, $\left\{  \rho_{\theta}\right\}
_{\theta\in\Theta}$ is always better than $\left\{  \sigma_{\theta}\right\}
_{\theta\in\Theta}$. For example, for any decision space $\mathcal{D}$
equipped with $\sigma$-field $\mathfrak{A}$, any loss function $l:\Theta
\times\mathcal{D}\rightarrow%
\mathbb{R}
_{+}$ such that $l_{\theta}\left(  \cdot\right)  $ is measurable, the minimax
risk is always smaller on $\left\{  \rho_{\theta}\right\}  _{\theta\in\Theta}$
than on $\left\{  \sigma_{\theta}\right\}  _{\theta\in\Theta}$ :
\[
\inf_{M}\sup_{\theta\in\Theta}\int_{\mathcal{D}}l_{\theta}\left(  t\right)
\mathrm{d}P_{\rho_{\theta}}^{M}\left(  t\right)  \leq\inf_{M}\sup_{\theta
\in\Theta}\int_{\mathcal{D}}l_{\theta}\left(  t\right)  \mathrm{d}%
P_{\sigma_{\theta}}^{M}\left(  t\right)  .
\]
Also, in hypothesis testing of Neyman-Pearson type, the second error
probability of the optimal level $\alpha$ test is also smaller on $\left\{
\rho_{\theta}\right\}  _{\theta\in\Theta}$ than on $\left\{  \sigma_{\theta
}\right\}  _{\theta\in\Theta}$. That is, letting $\mathcal{D}:=\left\{
0,1\right\}  $,\thinspace\ $\Theta_{0}\cup$ $\Theta_{1}=\Theta$, and
\[
l_{\theta}^{\mathrm{T}}\left(  t\right)  :=\left\{
\begin{array}
[c]{cc}%
1, & \left(  \theta\in\Theta_{0}\text{ and }t=1,\text{ or }\theta\in\Theta
_{1}\text{ and }t=0\right) \\
0, & \text{otherwise}%
\end{array}
\right.  ,
\]
we have \
\begin{align*}
&  \inf_{M}\left\{  \int l_{1}^{\mathrm{T}}\left(  t\right)  \mathrm{d}%
P_{\rho_{\theta}}^{M}\left(  t\right)  ;\int l_{0}^{NP}\left(  t\right)
\mathrm{d}P_{\rho_{\theta}}^{M}\left(  t\right)  \leq\alpha\right\} \\
&  \leq\inf_{M}\left\{  \int l_{1}^{\mathrm{T}}\left(  t\right)
\mathrm{d}P_{\sigma_{\theta}}^{M}\left(  t\right)  ;\int l_{0}^{NP}\left(
t\right)  \mathrm{d}P_{\sigma_{\theta}}^{M}\left(  t\right)  \leq
\alpha\right\}  .
\end{align*}

Another example would be unambiguous discrimination: letting $\mathcal{D}%
:=\left\{  0,1,2\right\}  $,\thinspace\ $\Theta_{0}\cup$ $\Theta_{1}=\Theta$,
\[
l_{\theta}^{\mathrm{IT}}\left(  t\right)  :=\left\{
\begin{array}
[c]{cc}%
\infty, & \left(  \theta\in\Theta_{0}\text{ and }t=1,\text{ or }\theta
\in\Theta_{1}\text{ and }t=0\right) \\
1, & \left(  \theta\in\Theta_{0}\text{ and }t=2,\text{ or }\theta\in\Theta
_{1}\text{ and }t=2\right) \\
0, & \left(  \theta\in\Theta_{0}\text{ and }t=0,\text{ or }\theta\in\Theta
_{1}\text{ and }t=1\right)
\end{array}
\right.  ,
\]
we have
\[
\inf_{M}\int_{\Theta\times\mathcal{D}}l_{\theta}^{\mathrm{IT}}\left(
t\right)  \mathrm{d}P_{\rho_{\theta}}^{M}\left(  t\right)  \mathrm{d}%
\pi\left(  \mathrm{\,}\theta\right)  \leq\inf_{M}\int_{\Theta\times
\mathcal{D}}l_{\theta}^{\mathrm{IT}}\left(  t\right)  \mathrm{d}%
P_{\sigma_{\theta}}^{M}\left(  t\right)  \mathrm{d}\pi\left(  \mathrm{\,}%
\theta\right)  .
\]

Lastly, let $\Theta\subset\mathcal{D}=%
\mathbb{R}
^{m}$. Then, mean square error of an unbiased estimator is always better on
$\left\{  \rho_{\theta}\right\}  _{\theta\in\Theta}$ than on $\left\{
\sigma_{\theta}\right\}  _{\theta\in\Theta}$. That is, for any measurement
$M^{\prime}$ with
\[
\mathrm{E}\left[  M^{\prime},\sigma_{\theta}\right]  =\theta,
\]
there is a measurement $M$ such that
\begin{align*}
\mathrm{E}\left[  M,\rho_{\theta}\right]   &  =\theta,\\
\mathrm{V}\left[  M,\rho_{\theta}\right]   &  =\mathrm{V}\left[  M^{\prime
},\sigma_{\theta}\right]  .
\end{align*}

\begin{proposition}
\label{prop:randomize}\cite{Matsumoto}If there is a trace preserving positive
map $\Gamma$ such that $\Gamma\left(  \rho_{\theta}\right)  =\sigma_{\theta}$,
we have $\left\{  \rho_{\theta}\right\}  _{\theta\in\Theta}\succeq^{c}\left\{
\sigma_{\theta}\right\}  _{\theta\in\Theta}$.
\end{proposition}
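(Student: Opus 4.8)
The plan is to reduce the claim to Lemma~\ref{lem:sufficient}: we must show that for any measurement $M$ on $\left(\mathcal{D},\mathfrak{A}\right)$ acting on $\mathcal{H}$ (the space carrying $\left\{\sigma_{\theta}\right\}$), there is a measurement $M^{\prime}$ on $\left(\mathcal{D},\mathfrak{A}\right)$ acting on $\mathcal{H}$ (the space carrying $\left\{\rho_{\theta}\right\}$) with $P_{\rho_{\theta}}^{M^{\prime}}=P_{\sigma_{\theta}}^{M}$ for every $\theta\in\Theta$. The natural candidate is $M^{\prime}:=\Gamma^{*}\circ M$, where $\Gamma^{*}$ is the adjoint (dual) map of $\Gamma$ with respect to the Hilbert--Schmidt inner product, defined by $\mathrm{tr}\,\left[X\,\Gamma^{*}(Y)\right]=\mathrm{tr}\,\left[\Gamma(X)\,Y\right]$ for all operators $X,Y$. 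Concretely, for each measurable $B$ set $M^{\prime}(B):=\Gamma^{*}\left(M(B)\right)$.

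First I would verify that $M^{\prime}$ is a legitimate POVM. Since $\Gamma$ is positive, its dual $\Gamma^{*}$ is positive as well, so $M(B)\geq 0$ implies $M^{\prime}(B)=\Gamma^{*}\left(M(B)\right)\geq 0$. Since $\Gamma$ is trace preserving, its dual is unital: $\Gamma^{*}\left(\mathbf{1}\right)=\mathbf{1}$ (this is exactly the adjoint statement of $\mathrm{tr}\,\Gamma(X)=\mathrm{tr}\,X$ for all $X$). Hence $M^{\prime}(\mathcal{D})=\Gamma^{*}\left(M(\mathcal{D})\right)=\Gamma^{*}\left(\mathbf{1}\right)=\mathbf{1}$, and $\sigma$-additivity of $M^{\prime}$ follows from linearity and (weak) continuity of $\Gamma^{*}$ together with $\sigma$-additivity of $M$. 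Thus $M^{\prime}$ is a POVM on $\left(\mathcal{D},\mathfrak{A}\right)$ in $\mathcal{H}$.

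Next I would check that $M^{\prime}$ reproduces the required statistics. By the defining adjoint relation, for every $\theta$ and every $B\in\mathfrak{A}$,
\[
P_{\rho_{\theta}}^{M^{\prime}}(B)=\mathrm{tr}\,\left[\rho_{\theta}\,\Gamma^{*}\left(M(B)\right)\right]=\mathrm{tr}\,\left[\Gamma\left(\rho_{\theta}\right)\,M(B)\right]=\mathrm{tr}\,\left[\sigma_{\theta}\,M(B)\right]=P_{\sigma_{\theta}}^{M}(B),
\]
using the hypothesis $\Gamma\left(\rho_{\theta}\right)=\sigma_{\theta}$ in the third equality. Since this holds for all $\theta\in\Theta$ and all $B$, the condition of Lemma~\ref{lem:sufficient} is satisfied, giving $\left\{\rho_{\theta}\right\}_{\theta\in\Theta}\succeq^{c}\left\{\sigma_{\theta}\right\}_{\theta\in\Theta}$.

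I do not expect a serious obstacle here; the proof is essentially a duality computation. The only points requiring a little care are the measure-theoretic ones: confirming that $B\mapsto\Gamma^{*}\left(M(B)\right)$ is genuinely $\sigma$-additive (in the appropriate operator topology, which on a finite-dimensional space is automatic since $\Gamma^{*}$ is a bounded linear map), and making sure the adjoint $\Gamma^{*}$ is taken with respect to the correct pairing so that ``trace preserving'' translates to ``unital.'' If one prefers to avoid invoking the dual map abstractly, an alternative is to give the Kraus-type / positive-map structure of $\Gamma$ explicitly and write $M^{\prime}$ in those terms, but the adjoint-map argument is cleaner and is the route I would take.
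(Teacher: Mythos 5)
Your proof is correct. The paper does not actually prove Proposition~\ref{prop:randomize} --- it is imported from \cite{Matsumoto} without proof --- so there is no internal argument to compare against; your dual-map construction, $M^{\prime}\left(  B\right)  :=\Gamma^{\ast}\left(  M\left(  B\right)  \right)$ with $\Gamma^{\ast}$ positive and unital because $\Gamma$ is positive and trace preserving, is the standard proof of exactly this statement and plugs directly into Lemma~\ref{lem:sufficient} as stated. Two small remarks: in the finite-dimensional setting of the paper the $\sigma$-additivity and continuity caveats you flag are automatic, and it is worth noting that your argument uses only positivity (not complete positivity) of $\Gamma$, which is precisely what the paper needs later when $\Gamma$ is built from the non-completely-positive map $\Gamma_{\mathrm{unot}}$ in Section~\ref{sec:d=2}.
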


\begin{lemma}
\label{lem:alberti-uhlmann}\cite{AlbertiUhlmann}There is a CPTP map $\Gamma$
with
\[
\Gamma\left(  \left\vert \psi_{+}\right\rangle \left\langle \psi
_{+}\right\vert \right)  =\left\vert \varphi_{+}\right\rangle \left\langle
\varphi_{+}\right\vert ,\,\Gamma\left(  \left\vert \psi_{-}\right\rangle
\left\langle \psi_{-}\right\vert \right)  =\left\vert \varphi_{-}\right\rangle
\left\langle \varphi_{-}\right\vert ,
\]
If and only if
\[
\left\vert \left\langle \psi_{+}\right.  \left\vert \psi_{-}\right\rangle
\right\vert \leq\left\vert \left\langle \varphi_{+}\right.  \left\vert
\varphi_{-}\right\rangle \right\vert .
\]

\end{lemma}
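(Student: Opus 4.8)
The plan is to prove the two directions separately: the forward (necessity) direction from monotonicity of a standard distinguishability measure, and the backward (sufficiency) direction from an explicit Stinespring-type construction.

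For necessity, suppose such a CPTP map $\Gamma$ exists. The fidelity $F(\rho,\sigma)=\|\sqrt{\rho}\sqrt{\sigma}\|_{1}$ of two pure states equals the modulus of their inner product, $F(\left\vert \psi\right\rangle \left\langle \psi\right\vert ,\left\vert \phi\right\rangle \left\langle \phi\right\vert )=\left\vert \left\langle \psi\right.\left\vert \phi\right\rangle \right\vert$, and fidelity is non-decreasing under every CPTP map. Applying this to $\Gamma$ gives $\left\vert \left\langle \psi_{+}\right.\left\vert \psi_{-}\right\rangle \right\vert =F(\left\vert \psi_{+}\right\rangle \left\langle \psi_{+}\right\vert ,\left\vert \psi_{-}\right\rangle \left\langle \psi_{-}\right\vert )\leq F(\left\vert \varphi_{+}\right\rangle \left\langle \varphi_{+}\right\vert ,\left\vert \varphi_{-}\right\rangle \left\langle \varphi_{-}\right\vert )=\left\vert \left\langle \varphi_{+}\right.\left\vert \varphi_{-}\right\rangle \right\vert$. (Equivalently, one may use that the trace distance $\tfrac{1}{2}\|\left\vert \psi_{+}\right\rangle \left\langle \psi_{+}\right\vert -\left\vert \psi_{-}\right\rangle \left\langle \psi_{-}\right\vert \|_{1}=\sqrt{1-\left\vert \left\langle \psi_{+}\right.\left\vert \psi_{-}\right\rangle \right\vert ^{2}}$ is non-increasing under CPTP maps.)

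For sufficiency, note first that replacing $\left\vert \psi_{-}\right\rangle$ by $e^{i\chi}\left\vert \psi_{-}\right\rangle$ and $\left\vert \varphi_{-}\right\rangle$ by $e^{i\chi^{\prime}}\left\vert \varphi_{-}\right\rangle$ changes neither the rank-one projectors involved nor the moduli of the inner products, so we may assume $\left\langle \psi_{+}\right.\left\vert \psi_{-}\right\rangle =c\geq0$ and $\left\langle \varphi_{+}\right.\left\vert \varphi_{-}\right\rangle =c^{\prime}\geq0$ with $c\leq c^{\prime}$. If $c^{\prime}=0$ then $c=0$ and we set $\left\vert \xi_{+}\right\rangle =\left\vert \xi_{-}\right\rangle$; otherwise pick unit vectors $\left\vert \xi_{+}\right\rangle ,\left\vert \xi_{-}\right\rangle$ in an auxiliary space $\mathcal{H}_{E}$ with $\left\langle \xi_{+}\right.\left\vert \xi_{-}\right\rangle =c/c^{\prime}\in\lbrack0,1]$. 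Define a linear map $W$ on $\mathrm{span}\{\left\vert \psi_{+}\right\rangle ,\left\vert \psi_{-}\right\rangle \}$ by $W\left\vert \psi_{\pm}\right\rangle :=\left\vert \varphi_{\pm}\right\rangle \otimes\left\vert \xi_{\pm}\right\rangle$. A direct inner-product check (using $\left\langle W\psi_{+}\right.\left\vert W\psi_{-}\right\rangle =c^{\prime}\cdot(c/c^{\prime})=c$) shows $W$ is an isometry onto its range. Extend $W$ to an isometry $\widetilde{W}:\mathcal{H}_{in}\rightarrow\mathcal{H}_{out}\otimes\mathcal{H}_{E}$ — possible once $\dim\mathcal{H}_{E}$ is taken large enough — and put $\Gamma(\rho):=\mathrm{tr}_{\mathcal{H}_{E}}\,\widetilde{W}\rho\widetilde{W}^{\dagger}$. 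This is CPTP and satisfies $\Gamma(\left\vert \psi_{\pm}\right\rangle \left\langle \psi_{\pm}\right\vert )=\mathrm{tr}_{\mathcal{H}_{E}}(\left\vert \varphi_{\pm}\right\rangle \left\langle \varphi_{\pm}\right\vert \otimes\left\vert \xi_{\pm}\right\rangle \left\langle \xi_{\pm}\right\vert )=\left\vert \varphi_{\pm}\right\rangle \left\langle \varphi_{\pm}\right\vert$, as required.

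The construction is essentially free of obstacles; the only points needing care are the normalization reductions (the rephasings, and the degenerate cases $c^{\prime}\in\{0,1\}$ or $c=c^{\prime}$, where $c/c^{\prime}$ degenerates or the relevant span is one-dimensional) and the dimension bookkeeping in extending $W$ to a global isometry. The one genuinely external ingredient is the monotonicity of fidelity (or trace distance) used for necessity; everything else is self-contained. As an alternative one could derive both directions from the general Alberti–Uhlmann criterion — a CPTP $\Gamma$ with $\Gamma(\rho_{i})=\sigma_{i}$ exists iff $\|\rho_{1}-t\rho_{2}\|_{1}\geq\|\sigma_{1}-t\sigma_{2}\|_{1}$ for all $t\geq0$ — after verifying that for rank-one $\rho_{i},\sigma_{i}$ this family collapses to the single overlap inequality; but the isometric construction above is shorter and more transparent.
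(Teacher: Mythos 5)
Your proof is correct. Note that the paper itself offers no proof of this lemma: it is quoted from Alberti--Uhlmann as a known result, so there is no internal argument to compare against. Your two directions are both sound. Necessity follows exactly as you say from monotonicity of fidelity (or trace distance) under CPTP maps together with $F\left(  \left\vert \psi\right\rangle \left\langle \psi\right\vert ,\left\vert \phi\right\rangle \left\langle \phi\right\vert \right)  =\left\vert \left\langle \psi\right.  \left\vert \phi\right\rangle \right\vert $. For sufficiency, your map $W\left\vert \psi_{\pm}\right\rangle =\left\vert \varphi_{\pm}\right\rangle \otimes\left\vert \xi_{\pm}\right\rangle $ with $\left\langle \xi_{+}\right.  \left\vert \xi_{-}\right\rangle =c/c^{\prime}$ reproduces the Gram matrix of $\left\{  \left\vert \psi_{+}\right\rangle ,\left\vert \psi_{-}\right\rangle \right\}  $ (including the conjugate entry, since both overlaps have been made real and non-negative), so it is an isometry on the span, extends to a global isometry for $\mathcal{H}_{E}$ large enough, and tracing out $\mathcal{H}_{E}$ gives the desired CPTP map; the degenerate cases ($c^{\prime}=0$, linearly dependent $\left\vert \psi_{\pm}\right\rangle $) are handled as you indicate. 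Your closing remark is also accurate: the cited Alberti--Uhlmann theorem is the more general statement for mixed-state pairs, characterized by the family of inequalities $\left\Vert \rho_{+}-s\,\rho_{-}\right\Vert _{1}\geq\left\Vert \sigma_{+}-s\,\sigma_{-}\right\Vert _{1}$ (which the paper uses separately in Lemma~\ref{lem:commute}), whereas for rank-one states this collapses to the single overlap inequality, and your direct dilation construction proves the pure-state case without invoking that machinery.
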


\begin{lemma}
\label{lem:commute}\cite{Matsumoto}Suppose $\Theta=\left\{  +,-\right\}  $.
\ If $\left\{  \rho_{\theta}\right\}  _{\theta\in\Theta}\succeq^{c}\left\{
\sigma_{\theta}\right\}  _{\theta\in\Theta}$, then%
\begin{equation}
\left\Vert \rho_{+}-s\,\rho_{-}\right\Vert _{1}\geq\left\Vert \sigma
_{+}-s\,\sigma_{-}\right\Vert _{1},\,\forall s\geq0.\, \label{rho-s rho}%
\end{equation}
If (\ref{rho-s rho}) and $\left[  \rho_{+},\rho_{-}\right]  =0$, then
$\left\{  \rho_{\theta}\right\}  _{\theta\in\Theta}\succeq^{c}\left\{
\sigma_{\theta}\right\}  _{\theta\in\Theta}$.
\end{lemma}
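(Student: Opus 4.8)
The plan is to establish the two implications separately, using the characterization of $\succeq^c$ from Lemma~\ref{lem:sufficient} together with Lemma~\ref{lem:alberti-uhlmann} for the converse direction. For the first (forward) implication, I would start from the operational meaning of $\{\rho_\theta\}\succeq^c\{\sigma_\theta\}$ applied to the binary decision space $\mathcal{D}=\{+,-\}$. The key observation is that $\|\rho_+ - s\,\rho_-\|_1$ has a variational characterization in terms of two-outcome measurements: for any $M=(M_+,M_-)$ with $M_++M_-=\mathbf 1$, one has $\mathrm{tr}\,\rho_+ M_+ - s\,\mathrm{tr}\,\rho_- M_+ \le \tfrac12(\|\rho_+ - s\,\rho_-\|_1 + 1 - s)$ with equality attained by the projector onto the positive part of $\rho_+ - s\,\rho_-$; equivalently $\|\rho_+ - s\,\rho_-\|_1 = \sup_M \big( \mathrm{tr}\,\rho_+ M_+ + s\,\mathrm{tr}\,\rho_- M_- \big) - (1+s) + \text{(something)}$ — I would pin down the exact identity, but the point is that this quantity is a supremum over two-outcome POVMs of an affine functional of $\big(P^M_{\rho_+},P^M_{\rho_-}\big)$. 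Since $\{\rho_\theta\}\succeq^c\{\sigma_\theta\}$ means (by Lemma~\ref{lem:sufficient}, using the symmetric formulation) every outcome-pair distribution achievable on $\{\sigma_\theta\}$ is also achievable on $\{\rho_\theta\}$, the supremum on the $\rho$ side dominates the supremum on the $\sigma$ side, giving \eqref{rho-s rho}.

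For the converse, assume \eqref{rho-s rho} holds and $[\rho_+,\rho_-]=0$. Since the two states commute, they are simultaneously diagonalizable: write $\rho_\pm = \sum_k p^\pm_k |k\rangle\langle k|$ in a common eigenbasis. Then $\|\rho_+ - s\,\rho_-\|_1 = \sum_k |p^+_k - s\,p^-_k|$, so the hypothesis becomes a statement purely about the two classical probability distributions $P^+=(p^+_k)$, $P^-=(p^-_k)$ on the index set $\{k\}$, namely $\sum_k|p^+_k - s\,p^-_k| \ge \|\sigma_+ - s\,\sigma_-\|_1$ for all $s\ge 0$. The idea is then: (i) by measuring $\{\rho_\theta\}$ in the common eigenbasis one extracts exactly the classical pair $(P^+,P^-)$, and $\{\rho_\theta\}\equiv^c\{P^+,P^-\}$ as classical experiments; (ii) for a pair of classical distributions, the family $\{P^+,P^-\}$ is a dichotomy, and a classical randomization (Markov kernel) sending $P^\pm$ to the distributions obtained from $\sigma_\pm$ by some fixed measurement exists iff a comparison-of-experiments condition holds, which for binary parameter is governed precisely by the quantities $\sum_k |p^+_k - s\,p^-_k|$ (this is the classical Blackwell/Torgersen-type criterion, the "deficiency zero" condition reducing to the ordering of the functionals $s\mapsto \int (dP^+ - s\,dP^-)_+$). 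So for any measurement $M$ on $\mathcal H'$, applying it to $\sigma_\pm$ yields distributions that $\{P^+,P^-\}$ dominates classically; combined with (i) this produces the required $M'$ with $P^{M'}_{\rho_\theta}=P^M_{\sigma_\theta}$, i.e. $\{\rho_\theta\}\succeq^c\{\sigma_\theta\}$.

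I expect the main obstacle to be step (ii) of the converse: making precise that the one-parameter family of inequalities \eqref{rho-s rho} is not just \emph{necessary} but \emph{sufficient} for classical sufficiency of one binary experiment over another. This is where the commutativity assumption is genuinely used and cannot be dropped (in the noncommutative case the trace-norm inequalities are famously not sufficient). I would either invoke the classical comparison-of-experiments theorem for two-point parameter spaces directly, or give a hands-on construction of the Markov kernel: since everything is finite-dimensional, $\sigma_\pm$ can be replaced by their joint spectral decomposition only along a single Hermitian combination (they need not commute, but the \emph{outcome} distributions $P^M_{\sigma_\pm}$ are just two probability vectors), so the problem reduces to: given two pairs of probability vectors $(P^+,P^-)$ and $(Q^+,Q^-)$ with $\sum|p^+_k - s p^-_k|\ge\sum|q^+_j - s q^-_j|$ for all $s\ge0$, construct a stochastic matrix $T$ with $TP^\pm = Q^\pm$. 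I would handle this by a standard argument: normalize so that both pairs have the same total "angle," reduce to the case where the $Q$-pair is supported on at most two points by a coarse-graining/splitting procedure, and verify the two-point case by direct computation, the inequalities at the two relevant values of $s$ (the "corners") being exactly what is needed. The rest (the forward direction and the eigenbasis-measurement equivalence) is routine.
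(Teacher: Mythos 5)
The paper itself does not prove this lemma --- it is quoted from the reference [Matsumoto, ``A quantum version of randomization criteria''] --- so your proposal can only be judged on its own merits, and on those merits it is essentially correct and follows the standard route that the cited reference is built on. The forward direction is fine once you pin down the variational identity you left vague: for Hermitian $A$ one has $\left\Vert A\right\Vert _{1}=2\sup_{0\leq M\leq\mathbf{1}}\mathrm{tr}\,MA-\mathrm{tr}\,A$, so with $A=\rho_{+}-s\,\rho_{-}$ the trace norm is an affine functional of the pair of outcome distributions of a two-outcome POVM, maximized over POVMs; Lemma\thinspace\ref{lem:sufficient} says every outcome-distribution pair achievable on $\left\{ \sigma_{\theta}\right\} $ is achievable on $\left\{ \rho_{\theta}\right\} $, which immediately gives (\ref{rho-s rho}). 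The converse is also structured correctly: commutativity lets you replace $\left\{ \rho_{\pm}\right\} $ by the classical dichotomy $\left(  P^{+},P^{-}\right)  $ of eigenvalue distributions (equivalence holds because measuring in the common eigenbasis and re-preparing the corresponding eigenprojectors are both channels), then for any measurement $M$ on the $\sigma$ side the data-processing inequality gives $\sum_{k}\left\vert p_{k}^{+}-s\,p_{k}^{-}\right\vert \geq\left\Vert \sigma_{+}-s\,\sigma_{-}\right\Vert _{1}\geq\left\Vert P_{\sigma_{+}}^{M}-s\,P_{\sigma_{-}}^{M}\right\Vert _{1}$, and the classical Blackwell--Torgersen criterion for dichotomies converts this one-parameter family of inequalities into an exact Markov kernel $T$ with $TP^{\pm}=P_{\sigma_{\pm}}^{M}$; composing the eigenbasis measurement with $T$ yields the required $M^{\prime}$, and Lemma\thinspace\ref{lem:sufficient} finishes. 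Two points need tightening rather than repair: (i) if you do not want to cite the classical theorem as a black box, your hands-on reduction sketch (``same total angle'', coarse-graining to two-point targets) is too loose to count as a proof and should be replaced by a genuine argument (e.g.\ via the likelihood-ratio/standard-measure representation of a dichotomy, where the inequalities for all $s\geq0$ are exactly the ordering of the concave functions $s\mapsto\int\min\left(  \mathrm{d}P^{+},s\,\mathrm{d}P^{-}\right)  $ and the kernel is built from it); note also that the decision space $\mathcal{D}$ in the definition of $\succeq^{c}$ is an arbitrary measurable space, so you need the version of the criterion valid for dominated experiments --- which applies here since any two probability measures are dominated by their sum. (ii) Your opening mention of Lemma\thinspace\ref{lem:alberti-uhlmann} plays no role in the argument and should be dropped; it concerns pairs of pure states and is not what drives either implication.
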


\begin{lemma}
\label{lem:comm-sufficient}Suppose $\Theta=\left\{  +,-\right\}  $,
$\ \sigma_{\theta}\in\mathcal{B}\left(
\mathbb{C}
^{2}\right)  $, and $\left[  \rho_{+},\rho_{-}\right]  =0$. If \ $\left\{
\rho_{\theta}\right\}  _{\theta\in\Theta}\equiv^{c}\left\{  \sigma_{\theta
}\right\}  _{\theta\in\Theta}$, we have
\[
\left[  \sigma_{+},\sigma_{-}\right]  =0.
\]

\end{lemma}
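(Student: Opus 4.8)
The plan is to convert the two-sided comparison $\{\rho_{\theta}\}_{\theta\in\Theta}\equiv^{c}\{\sigma_{\theta}\}_{\theta\in\Theta}$ into a rigid analytic constraint on the map $s\mapsto\|\sigma_{+}-s\,\sigma_{-}\|_{1}$ and then read off $[\sigma_{+},\sigma_{-}]=0$ from it.

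First I would apply Lemma~\ref{lem:commute} in \emph{both} directions: from $\{\rho_{\theta}\}\succeq^{c}\{\sigma_{\theta}\}$ one gets $\|\rho_{+}-s\rho_{-}\|_{1}\ge\|\sigma_{+}-s\sigma_{-}\|_{1}$ for all $s\ge0$, and from $\{\sigma_{\theta}\}\succeq^{c}\{\rho_{\theta}\}$ the reverse inequality, so that
\[
\varphi(s):=\|\rho_{+}-s\,\rho_{-}\|_{1}=\|\sigma_{+}-s\,\sigma_{-}\|_{1}\qquad(\forall\,s\ge0).
\]
Next I would use the hypothesis $[\rho_{+},\rho_{-}]=0$: diagonalizing $\rho_{\pm}$ in a common eigenbasis, $\rho_{+}=\sum_{i}p_{i}\,|i\rangle\langle i|$ and $\rho_{-}=\sum_{i}q_{i}\,|i\rangle\langle i|$ with $p_{i},q_{i}\ge0$, whence $\varphi(s)=\sum_{i}|p_{i}-s\,q_{i}|$ is a continuous, convex, \emph{piecewise-linear} function of $s$ with finitely many breakpoints. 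So the whole statement reduces to the claim that, in dimension two, piecewise linearity of $s\mapsto\|\sigma_{+}-s\sigma_{-}\|_{1}$ forces $[\sigma_{+},\sigma_{-}]=0$.

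To finish I would argue as follows. We may assume $\sigma_{+}\ne\sigma_{-}$, else there is nothing to prove. Write $\sigma_{\pm}=\tfrac12(\mathbf 1+\vec r_{\pm}\!\cdot\!\vec\tau)$ with $\vec\tau=(\tau_{1},\tau_{2},\tau_{3})$ the Pauli matrices and $\vec r_{\pm}\in\mathbb R^{3}$, $\|\vec r_{\pm}\|\le1$; we may also assume $\vec r_{+}\ne0\ne\vec r_{-}$, since otherwise $\sigma_{+}$ or $\sigma_{-}$ is $\tfrac12\mathbf 1$ and commutes with everything. Since $\sigma_{+}-s\sigma_{-}=\tfrac{1-s}{2}\mathbf 1+\tfrac12(\vec r_{+}-s\vec r_{-})\!\cdot\!\vec\tau$ has eigenvalues $\tfrac12\bigl((1-s)\pm\|\vec r_{+}-s\vec r_{-}\|\bigr)$, we obtain $\varphi(s)=\max\{\,|1-s|,\ \|\vec r_{+}-s\vec r_{-}\|\,\}$ for $s\ge0$. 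At $s=1$ we have $\|\vec r_{+}-\vec r_{-}\|>0=|1-1|$, so by continuity $\varphi(s)=\|\vec r_{+}-s\vec r_{-}\|>0$ on some interval $(1-\delta,1]$; shrinking $\delta$ so that $\varphi$ is affine there (using the finitely-many-pieces structure above), write $\varphi(s)=as+b$ on $(1-\delta,1]$. Then the polynomials $\|\vec r_{-}\|^{2}s^{2}-2(\vec r_{+}\!\cdot\!\vec r_{-})\,s+\|\vec r_{+}\|^{2}=\|\vec r_{+}-s\vec r_{-}\|^{2}$ and $(as+b)^{2}$ agree on a nondegenerate interval, hence are identically equal; comparing coefficients yields $(\vec r_{+}\!\cdot\!\vec r_{-})^{2}=\|\vec r_{+}\|^{2}\,\|\vec r_{-}\|^{2}$, i.e.\ equality in Cauchy--Schwarz, so $\vec r_{+}=c\,\vec r_{-}$ for some $c\in\mathbb R$, and therefore $[\sigma_{+},\sigma_{-}]=\tfrac14[\vec r_{+}\!\cdot\!\vec\tau,\ \vec r_{-}\!\cdot\!\vec\tau]=\tfrac c4[\vec r_{-}\!\cdot\!\vec\tau,\ \vec r_{-}\!\cdot\!\vec\tau]=0$.

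The delicate point will be this last step — one must genuinely use the finite piecewise-linear structure inherited from the commuting family $\{\rho_{\theta}\}$ to pin $\varphi$ to an honest affine piece near $s=1$, and it matters that on that piece $\varphi$ coincides with the branch $\|\vec r_{+}-s\vec r_{-}\|$ rather than with $|1-s|$, which is exactly what $\sigma_{+}\ne\sigma_{-}$ guarantees. (Equivalently, if $\vec r_{+}$ and $\vec r_{-}$ were linearly independent then $s\mapsto\|\vec r_{+}-s\vec r_{-}\|$ would be strictly convex everywhere and would dominate $|1-s|$ near $s=1$, making $\varphi$ strictly convex there and contradicting piecewise linearity.) Everything else — the commutant diagonalization, the eigenvalue computation, the polynomial identity — is routine.
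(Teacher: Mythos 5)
Your proposal is correct, but it reaches the conclusion by a genuinely different route than the paper. You and the paper share the first step (the forward implication of Lemma\thinspace\ref{lem:commute}, applied in both directions of $\equiv^{c}$, gives $\left\Vert \rho_{+}-s\rho_{-}\right\Vert _{1}=\left\Vert \sigma_{+}-s\sigma_{-}\right\Vert _{1}$ for all $s\geq0$), but then you diverge. The paper stays at the level of comparison theory: it invokes the reduction of $\left\{ \rho_{\theta}\right\} \preceq^{c}\left\{ \sigma_{\theta}\right\}$ (with commuting $\rho_{\pm}$) to the existence of a single POVM $M$ with $\left\{ \rho_{\theta}\right\} \preceq^{c}\left\{ P_{\sigma_{\theta}}^{M}\right\}$, uses monotonicity of $\left\Vert \cdot\right\Vert _{1}$ to force $\left\Vert \sigma_{+}-s\sigma_{-}\right\Vert _{1}=\left\Vert P_{\sigma_{+}}^{M}-sP_{\sigma_{-}}^{M}\right\Vert _{1}$ for every $s$, and concludes that the one measurement $M$ must be the eigenprojector (Helstrom) measurement of $\sigma_{+}-s\sigma_{-}$ simultaneously for all $s$, which for a qubit forces $\left[ \sigma_{+},\sigma_{-}\right] =0$. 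You instead extract everything from the scalar function $s\mapsto\left\Vert \sigma_{+}-s\sigma_{-}\right\Vert _{1}$: commutativity of $\rho_{\pm}$ makes it piecewise linear with finitely many breakpoints, the Bloch-sphere computation identifies it with $\max\left\{ \left\vert 1-s\right\vert ,\left\Vert \vec{r}_{+}-s\vec{r}_{-}\right\Vert \right\}$, and affineness of $\left\Vert \vec{r}_{+}-s\vec{r}_{-}\right\Vert$ on a nondegenerate interval near $s=1$ yields equality in Cauchy--Schwarz, hence parallel Bloch vectors; the degenerate cases ($\sigma_{+}=\sigma_{-}$ or a maximally mixed $\sigma_{\pm}$) are handled correctly. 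What each approach buys: yours is more elementary and self-contained (it bypasses the measurement-reduction step that the paper asserts "by definition" and the uniqueness of the optimal binary measurement, and it shows that the weaker hypothesis of equality of the trace-norm functions already implies the conclusion), while the paper's argument is more structural and avoids any explicit spectral/Bloch computation, which is in the spirit of how Lemma\thinspace\ref{lem:commute} is used elsewhere in the paper. Both use the qubit assumption on $\sigma_{\theta}$ essentially, as they must.
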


\begin{proof}
By definition, $\left\{  \rho_{\theta}\right\}  _{\theta\in\Theta}\preceq
^{c}\left\{  \sigma_{\theta}\right\}  _{\theta\in\Theta}$ only if there is a
measurement $M$ with
\[
\left\{  \rho_{\theta}\right\}  _{\theta\in\Theta}\preceq^{c}\left\{
P_{\sigma_{\theta}}^{M}\right\}  _{\theta\in\Theta}.
\]
By Lemma\thinspace\ref{lem:commute}, this is equivalent to
\[
\left\Vert \rho_{+}-s\,\rho_{-}\right\Vert _{1}\leq\left\Vert P_{\sigma_{+}%
}^{M}-s\,P_{\sigma_{-}}^{M}\right\Vert _{1},\,\forall s\geq0.
\]
Also, by Lemma \thinspace\ref{lem:commute}, $\left\{  \rho_{\theta}\right\}
_{\theta\in\Theta}\equiv^{c}\left\{  \sigma_{\theta}\right\}  _{\theta
\in\Theta}$ only if
\[
\left\Vert \rho_{+}-s\,\rho_{-}\right\Vert _{1}=\left\Vert \sigma
_{+}-s\,\sigma_{-}\right\Vert _{1},\,\forall s\geq0.
\]
Therefore, we have
\[
\left\Vert \sigma_{+}-s\,\sigma_{-}\right\Vert _{1}\leq\left\Vert
P_{\sigma_{+}}^{M}-s\,P_{\sigma_{-}}^{M}\right\Vert _{1},\,\forall s\geq0.
\]
Therefore, \ by the monotonicity of $\left\Vert \cdot\right\Vert _{1}$, there
is a measurement $M$ such that
\[
\left\Vert \sigma_{+}-s\,\sigma_{-}\right\Vert _{1}=\left\Vert P_{\sigma_{+}%
}^{M}-s\,P_{\sigma_{-}}^{M}\right\Vert _{1},\,\forall s\geq0.
\]
Observe the above identity holds if and only if $M=\left\{  M_{+}%
,M_{-}\right\}  $, where $M_{+}$ and $M_{-}$ are the projector onto the
positive and the negative eigenvector of $\sigma_{+}-s\,\sigma_{-}$,
respectively. Since $M$ does not depends on $s$, combined with the fact that
$\sigma_{\theta}$ is a qubit state, \ we have
\[
\left[  \sigma_{+}-s\,\sigma_{-},\sigma_{+}-s^{\prime}\,\sigma_{-}\right]
=0,
\]
or equivalently, $\left[  \sigma_{+},\sigma_{-}\right]  =0$.
\end{proof}

\subsection{Comparison of input states}

Consider a family $\left\{  \Lambda_{\theta}\right\}  _{\theta\in\Theta}$ of
channels $\Lambda_{\theta}:\mathcal{B}\left(  \mathcal{H}_{in}\right)
\rightarrow\mathcal{B}\left(  \mathcal{H}_{out}\right)  $. We say the input
state $\rho$ is \textit{universally better} than $\rho^{\prime}$ and write
$\rho\succeq^{c}\rho^{\prime}$ if and only if $\rho$ is better than
$\rho^{\prime}$ for any statistical decision problem on $\left\{
\Lambda_{\theta}\right\}  _{\theta\in\Theta}$. More formally, $\rho\succeq
^{c}\rho^{\prime}$ if and only if
\[
\left\{  \left(  \Lambda_{\theta}\otimes\mathbf{I}\right)  \left(
\rho\right)  \right\}  _{\theta\in\Theta}\succeq_{c}\left\{  \left(
\Lambda_{\theta}\otimes\mathbf{I}\right)  \left(  \rho^{\prime}\right)
\right\}  _{\theta\in\Theta}.
\]
If $\rho\succeq^{c}\rho^{\prime}$ and $\rho^{\prime}\not \succeq ^{c}\rho$
holds, we say $\rho$ is strictly universally better than $\rho^{\prime}$, and
write $\rho\succ^{c}\rho^{\prime}$. If $\rho\succeq^{c}\rho^{\prime}$ and
$\rho^{\prime}\succeq^{c}\rho$ holds, we write $\rho\equiv^{c}\rho^{\prime}$
and say that $\rho$ and $\rho^{\prime}$ are universally equivalent.
Obviously,$\ $
\[
\rho\equiv^{c}\Lambda_{\mathbf{1}\otimes U}\left(  \rho\right)
\]
for any $U\in\mathrm{SU}\left(  \mathcal{H}_{R}\right)  $.

Denote
\[
R\left(  l,M,\pi,\rho\right)  :=\int_{\Theta\times\mathcal{D}}l_{\theta
}\left(  t\right)  \,\mathrm{d}P_{\Lambda_{\theta}\otimes\mathbf{I}\left(
\rho\right)  }^{M}\left(  t\right)  \mathrm{d}\pi\left(  \mathrm{\,}%
\theta\right)  .
\]
An input state $\rho\in\mathcal{B}\left(  \mathcal{H}_{in}\otimes
\mathcal{H}_{R}\right)  $ is said to be \textit{admissible} if and only if,
for a decision space $\mathcal{D}$ equipped with a $\sigma$-field
$\mathfrak{A}$, a $\sigma$-field $\mathfrak{B}$ over $\Theta$, a loss function
$l:\Theta\times\mathcal{D}\rightarrow%
\mathbb{R}
_{+}$ which is jointly measurable, a probability measure $\pi$ over $\left(
\Theta,\mathfrak{B}\right)  $,
\begin{equation}
\inf_{M}R\left(  l,M,\pi,\rho\right)  \leq\inf_{M}R\left(  l,M,\pi
,\rho^{\prime}\right)  ,\forall\rho.\, \label{R<R}%
\end{equation}
When the inequality in (\ref{R<R}) is strict inequality "$<$", $\rho$ is said
to be \textit{strictly admissible}.

\section{A pair of unitary operations}

\label{sec:2-unitary}

Let \ $\Lambda_{\theta}=\Upsilon_{U_{\theta}}$, $\Theta=\left\{  +,-\right\}
$ and $U_{+}$, $U_{-}\in\mathrm{SU}\left(  d\right)  $. \cite{ZimanSedlak} had
discussed discrimination $U_{+}$, $U_{-}$ and computed Bayesian error
probability and error probability of unambiguous discrimination. After
performing optimization for each case, they found that optimal input states
are minimizers of the functional
\begin{equation}
\left\vert \psi\right\rangle \rightarrow\left\vert \left\langle \psi
\right\vert U_{+}^{\dagger}U_{-}\otimes\mathbf{1}\left\vert \psi\right\rangle
\right\vert . \label{inner-prod}%
\end{equation}

Indeed, generalizing their result, we can conclude that minimizers of
(\ref{inner-prod}) are universally optimal, or optimal for any statistical
inference made upon $\left\{  \Lambda_{\theta}\right\}  _{\theta\in\Theta}$ ,
e.g., statistical test of Neyman-Pearson test, or minimax error probability.
This is an immediate consequence of Proposition\thinspace\ref{prop:randomize}
and Lemma\thinspace\ref{lem:alberti-uhlmann}.

\section{Covariant and contravariant channels}

\label{sec:covariant}

\subsection{Universally optimal input states}

Let $g\in\mathcal{G}$, where $\mathcal{G}$ is an element of compact Lie group
or its discrete subgroup. Covariant and contravariant channels are those satisfying%

\[
\Lambda_{\theta}\circ \Upsilon_{U_{g}}=\Upsilon_{V_{g}}\circ\Lambda_{\theta},
\]
and%
\[
\Lambda_{\theta}\circ \Upsilon_{U_{g}}=\Upsilon_{\overline{V_{g}}}\circ
\Lambda_{\theta},
\]
respectively. Here $g\rightarrow U_{g}$, $g\rightarrow V_{g}$ are
representations of $\mathcal{G}$.

\begin{example}
Let
\[
\Lambda_{\theta}^{\mathrm{cdep}}:=\theta T\mathbf{+}\left(  1-\theta\right)
\Upsilon_{m},
\]
where $T\left(  \rho\right)  =\rho^{T}$, $\Upsilon_{m}$ is the channel which
sends any input to the totally mixed state $\mathbf{1}/d$, and $\Theta
:=\left[  0,1/\left(  d+1\right)  \right]  \subset%
\mathbb{R}
$. Then $\Lambda_{\theta}^{\mathrm{cdep}}$ is completely positive, trace
preserving, and contravariant. \ 
\end{example}

\begin{example}
Let $\Upsilon_{c}$ be the $m$ to $n$ optimal pure state cloner\thinspace
\thinspace\cite{KeylWerner}, which is covariant with $U_{g}:=g^{\otimes m}$,
$V_{g}:=g^{\otimes n}$, $\mathcal{H}_{in}:=\left(
\mathbb{C}
^{d}\right)  ^{\otimes_{s}m}$, and $\mathcal{H}_{out}:=\left(
\mathbb{C}
^{d}\right)  ^{\otimes_{s}n}$. (Here, $\otimes_{s}$ denotes symmetric tensor
product. )Then, the channels
\[
\Lambda_{\theta}^{\mathrm{cl}}:=\theta \Upsilon_{c}+\left(  1-\theta\right)
\Upsilon_{m},\,\,\theta\in\Theta:=\left[  0,1\right]  ,
\]
are covariant.
\end{example}

\begin{example}
\label{ex:gdep}Another example is $\Lambda_{d,\theta}^{\mathrm{gp}}$ with
$\mathcal{H}_{in}:=%
\mathbb{C}
^{d}$, and $\mathcal{H}_{out}:=%
\mathbb{C}
^{d}$,
\[
\Lambda_{d,\theta}^{\mathrm{gp}}:=\sum_{j,k=0}^{d-1}\theta^{\left(
j,k\right)  }\Upsilon_{X_{d}^{j}Z_{d}^{k}},
\]
where
\[
\Theta:=\left\{  \theta\,;\,\theta^{\left(  j,k\right)  }\geq0,\,\sum
_{j,k=0}^{d-1}\theta^{\left(  j,k\right)  }=1\right\}  ,
\]
and $X_{d}$, $\,Z_{d}$ are generalized Pauli matrices defined by
\begin{equation}
X_{d}:=\sum_{i=1}^{d-1}\left\vert i\right\rangle \left\langle i+1\right\vert
+\left\vert d\right\rangle \left\langle 1\right\vert ,\,Z_{d}:=\sum_{i=1}%
^{d}e^{\frac{\sqrt{-1}2\pi i}{d}}\left\vert i\right\rangle \left\langle
i\right\vert . \label{g-pauli}%
\end{equation}
$X_{d}$ and $Z_{d}$ satisfy
\begin{equation}
\left(  X_{d}\right)  ^{d}=\left(  Z_{d}\right)  ^{d}=\mathbf{1,\,\,}%
e^{\frac{\sqrt{-1}2\pi i}{d}}Z_{d}X_{d}=X_{d}Z_{d}. \label{g-pauli-property}%
\end{equation}

\end{example}

$\Lambda_{d,\theta}^{\mathrm{gp}}$ is covariant with respect to
\[
\mathcal{G}=\mathcal{G}_{d}:=\left\{  e^{\frac{\sqrt{-1}\,2\pi\,i}{d}%
}\,\left(  X_{d}\right)  ^{j}\,\left(  Z_{d}\right)  ^{k};i,j,k=0,1,\cdots
,d-1\right\}
\]
\ and $U_{g}=V_{g}=g$. Indeed, if $\mathcal{H}_{in}=\mathcal{H}_{out}=%
\mathbb{C}
^{d}$ and $U_{g}=V_{g}=g\in\mathcal{G}_{d}$, being covariant is equivalent to
be a member of $\left\{  \Lambda_{d,\theta}^{\mathrm{gp}}\right\}  $%
\thinspace\cite{Matsumoto}.

\begin{example}
An alternative parameterization of $\Lambda_{2,\theta}^{\mathrm{gp}}$ is given
by
\[
\Lambda_{2,\eta}^{\mathrm{gp}}:=\sum_{i=1}^{4}\Upsilon_{E_{i}},
\]
where
\begin{align*}
E_{1}  &  :=\left[
\begin{array}
[c]{cc}%
\eta^{1} & 0\\
0 & \eta^{2}%
\end{array}
\right]  ,\,E_{2}:=\left[
\begin{array}
[c]{cc}%
\eta^{2} & 0\\
0 & \eta^{1}%
\end{array}
\right]  ,\\
E_{3}  &  :=\left[
\begin{array}
[c]{cc}%
0 & \eta^{3}\\
\sqrt{1-\sum_{i=1}^{3}\left(  \eta^{i}\right)  ^{2}} & 0
\end{array}
\right]  ,\\
E_{4}  &  :=E_{3}^{\dagger}.
\end{align*}

\end{example}

\begin{example}
\label{ex:g-damp}With $\mathcal{H}_{in}=\mathcal{H}_{out}=%
\mathbb{C}
^{2}$, \cite{Fujiwara:2004} had defined generalized damping channels :%
\begin{align*}
\Lambda_{p,\xi}^{\mathrm{damp}}  &  =\sum_{i=1}^{4}\Upsilon_{F_{i}},\\
F_{1}  &  :=\sqrt{p}\left[
\begin{array}
[c]{cc}%
1 & 0\\
0 & \sqrt{\xi}%
\end{array}
\right]  ,\,F_{2}:=\sqrt{1-p}\left[
\begin{array}
[c]{cc}%
\sqrt{\xi} & 0\\
0 & 1
\end{array}
\right]  ,\\
F_{3}  &  :=\sqrt{p}\left[
\begin{array}
[c]{cc}%
0 & \sqrt{1-\xi}\\
0 & 0
\end{array}
\right]  ,\,F_{4}:=\sqrt{1-p}\left[
\begin{array}
[c]{cc}%
0 & 0\\
\sqrt{1-\xi} & 0
\end{array}
\right]  .
\end{align*}
Then, $\left\{  \Lambda_{1/2,\xi}^{\mathrm{damp}}\right\}  $ is covariant with
respect to $U_{g}=V_{g}=g\in\mathcal{G}_{2}$. Indeed, $\left\{  \Lambda
_{1/2,\xi}^{\mathrm{damp}}\right\}  $ is a subset of $\left\{  \Lambda
_{2,\theta}^{\mathrm{gp}}\right\}  $.
\end{example}

\begin{example}
\label{ex:diag}Let
\[
\Lambda_{d,\theta}^{\mathrm{diag}}:=\sum_{i=1}^{d}\Upsilon_{E_{i}},
\]
where%
\begin{align*}
E_{1}  &  :=\mathrm{diag}\left(  \theta^{1},\theta^{2},\cdots,\theta
^{\left\lceil \left(  d-1\right)  /2\right\rceil },\sqrt{\sum_{i=1}%
^{\left\lceil \left(  d-1\right)  /2\right\rceil }\left(  \theta^{i}\right)
^{2}},0,\cdots,0\right)  ,\\
E_{i}  &  :=X_{d}^{i-1}E_{1}X_{d}^{i-1}\,\,(2\leq i\leq d).
\end{align*}
Then $\Lambda_{d,\xi}^{\mathrm{diag}}$ is covariant with respect to
$U_{g}=V_{g}=g\in\mathcal{G}_{d}$: in fact, it turns out the family $\left\{
\Lambda_{d,\xi}^{\mathrm{diag}}\right\}  $ is a subset of the family $\left\{
\Lambda_{\theta}^{\mathrm{gp}}\right\}  $.
\end{example}

\bigskip

\cite{FujiwaraImai:2003} and \cite{Fujiwara:2004} had shown that a maximal
entangles state $\left\vert \Phi_{d}\right\rangle $ is optimal for the family
$\left\{  \Lambda_{2,\theta}^{\mathrm{gp}}\right\}  $ \ and also for the
family $\left\{  \Lambda_{1/2,\xi}^{\mathrm{damp}}\right\}  $ in the following
sense. For any input state $\rho_{in}\in\mathcal{B}\left(  \mathcal{H}%
_{in}\otimes\mathcal{H}_{R}\right)  $ and a measurement $M$ over
$\mathcal{B}\left(  \mathcal{H}_{out}\otimes\mathcal{H}_{R}\right)  $ with
\begin{equation}
\mathrm{E}\left[  M,\Lambda_{\theta}\otimes\mathbf{I}\left(  \rho_{in}\right)
\right]  =\theta, \label{EML=0}%
\end{equation}
there is a measurement $M^{\prime}$ such that%
\[
\mathrm{E}\left[  M^{\prime},\Lambda_{\theta}\otimes\mathbf{I}\left(
\left\vert \Phi_{d}\right\rangle \left\langle \Phi_{d}\right\vert \right)
\right]  =\theta
\]
and
\[
\mathrm{V}\left[  M,\Lambda_{\theta}\otimes\mathbf{I}\left(  \rho_{in}\right)
\right]  =\mathrm{V}\left[  M^{\prime},\Lambda_{\theta}\otimes\mathbf{I}%
\left(  \left\vert \Phi_{d}\right\rangle \left\langle \Phi_{d}\right\vert
\right)  \right]  .
\]

Also, \cite{Sacchi:2005} studies discrimination of a pair of channels in
$\left\{  \Lambda_{d,\theta}^{\mathrm{gp}}\right\}  $, and shows that
$\left\vert \Phi_{d}\right\rangle $ minimizes Bayesian error probability for
any prior distributions. In case of qubits, they extended their result to
minimax error probability\thinspace\cite{Sacchi:2005:3}.

The following theorem is a generalization of these results. Below, we suppose
the representation $g\rightarrow U_{g}$ occurs the decomposition
\[
\mathcal{H}_{in}=\bigoplus_{\mu}\mathcal{H}_{in}^{\left(  \mu\right)
},\,\,U_{g}=\bigoplus_{\mu}U_{g}^{\left(  \mu\right)  },
\]
where $U_{g}^{\left(  \mu\right)  }$ acts on $\mathcal{H}_{in}^{\left(
\mu\right)  }$ and $g\rightarrow U_{g}^{\left(  \mu\right)  }$ is irreducible.
Also, define $d_{\mu}:=\dim\mathcal{H}_{in}^{\left(  \mu\right)  }$.

\begin{theorem}
\label{th:covariant-opt-input}Consider the covariant or contravariant channel
family $\left\{  \Lambda_{\theta}\right\}  _{\theta\in\Theta}$. Then, with
$\mathcal{H}_{R}^{\left(  \mu\right)  }\simeq\mathcal{H}_{in}^{\left(
\mu\right)  }$ and\
\[
\mathcal{H}_{R}=\bigoplus_{\mu}\mathcal{H}_{R}^{\left(  \mu\right)  },
\]
the followidng input state is universally optimal:
\begin{equation}
\left\vert \psi_{opt}\right\rangle :=c\bigoplus_{\mu}\left\vert \Phi_{d_{\mu}%
}\right\rangle , \label{covariant-opt}%
\end{equation}
where $\left\vert \Phi_{d_{\mu}}\right\rangle \in\mathcal{H}_{in}^{\left(
\mu\right)  }\otimes\mathcal{H}_{R}^{\left(  \mu\right)  }$, and $c$ is the
normalizing constant.
\end{theorem}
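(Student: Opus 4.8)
The plan is to establish, for an arbitrary input state $\rho'$ (which we may take pure, say $\rho'=\rho_{\psi'}$ for $\left\vert \psi'\right\rangle \in \mathcal{H}_{in}\otimes\mathcal{H}_R'$ with $\dim\mathcal{H}_R'=d$), the relation
\[
\left\{ \left(\Lambda_{\theta}\otimes\mathbf{I}\right)\left(\left\vert\psi_{opt}\right\rangle\left\langle\psi_{opt}\right\vert\right)\right\}_{\theta\in\Theta} \succeq^{c} \left\{ \left(\Lambda_{\theta}\otimes\mathbf{I}\right)\left(\left\vert\psi'\right\rangle\left\langle\psi'\right\vert\right)\right\}_{\theta\in\Theta},
\]
and for this I would invoke Proposition~\ref{prop:randomize}: it suffices to exhibit a single trace-preserving positive map $\Gamma$ (acting on $\mathcal{B}(\mathcal{H}_{out}\otimes\mathcal{H}_R)$, independent of $\theta$) that sends the output of $\left\vert\psi_{opt}\right\rangle$ to the output of $\left\vert\psi'\right\rangle$ for every $\theta$. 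The key structural fact to exploit is that the output states are determined by the Choi operator: $\left(\Lambda_{\theta}\otimes\mathbf{I}\right)(\left\vert\psi\right\rangle\left\langle\psi\right\vert)$ can be written, up to a local isometry on the reference system depending only on $\left\vert\psi\right\rangle$, in terms of $Ch(\Lambda_{\theta})$ conjugated by an operator built from $\rho_\psi$. Concretely, writing $\left\vert\psi\right\rangle = (\sqrt{\rho_\psi}\otimes\mathbf{1})\cdot\sqrt{d}\left\vert\Phi_d\right\rangle$ after an appropriate choice of basis on $\mathcal{H}_R$, one gets $(\Lambda_\theta\otimes\mathbf{I})(\left\vert\psi\right\rangle\left\langle\psi\right\vert) = (\mathbf{1}_{out}\otimes A_\psi)\, Ch(\Lambda_\theta)\, (\mathbf{1}_{out}\otimes A_\psi^\dagger)$ for a fixed operator $A_\psi$ (essentially $\sqrt{\rho_\psi}$ transported to $\mathcal{H}_R$). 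So the output for $\left\vert\psi'\right\rangle$ is obtained from the output for $\left\vert\psi_{opt}\right\rangle$ by conjugating with $A_{\psi'}A_{\psi_{opt}}^{-1}$ on the reference, provided $\rho_{\psi_{opt}}$ has full support on the relevant subspace.

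The next step is to pin down $\rho_{\psi_{opt}}$: from \eqref{covariant-opt}, $\rho_{\psi_{opt}} = \mathrm{tr}_{\mathcal{H}_R}\left\vert\psi_{opt}\right\rangle\left\langle\psi_{opt}\right\vert = \left\vert c\right\vert^2 \bigoplus_\mu \frac{1}{d_\mu}\mathbf{1}_{\mathcal{H}_{in}^{(\mu)}}$, i.e.\ it is a positive multiple of the identity on each irreducible block — in particular it is invertible on all of $\mathcal{H}_{in}$ (this is exactly why the block-diagonal maximally entangled state, rather than a single $\left\vert\Phi_d\right\rangle$, is the right candidate when $U_g$ is reducible). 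Hence $A_{\psi_{opt}}$ is invertible, and the conjugation map $X\mapsto (\mathbf{1}\otimes A_{\psi'}A_{\psi_{opt}}^{-1})\,X\,(\mathbf{1}\otimes A_{\psi'}A_{\psi_{opt}}^{-1})^\dagger$ is a well-defined positive map carrying the $\psi_{opt}$-outputs to the $\psi'$-outputs for all $\theta$ simultaneously. One then checks it is trace-preserving on the relevant family of operators — this holds because both sides are normalized states of trace one for every $\theta$, so the positive map preserves trace on the (finite-dimensional) span of the output operators, and can be extended to a genuinely trace-preserving positive map on all of $\mathcal{B}(\mathcal{H}_{out}\otimes\mathcal{H}_R)$ without disturbing its action on that span. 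Here is where covariance/contravariance must actually enter: I do not expect the bare argument above to work for a totally arbitrary channel (indeed the theorem is false in general), so I anticipate that the correct statement of the Choi-representation identity, or the verification that the conjugating operator genuinely maps into the output space, requires the commutation relation $\Lambda_\theta\circ\Upsilon_{U_g} = \Upsilon_{V_g}\circ\Lambda_\theta$ (resp.\ with $\overline{V_g}$) to force $Ch(\Lambda_\theta)$ to live in the commutant of $U_g^{\otimes}$, so that its "reference-side" structure is controlled block-by-block and is compatible with $\rho_{\psi_{opt}}$ being scalar on each block.

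The main obstacle, then, is precisely this covariance input: I expect to need a lemma saying that for covariant (contravariant) $\Lambda_\theta$, the reduced output state on $\mathcal{H}_R$ — or more precisely the support structure of $Ch(\Lambda_\theta)$ with respect to the reference tensor factor — is block-diagonal along the isotypic decomposition $\bigoplus_\mu$, with the restriction to block $\mu$ proportional to something commuting with $\rho_{\psi_{opt}}^{(\mu)}\propto\mathbf{1}$. Granting that, the conjugation by $A_{\psi'}A_{\psi_{opt}}^{-1}$ is legitimate and $\theta$-independent, Proposition~\ref{prop:randomize} applies, and universal optimality follows. A secondary technical point is handling the case where some $\rho_{\psi'}$ is singular or supported in only one block, but this is benign since we only need $\rho_{\psi_{opt}}$ (not $\rho_{\psi'}$) to be invertible; if $\rho_{\psi'}$ is singular one simply restricts attention to its support, and the conjugating operator is still well-defined. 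I would also remark that for the contravariant case the same argument runs with $V_g$ replaced by $\overline{V_g}$, changing the identity $Ch(\Lambda_\theta)$ satisfies but not the overall scheme, so the two cases can be treated in parallel with a single notational convention.
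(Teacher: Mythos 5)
Your Choi-conjugation identity is correct, and your observation that $\rho_{\psi_{opt}}$ is invertible (a multiple of the identity on each isotypic block) is also correct; the genuine gap is the step where you pass from the completely positive map $X\mapsto(\mathbf{1}\otimes B)X(\mathbf{1}\otimes B^{\dagger})$, with $B=A_{\psi'}A_{\psi_{opt}}^{-1}$, to a \emph{trace-preserving} positive map as required by Proposition~\ref{prop:randomize}. The conjugation preserves trace only on the particular output operators (because both families consist of normalized states), and your claim that it "can be extended to a genuinely trace-preserving positive map on all of $\mathcal{B}(\mathcal{H}_{out}\otimes\mathcal{H}_{R})$ without disturbing its action on that span" is precisely what needs proof. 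The standard completion $X\mapsto(\mathbf{1}\otimes B)X(\mathbf{1}\otimes B^{\dagger})+\mathrm{tr}\left[\left(\mathbf{1}-\mathbf{1}\otimes B^{\dagger}B\right)X\right]\sigma_{0}$ requires $B^{\dagger}B\leq\mathbf{1}$, which fails as soon as $\psi'$ is not itself maximally spread: already for a single irreducible block one has $B=\sqrt{d}\,\Psi'^{T}$ with $\Vert B\Vert=\sqrt{d}\,\Vert\Psi'\Vert>1$ unless $\psi'$ is maximally entangled. Worse, no generic extension argument can exist, because your conjugation identity uses no property of the family whatsoever; if the extension were always available, every input with invertible $\rho_{\psi}$ would be universally better than every other input for \emph{every} channel family, contradicting the paper's own results for $\left\{\Upsilon_{U}\,;\,U\in\mathrm{SU}(2)\right\}$ (where $\left\vert\Phi_{2}\right\rangle$ is strictly universally optimal, so a full-Schmidt-rank but non-maximally-entangled input cannot dominate it) and Theorem~\ref{th:su(d)} for $d\geq3$. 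So covariance cannot enter merely through a remark that $Ch(\Lambda_{\theta})$ lies in a commutant; it has to be used in the very construction of the simulating map, and your proposal never supplies that construction.

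For contrast, the paper's proof builds the trace-preserving map as an explicit physical protocol rather than an abstract extension: adjoin a locally prepared copy of $\left\vert\psi\right\rangle$, perform on the $\mathcal{H}_{R}$-part of $\Lambda_{\theta}\otimes\mathbf{I}\left(\left\vert\psi_{opt}\right\rangle\left\langle\psi_{opt}\right\vert\right)$ together with the $\mathcal{H}_{in}'$-part of $\left\vert\psi\right\rangle$ the covariant (generalized Bell) measurement whose effects are built from $\left\vert\varphi_{g}\right\rangle=c'\bigoplus_{\mu}d_{\mu}\overline{U_{g}^{(\mu)}}\otimes\mathbf{1}\left\vert\Phi_{d_{\mu}}\right\rangle$, and, conditionally on the outcome $g$, apply $V_{g}^{\dagger}$ to $\mathcal{H}_{out}$. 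The measurement teleports $U_{g}\left\vert\psi\right\rangle$ into the channel input, and covariance $\Lambda_{\theta}\circ\Upsilon_{U_{g}}=\Upsilon_{V_{g}}\circ\Lambda_{\theta}$ converts the unknown $U_{g}$ into a known, $\theta$-independent correction $V_{g}$ on the output; the whole procedure is manifestly CPTP, so Proposition~\ref{prop:randomize} applies (the block structure of $\left\vert\psi_{opt}\right\rangle$ is exactly what makes this covariant measurement work when $U_{g}$ is reducible). This conditional-correction mechanism is the missing ingredient in your argument; without it, or some equally concrete use of covariance, the proof does not go through.
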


\begin{proof}
We state the proof only for covariant case, since the argument is almost
parallel for contravariant case. Below, we compose a completely positive trace
preserving map $\Gamma_{\psi}$ with
\[
\Gamma_{\psi}\left(  \Lambda_{\theta}\otimes\mathbf{I}\left(  \left\vert
\psi_{opt}\right\rangle \left\langle \psi_{opt}\right\vert \right)  \right)
=\Lambda_{\theta}\otimes\mathbf{I}\left(  \left\vert \psi\right\rangle
\left\langle \psi\right\vert \right)  \text{,}%
\]
for an arbitrary $\left\vert \psi\right\rangle \in\mathcal{H}_{in}^{\prime
}\otimes\mathcal{H}_{R}^{\prime}$, and use Proposition\thinspace
\ref{prop:randomize}. Here,
\begin{align*}
\Lambda_{\theta}\otimes\mathbf{I}\left(  \left\vert \psi_{opt}\right\rangle
\left\langle \psi_{opt}\right\vert \right)   &  \in\mathcal{H}_{out}%
\otimes\mathcal{H}_{R},\\
\Lambda_{\theta}\otimes\mathbf{I}\left(  \left\vert \psi\right\rangle
\left\langle \psi\right\vert \right)   &  \in\mathcal{H}_{out}\otimes
\mathcal{H}_{R}^{\prime},
\end{align*}
where $\mathcal{H}_{R}^{\prime}\simeq\mathcal{H}_{R}$.

$\Gamma_{\psi}$ is composed as follows; Prepare $\left\vert \psi\right\rangle
$ in $\mathcal{H}_{in}^{\prime}\otimes\mathcal{H}_{R}^{\prime}$, where
$\mathcal{H}_{in}^{\prime}\simeq\mathcal{H}_{in}$. Apply the measurement $M$
(defined later) jointly to $\mathcal{H}_{R}$-part of $\Lambda_{\theta}%
\otimes\mathbf{I}\left(  \left\vert \psi_{opt}\right\rangle \left\langle
\psi_{opt}\right\vert \right)  $ and $\mathcal{H}_{in}^{\prime}$-part of
$\left\vert \psi\right\rangle $. Depending on the outcome $g\in\mathcal{G}$ of
$M$, apply $V_{g}^{\dagger}$ to $\mathcal{H}_{out}$.

To define the measurement $M$, we first define the the state vector in
$\mathcal{H}_{R}\otimes\mathcal{H}_{in}^{\prime}$,
\[
\left\vert \varphi_{g}\right\rangle :=c^{\prime}\bigoplus_{\mu}d_{\mu
}\overline{U_{g}^{\left(  \mu\right)  }}\otimes\mathbf{1}_{\mathcal{H}%
_{in}^{\left(  \mu\right)  \prime}}\left\vert \Phi_{d_{\mu}}\right\rangle ,
\]
with the normalizing constant $c^{\prime}$, $U_{g}^{\left(  \mu\right)  }$
being in $\mathcal{H}_{R}^{\left(  \mu\right)  }$\thinspace and $\mathcal{H}%
_{in}^{\left(  \mu\right)  \prime}\backsimeq\mathcal{H}_{in}^{\left(
\mu\right)  }$. Then, the measurement $M$ is the one which occurs state
change
\[
\rho\rightarrow c^{\prime\prime}\mathbf{I}_{\mathcal{H}_{out}\otimes
\mathcal{H}_{R}^{\prime}}\otimes \Upsilon_{\left\langle \varphi_{g}\right\vert
}\left(  \rho\right)  ,\,
\]
with the probability density $\mathrm{tr}\,\mathbf{I}_{\mathcal{H}%
_{out}\otimes\mathcal{H}_{R}^{\prime}}\otimes \Upsilon_{\left\langle
\varphi_{g}\right\vert }\left(  \rho\right)  $. Here $c^{\prime\prime}$ is the
normalizing constant, and the density is considered with respect to the Haar
measure $\mathrm{d}g$ such that $\int_{\mathcal{G}}\mathrm{d}g=1$.

In the end, we confirm that $\Gamma_{\psi}$ meets the requirement. By
composition, $\Gamma_{\psi}$ is completely positive and trace preserving. Let
$\left\{  A_{\kappa}\right\}  $ be the Kraus operators of $\Lambda_{\theta}$.
Also, let $\mathcal{H}_{R}^{\prime}:=\bigoplus_{\mu}\mathcal{H}_{R}^{\left(
\mu\right)  \prime}$ and $\mathcal{H}_{R}^{\left(  \mu\right)  \prime
}\backsimeq\mathcal{H}_{R}^{\left(  \mu\right)  }$. Then, after the
application of $M$ and obtaining measurement result $g\in\mathcal{G}$, the
state will be the mixture of the pure state in $\mathcal{H}_{out}%
\otimes\mathcal{H}_{R}^{\prime}$, such that
\begin{align*}
&  \sqrt{c^{\prime\prime}}\left(  \mathbf{1}_{\mathcal{H}_{out}}%
\otimes\left\langle \varphi_{g}\right\vert \otimes\mathbf{1}_{\mathcal{H}%
_{R}^{\prime}}\right)  \left(  A_{\kappa}\otimes\mathbf{1}_{\mathcal{H}_{R}%
}\otimes\mathbf{1}_{\mathcal{H}_{in}^{\prime}}\otimes\mathbf{1}_{\mathcal{H}%
_{R}^{\prime}}\left\vert \psi_{opt}\right\rangle \left\vert \psi\right\rangle
\right) \\
&  =\sqrt{c^{\prime\prime}}\left(  A_{\kappa}\otimes\mathbf{1}_{\mathcal{H}%
_{R}^{\prime}}\right)  \left(  \mathbf{1}_{\mathcal{H}_{in}}\otimes
\left\langle \varphi_{g}\right\vert \otimes\mathbf{1}_{\mathcal{H}_{R}%
^{\prime}}\right)  \left\vert \psi_{opt}\right\rangle \left\vert
\psi\right\rangle \\
&  =cc^{\prime}\sqrt{c^{\prime\prime}}\left(  A_{\kappa}\otimes\mathbf{1}%
_{\mathcal{H}_{R}^{\prime}}\right)  \bigoplus_{\mu}d_{\mu}\left(
\mathbf{1}_{\mathcal{H}_{in}}\otimes\left(  \left\langle \Phi_{d_{\mu}%
}\right\vert U_{g}^{\left(  \mu\right)  T}\otimes\mathbf{1}_{\mathcal{H}%
_{in}^{\prime}}\right)  \otimes\mathbf{1}_{\mathcal{H}_{R}^{\left(
\mu\right)  \prime}}\right)  \left\vert \Phi_{d_{\mu}}\right\rangle \left\vert
\psi\right\rangle \\
&  =cc^{\prime}\sqrt{c^{\prime\prime}}\left(  A_{\kappa}\otimes\mathbf{1}%
_{\mathcal{H}_{R}^{\prime}}\right)  \bigoplus_{\mu}d_{\mu}\left(
\mathbf{1}_{\mathcal{H}_{in}}\otimes\left(  \left\langle \Phi_{d_{\mu}%
}\right\vert \mathbf{1}_{\mathcal{H}_{R}^{\left(  \mu\right)  }}\otimes
U_{g}^{\left(  \mu\right)  }\right)  \otimes\mathbf{1}_{\mathcal{H}%
_{R}^{\left(  \mu\right)  \prime}}\right)  \left\vert \Phi_{d_{\mu}%
}\right\rangle \left\vert \psi\right\rangle \\
&  =cc^{\prime}\sqrt{c^{\prime\prime}}A_{\kappa}\bigoplus_{\mu}\sum
_{i=1}^{d_{\mu}}\left\vert i\right\rangle _{\mathcal{H}_{in}^{\left(
\mu\right)  }}\otimes\left(  \left(  _{\mathcal{H}_{in}^{\left(  \mu\right)
\prime}}\,\left\langle i\right\vert U_{g}^{\left(  \mu\right)  }\right)
\otimes\mathbf{1}_{\mathcal{H}_{R}^{\left(  \mu\right)  \prime}}\left\vert
\psi\right\rangle \right) \\
&  =cc^{\prime}\sqrt{c^{\prime\prime}}\left(  A_{\kappa}U_{g}\right)
\otimes\mathbf{1}_{\mathcal{H}_{R}^{\prime}}\left\vert \psi\right\rangle .
\end{align*}
This mixture equals
\begin{align*}
&  \left(  cc^{\prime}\right)  ^{2}c^{\prime\prime}\left(  \Lambda_{\theta
}\circ \Upsilon_{U_{g}}\right)  \otimes\mathbf{I}_{\mathcal{H}_{R}^{\prime}%
}\left(  \left\vert \psi\right\rangle \left\langle \psi\right\vert \right) \\
&  =\left(  cc^{\prime}\right)  ^{2}c^{\prime\prime}\left(  \Upsilon_{V_{g}%
}\circ\Lambda_{\theta}\right)  \otimes\mathbf{I}_{\mathcal{H}_{R}^{\prime}%
}\left(  \left\vert \psi\right\rangle \left\langle \psi\right\vert \right)  .
\end{align*}
Therefore, applying $V_{g}^{\dagger}$ $\ $to $\mathcal{H}_{out}$, we have
$\Lambda_{\theta}\otimes\mathbf{I}_{\mathcal{H}_{R}^{\prime}}\left(
\left\vert \psi\right\rangle \left\langle \psi\right\vert \right)  $, as desired.
\end{proof}

\subsection{On $\Lambda_{1,\xi}^{\mathrm{damp}}$}

\cite{Fujiwara:2004} had shown that for $\Lambda_{1,\xi}^{\mathrm{damp}}$,
$\rho=\left\vert 2\right\rangle \left\langle 2\right\vert \in\mathcal{B}%
\left(  \mathcal{H}_{in}\right)  $ is optimal for mean square error under the
constraint (\ref{EML=0}). Despite this fact, $\left\vert 2\right\rangle
\left\langle 2\right\vert $ is \textit{not} universally optimal as is shown
below. Indeed, \ %

\begin{align*}
&  \left\Vert \Lambda_{1,\xi}^{\mathrm{damp}}\left(  \left\vert 2\right\rangle
\left\langle 2\right\vert \right)  -s\,\Lambda_{1,0}^{\mathrm{damp}}\left(
\left\vert 2\right\rangle \left\langle 2\right\vert \right)  \right\Vert
_{1}=\left\vert 1-\xi-s\right\vert +\xi,\\
&  \left\Vert \Lambda_{1,\xi}^{\mathrm{damp}}\otimes\mathbf{I}\left(
\left\vert \Phi_{2}\right\rangle \left\langle \Phi_{2}\right\vert \right)
-s\,\Lambda_{1,0}^{\mathrm{damp}}\otimes\mathbf{I}\left(  \left\vert \Phi
_{2}\right\rangle \left\langle \Phi_{2}\right\vert \right)  \right\Vert _{1}\\
&  =\frac{1}{2}\sqrt{\left(  1-s+\xi\right)  ^{2}+4s\xi}+\frac{1}{2}\left\vert
1-\xi-s\right\vert .
\end{align*}
Therefore,
\begin{align*}
\frac{1}{2}  &  =\left\Vert \Lambda_{1,1/2}^{\mathrm{damp}}\left(  \left\vert
2\right\rangle \left\langle 2\right\vert \right)  -\frac{1}{2}\,\Lambda
_{1,0}^{\mathrm{damp}}\left(  \left\vert 2\right\rangle \left\langle
2\right\vert \right)  \right\Vert _{1}\\
&  <\left\Vert \Lambda_{1,1/2}^{\mathrm{damp}}\otimes\mathbf{I}\left(
\left\vert \Phi_{2}\right\rangle \left\langle \Phi_{2}\right\vert \right)
-\frac{1}{2}\,\Lambda_{1,0}^{\mathrm{damp}}\otimes\mathbf{I}\left(  \left\vert
\Phi_{2}\right\rangle \left\langle \Phi_{2}\right\vert \right)  \right\Vert
_{1}=\frac{\sqrt{2}}{2}.
\end{align*}
Therefore, by Lemma\thinspace\ref{lem:commute}, we have the assertion.

\subsection{An alternative proof for $\left\{  \Lambda_{d,\theta}%
^{\mathrm{gp}}\right\}  _{\theta\in\Theta}$}

Given $\Lambda_{d,\theta}^{\mathrm{gp}}\otimes\mathbf{I}\left(  \left\vert
\Phi_{d}\right\rangle \left\langle \Phi_{d}\right\vert \right)  $ and
$\rho_{in}\in\mathcal{B}\left(  \mathcal{H}\right)  $, one can generate
$\Lambda_{d,\theta}^{\mathrm{gp}}\otimes\mathbf{I}\left(  \rho_{in}\right)  $
in the following manner. Measure $\Lambda_{\theta}\otimes\mathbf{I}\left(
\left\vert \Phi_{d}\right\rangle \left\langle \Phi_{d}\right\vert \right)  $
by the projectors onto $\left\{  X_{d}^{j}Z_{d}^{k}\otimes\mathbf{1}\left\vert
\Phi_{d}\right\rangle \right\}  _{j,k=0}^{d-1}$, and apply the unitary
$X_{d}^{j}Z_{d}^{k}\otimes\mathbf{1}$ if $\left(  j,k\right)  $ is observed.

This composition works also for any channel family $\left\{  \Lambda_{\theta
}^{\mathrm{ou}}\right\}  $ with%
\begin{align*}
\Lambda_{\theta}^{\mathrm{ou}}  &  :=\frac{1}{d}\left(  1-\sum_{i=1}^{d^{2}%
-1}\theta^{i}\right)  \Upsilon_{U_{1}}+\sum_{i=2}^{d^{2}}\theta^{i}%
\Upsilon_{U_{i}},\\
\mathrm{tr}\,U_{i}U_{j}^{\dagger}  &  =d\delta_{ij}.
\end{align*}

\section{Unital qubit channels}

\label{sec:d=2}

In this section, $\mathcal{H}_{in}=\mathcal{H}_{out}=%
\mathbb{C}
^{2}$. Also we denote
\[
Y_{2}:=\sqrt{-1}Z_{2}X_{2}.
\]
and define
\[
V=\left[
\begin{array}
[c]{cc}%
e^{\sqrt{-1}b}\cos a & -e^{-\sqrt{-1}c}\sin a\\
e^{\sqrt{-1}c}\sin a & e^{-\sqrt{-1}b}\cos a
\end{array}
\right]  \in\mathrm{SU}\left(  2\right)  .
\]
With $\boldsymbol{p}=\left(  p_{1},p_{2}\right)  $ ($p_{1}+p_{2}=1$) and
$V\in\mathrm{SU}\left(  2\right)  $, let
\[
\left\vert \varphi_{\boldsymbol{p},V}\right\rangle :=\sqrt{p_{1}}\left(
V\left\vert 1\right\rangle \right)  \otimes\left\vert 1\right\rangle
+\sqrt{p_{2}}\left(  V\left\vert 2\right\rangle \right)  \otimes\left\vert
2\right\rangle .
\]
Also, let $\Gamma_{UNOT}$ denote the universal not operation
\[
\Gamma_{\mathrm{unot}}\left(  C\right)  =\overline{\Lambda_{Y_{2}}\left(
C\right)  },
\]
which is positive trace preserving but not completely positive.

Observe that $Y_{2}$ is unitary and Hermite, and that
\begin{equation}
Y_{2}\,V\,=\overline{V}\,Y_{2} \label{ZXV=VZX}%
\end{equation}
or equivalently,
\begin{equation}
\Upsilon_{Y_{2}}\circ\Gamma_{\mathrm{unot}}=\Gamma_{\mathrm{unot}}%
\circ \Upsilon_{Y_{2}}. \label{unot-commute}%
\end{equation}

\begin{lemma}
Suppose%
\begin{equation}
Y_{2}\,\Lambda_{\theta}\left(  C\right)  \,Y_{2}=\overline{\Lambda}_{\theta
}\left(  Y_{2}\,C\,Y_{2}\right)  \,, \label{symmetry}%
\end{equation}
or equivalently
\begin{equation}
\Lambda_{\theta}\circ\Gamma_{\mathrm{unot}}=\Gamma_{\mathrm{unot}}\circ
\Lambda_{\theta} \label{unot-commute-2}%
\end{equation}
Then, the input $\left\vert \Phi_{2}\right\rangle $ is universally optimal.
\end{lemma}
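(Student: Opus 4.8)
The plan is to reduce this to Theorem~\ref{th:covariant-opt-input} (or to a variant of its proof) by recognizing the hypothesis (\ref{unot-commute-2}) as a contravariance-type condition for the two-element group generated by $\Gamma_{\mathrm{unot}}$, together with the observation that $\Gamma_{\mathrm{unot}}$ is a positive trace-preserving map rather than a CPTP one. First I would set up the comparison target: for an arbitrary input $\left\vert \psi\right\rangle \in\mathcal{H}_{in}^{\prime}\otimes\mathcal{H}_{R}^{\prime}$ I want a trace-preserving positive map $\Gamma_{\psi}$ with $\Gamma_{\psi}\bigl(\Lambda_{\theta}\otimes\mathbf{I}(\left\vert \Phi_{2}\right\rangle \left\langle \Phi_{2}\right\vert)\bigr)=\Lambda_{\theta}\otimes\mathbf{I}(\left\vert \psi\right\rangle \left\langle \psi\right\vert)$, and then invoke Proposition~\ref{prop:randomize}, which only requires positivity and trace preservation, not complete positivity. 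This is exactly where the fact that $\Gamma_{\mathrm{unot}}$ is merely positive becomes admissible rather than fatal.

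The construction of $\Gamma_{\psi}$ I would model on the covariant proof, but with the discrete ``group'' being $\{\mathbf{1},\Gamma_{\mathrm{unot}}\}$ acting via $\Upsilon_{Y_2}$ and complex conjugation. Concretely: write $\left\vert \psi\right\rangle$ in its Schmidt form $\left\vert \varphi_{\boldsymbol{p},V}\right\rangle$ as permitted by the notation introduced just above the lemma; prepare $\left\vert \psi\right\rangle$ in a fresh copy $\mathcal{H}_{in}^{\prime}\otimes\mathcal{H}_{R}^{\prime}$; perform a two-outcome measurement on the $\mathcal{H}_R$-part of $\Lambda_{\theta}\otimes\mathbf{I}(\left\vert \Phi_{2}\right\rangle \left\langle \Phi_{2}\right\vert)$ jointly with the $\mathcal{H}_{in}^{\prime}$-part of $\left\vert \psi\right\rangle$, using rank-one elements built from $\left\vert \Phi_2\right\rangle$ and its image under $Y_2\otimes\mathbf{1}$; and then apply a correction $\mathbf{1}$ or $\Upsilon_{Y_2}$ (possibly followed by a reparametrization by $V$, $\overline V$) to $\mathcal{H}_{out}$ depending on the outcome. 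The algebra should collapse, using $\left\langle \Phi_2\right\vert (A\otimes\mathbf{1})=\left\langle \Phi_2\right\vert (\mathbf{1}\otimes A^T)$ and the intertwining relation (\ref{ZXV=VZX}), to a mixture of $\Lambda_{\theta}\circ\Upsilon_{Y_2}\otimes\mathbf{I}(\left\vert \psi\right\rangle \left\langle \psi\right\vert)$ and $\Lambda_{\theta}\otimes\mathbf{I}(\left\vert \psi\right\rangle \left\langle \psi\right\vert)$; then (\ref{unot-commute-2}) lets me pull $\Gamma_{\mathrm{unot}}$ through $\Lambda_\theta$ and undo it with a final $\Gamma_{\mathrm{unot}}$ on $\mathcal{H}_{out}$, so that every branch yields $\Lambda_{\theta}\otimes\mathbf{I}(\left\vert \psi\right\rangle \left\langle \psi\right\vert)$.

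Alternatively, and perhaps more cleanly, I would package the same idea as: $\left\vert \Phi_2\right\rangle$ is universally optimal for the family obtained by ``symmetrizing'' $\Lambda_\theta$ over $\{\mathbf{1},\Gamma_{\mathrm{unot}}\}$, which by (\ref{unot-commute-2}) is just $\Lambda_\theta$ itself; and the symmetrization step is precisely the specialization of the covariant construction (\ref{covariant-opt}) to $\mathcal{G}=\mathbb{Z}_2$, $d_\mu=1$ on the two one-dimensional $Y_2$-eigenspaces, noting $\bigoplus_\mu \left\vert \Phi_1\right\rangle$ is just $\left\vert \Phi_2\right\rangle$ up to normalization and a local unitary on $\mathcal{H}_R$. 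The main obstacle I expect is bookkeeping rather than conceptual: one must be careful that the ``group element'' $\Gamma_{\mathrm{unot}}$ carries a complex conjugation, so the analogue of the operator $V_g^\dagger$ correction is $\Gamma_{\mathrm{unot}}$ applied to the output (which is legitimate since positivity is all Proposition~\ref{prop:randomize} needs), and that the Schmidt-basis parameters $\boldsymbol{p},V$ of $\left\vert \psi\right\rangle$ are absorbed correctly — the freedom $\rho\equiv^{c}\Lambda_{\mathbf{1}\otimes U}(\rho)$ on $\mathcal{H}_R$ handles the $V$-dependence, and the $\boldsymbol p$-dependence is what the measurement-and-correction does the real work on. Once the branch-wise identity is verified, Proposition~\ref{prop:randomize} gives $\left\vert \Phi_2\right\rangle \succeq^{c}\left\vert \psi\right\rangle$ for every $\left\vert \psi\right\rangle$, which is universal optimality.
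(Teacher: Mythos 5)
Your overall frame --- build a trace-preserving \emph{positive} map $\Gamma$ with $\Gamma\bigl(\Lambda_{\theta}\otimes\mathbf{I}\left(\left\vert \Phi_{2}\right\rangle \left\langle \Phi_{2}\right\vert\right)\bigr)=\Lambda_{\theta}\otimes\mathbf{I}\left(\left\vert \psi\right\rangle \left\langle \psi\right\vert\right)$ and invoke Proposition~\ref{prop:randomize}, so that the non-CP map $\Gamma_{\mathrm{unot}}$ is admissible --- is exactly the paper's frame, but the construction you put in the middle does not work, and the reduction to Theorem~\ref{th:covariant-opt-input} is not available. The hypothesis (\ref{unot-commute-2}) is \emph{not} covariance with respect to any unitary representation: $\Gamma_{\mathrm{unot}}$ is not of the form $\Upsilon_{U}$, and nothing in the hypothesis gives $\Lambda_{\theta}\circ\Upsilon_{Y_{2}}=\Upsilon_{V}\circ\Lambda_{\theta}$ for some unitary $V$ (that would be an extra assumption beyond unitality). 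Consequently: (i) your two-outcome joint measurement with rank-one elements built from $\left\vert \Phi_{2}\right\rangle$ and $\left(Y_{2}\otimes\mathbf{1}\right)\left\vert \Phi_{2}\right\rangle$ is not a POVM --- these two orthogonal rank-one projectors sum to a rank-two projector on the four-dimensional space $\mathcal{H}_{R}\otimes\mathcal{H}_{in}^{\prime}$, and completing them to the Bell basis forces $X_{2}$- and $Z_{2}$-corrections that cannot be pushed through a merely unital $\Lambda_{\theta}$; (ii) even the branch you keep yields $\left(\Lambda_{\theta}\circ\Upsilon_{Y_{2}}\right)\otimes\mathbf{I}\left(\left\vert \psi\right\rangle \left\langle \psi\right\vert\right)$, and (\ref{unot-commute-2}) cannot repair it: (\ref{symmetry}) gives $\Lambda_{\theta}\left(Y_{2}\rho Y_{2}\right)=\overline{Y_{2}\Lambda_{\theta}\left(\overline{\rho}\right)Y_{2}}$, which involves $\Lambda_{\theta}\left(\overline{\rho}\right)$ rather than $\Lambda_{\theta}\left(\rho\right)$, so no post-processing of the channel output turns that branch into the target; the complex conjugation inside $\Gamma_{\mathrm{unot}}$ is precisely what your bookkeeping drops. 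For the same reason the ``$\mathbb{Z}_{2}$-specialization'' of Theorem~\ref{th:covariant-opt-input} is inapplicable: that theorem and its proof need unitary $U_{g},V_{g}$ intertwining $\Lambda_{\theta}$, which unitality does not provide.

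The missing idea is local \emph{filtering} (steering), not teleportation. The paper's proof uses no fresh copy of $\left\vert \psi\right\rangle$: write $\left\vert \psi\right\rangle=\left\vert \varphi_{\boldsymbol{p},V}\right\rangle$, apply $V^{T}$ to the $\mathcal{H}_{R}$-part of $\Lambda_{\theta}\otimes\mathbf{I}\left(\left\vert \Phi_{2}\right\rangle \left\langle \Phi_{2}\right\vert\right)$ (the ricochet identity moves it to $V$ on the input side), then measure $\mathcal{H}_{R}$ with the instrument $\left\{\sqrt{M},\sqrt{\mathbf{1}-M}\right\}$, $M=p_{1}\left\vert 1\right\rangle \left\langle 1\right\vert+p_{2}\left\vert 2\right\rangle \left\langle 2\right\vert$. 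Because this filter acts only on $\mathcal{H}_{R}$ it commutes with $\Lambda_{\theta}\otimes\mathbf{I}$, so one branch is exactly $\Lambda_{\theta}\otimes\mathbf{I}\left(\left\vert \varphi_{\boldsymbol{p},V}\right\rangle \left\langle \varphi_{\boldsymbol{p},V}\right\vert\right)$ and the other is the same state with $\boldsymbol{p}^{\prime}=\left(p_{2},p_{1}\right)$. Since $\left\vert \varphi_{\boldsymbol{p}^{\prime},V}\right\rangle \left\langle \varphi_{\boldsymbol{p}^{\prime},V}\right\vert=\left(\Gamma_{\mathrm{unot}}\otimes\Gamma_{\mathrm{unot}}\right)\left(\left\vert \varphi_{\boldsymbol{p},V}\right\rangle \left\langle \varphi_{\boldsymbol{p},V}\right\vert\right)$ (here (\ref{ZXV=VZX}) enters), the hypothesis (\ref{unot-commute-2}) lets one rewrite the bad branch as $\left(\Gamma_{\mathrm{unot}}\otimes\Gamma_{\mathrm{unot}}\right)\circ\left(\Lambda_{\theta}\otimes\mathbf{I}\right)\left(\left\vert \varphi_{\boldsymbol{p},V}\right\rangle \left\langle \varphi_{\boldsymbol{p},V}\right\vert\right)$, and one further application of $\Gamma_{\mathrm{unot}}\otimes\Gamma_{\mathrm{unot}}$ --- a positive trace-preserving correction applied \emph{after} the channel, on the actual two-qubit output --- finishes the job. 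Your proposal contains the right final ingredient (undoing with $\Gamma_{\mathrm{unot}}$ via (\ref{unot-commute-2})) but lacks the filtering step that actually reaches arbitrary Schmidt coefficients, so as written it does not constitute a proof.
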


\begin{proof}
To use Proposition\thinspace\ref{prop:randomize}, we compose a trace
preserving positive map $\Gamma$ with%
\[
\Gamma\left(  \Lambda_{\theta}\otimes\mathbf{I}\left(  \left\vert \Phi
_{2}\right\rangle \left\langle \Phi_{2}\right\vert \right)  \right)
=\Lambda_{\theta}\otimes\mathbf{I}\left(  \left\vert \varphi_{\boldsymbol{p}%
,V}\right\rangle \left\langle \varphi_{\boldsymbol{p},V}\right\vert \right)
\]
as follows. First, apply the unitary $V^{T}$ to $\mathcal{H}_{R}$-part of
$\Lambda_{\theta}\otimes\mathbf{I}\left(  \left\vert \Phi_{2}\right\rangle
\left\langle \Phi_{2}\right\vert \right)  $, obtaining
\begin{align*}
&  \Lambda_{\theta}\otimes\mathbf{I}\left(  \left(  \mathbf{1}\otimes
V^{T}\right)  \left\vert \Phi_{2}\right\rangle \left\langle \Phi
_{2}\right\vert \left(  \mathbf{1}\otimes V^{T\dagger}\right)  \right) \\
&  =\Lambda_{\theta}\otimes\mathbf{I}\left(  \left(  V\otimes\mathbf{1}%
\right)  \left\vert \Phi_{2}\right\rangle \left\langle \Phi_{2}\right\vert
\left(  V^{\dagger}\otimes\mathbf{1}\right)  \right)  .
\end{align*}
Second, measure $\mathcal{H}_{R}$-part by the measurement specified by the
instrument
\[
\left\{  \sqrt{M},\sqrt{\mathbf{1}-M}\right\}  ,
\]
where
\[
M:=p_{1}\left\vert 1\right\rangle \left\langle 1\right\vert +p_{2}\left\vert
2\right\rangle \left\langle 2\right\vert .
\]
\ 

If the measurement result is the one corresponding to $\sqrt{M}$, then we are
done. Otherwise, letting\ $\boldsymbol{p}^{\prime}:=\left(  p_{2}%
,p_{1}\right)  $, we obtain%
\begin{align*}
&  \Lambda_{\theta}\otimes\mathbf{I}\left(  \left\vert \varphi_{\boldsymbol{p}%
^{\prime},V}\right\rangle \left\langle \varphi_{\boldsymbol{p}^{\prime}%
,V}\right\vert \right) \\
&  =\left(  \Lambda_{\theta}\otimes\mathbf{I}\right)  \circ\left(
\Gamma_{\mathrm{unot}}\otimes\Gamma_{\mathrm{unot}}\right)  \left(  \left\vert
\varphi_{\boldsymbol{p},V}\right\rangle \left\langle \varphi_{\boldsymbol{p}%
,V}\right\vert \right) \\
&  =\left(  \Gamma_{\mathrm{unot}}\otimes\Gamma_{\mathrm{unot}}\right)
\circ\left(  \Lambda_{\theta}\otimes\mathbf{I}\right)  \left(  \left\vert
\varphi_{\boldsymbol{p},V}\right\rangle \left\langle \varphi_{\boldsymbol{p}%
,V}\right\vert \right)  .
\end{align*}
\ So we apply $\Gamma_{\mathrm{unot}}\otimes\Gamma_{\mathrm{unot}}$ , to
obtain $\Lambda_{\theta}\otimes\mathbf{I}\left(  \left\vert \varphi
_{\boldsymbol{p},V}\right\rangle \left\langle \varphi_{\boldsymbol{p}%
,V}\right\vert \right)  $.
\end{proof}

Any $4\times4$ Hermite matrix belongs to%

\[
\mathrm{span}_{%
\mathbb{R}
}\left\{  A\otimes B\,;\,A,B=\mathbf{1},X_{2},Y_{2},Z_{2}\right\}  .
\]
So is Choi-Jamilokovski's representation $Ch\left(  \Lambda_{\theta}\right)
$. Since $\Lambda_{\theta}$ is trace preserving,
\[
\mathrm{tr}\,_{\mathcal{H}_{out}\,}Ch\left(  \Lambda_{\theta}\right)
=\mathbf{1}_{in}.
\]
Therefore, $Ch\left(  \Lambda_{\theta}\right)  $ is a positive element of
$\mathrm{span}_{%
\mathbb{R}
}\,\mathcal{TP}$, where
\[
\mathcal{TP}:=\left\{  A\otimes B\,;A,B=\mathbf{1},X_{2},Y_{2},Z_{2},\text{ if
}A=\mathbf{1}\text{, then }B=\mathbf{1}\right\}
\]

\begin{lemma}
(\ref{unot-commute-2}) holds if $Ch\left(  \Lambda_{\theta}\right)  $ is an
element of
\[
\mathrm{span}_{%
\mathbb{R}
}\left(  \mathcal{TP\,}-\left\{  X_{2}\otimes\mathbf{1,}Y_{2}\otimes
\mathbf{1,}Z_{2}\otimes\mathbf{1}\right\}  \right)  ,
\]
or equivalently,
\[
\Lambda_{\theta}\left(  \mathbf{1}\right)  =\mathbf{1}.
\]

\end{lemma}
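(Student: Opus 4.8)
The plan is to verify the two claimed characterizations of condition (\ref{unot-commute-2}) — namely that it follows from $Ch(\Lambda_\theta)$ lying in the indicated span, and that this span condition is equivalent to unitality $\Lambda_\theta(\mathbf{1})=\mathbf{1}$ — and then simply invoke the previous lemma. First I would unwind the unitality equivalence. The space $\mathrm{span}_{\mathbb{R}}\,\mathcal{TP}$ consists of all $4\times 4$ Hermitian $H$ with $\mathrm{tr}_{\mathcal{H}_{out}}H = \mathbf{1}_{in}$ (this is exactly the condition that the coefficient of every term $A\otimes B$ with $A\neq\mathbf{1}$, $B=\mathbf{1}$ vanishes, while the coefficient of $\mathbf{1}\otimes\mathbf{1}$ is fixed). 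Removing the three generators $X_2\otimes\mathbf{1}$, $Y_2\otimes\mathbf{1}$, $Z_2\otimes\mathbf{1}$ kills precisely the terms whose partial trace over $\mathcal{H}_{in}$ is a non-scalar multiple of a Pauli matrix; since $\mathrm{tr}_{\mathcal{H}_{in}}\,Ch(\Lambda_\theta) = \Lambda_\theta(\mathbf{1})$ up to normalization, the membership $Ch(\Lambda_\theta)\in\mathrm{span}_{\mathbb{R}}(\mathcal{TP}-\{X_2\otimes\mathbf{1},Y_2\otimes\mathbf{1},Z_2\otimes\mathbf{1}\})$ is equivalent to $\Lambda_\theta(\mathbf{1})$ being a scalar, i.e.\ (by trace preservation, $\mathrm{tr}\,\Lambda_\theta(\mathbf{1})=\mathrm{tr}\,\mathbf{1}=2$) to $\Lambda_\theta(\mathbf{1})=\mathbf{1}$.

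Next I would show unitality implies (\ref{unot-commute-2}). Recall $\Gamma_{\mathrm{unot}}(C)=\overline{\Upsilon_{Y_2}(C)}=\overline{Y_2CY_2}$, so (\ref{unot-commute-2}), written out, is $Y_2\,\Lambda_\theta(C)\,Y_2 = \overline{\Lambda_\theta}(Y_2CY_2)$ for all $C$ (this is (\ref{symmetry})), where $\overline{\Lambda_\theta}$ is the channel with complex-conjugated Kraus operators. Writing $\Lambda_\theta$ in the Pauli basis, $\Lambda_\theta(\sigma_j)=\sum_k T_{kj}\sigma_k$ with real $T$ (Hermiticity preservation), unitality says the first column of $T$ is $(1,0,0,0)^{T}$, and trace preservation says the first row is $(1,0,0,0)$; thus a unital qubit channel acts as $\sigma_0\mapsto\sigma_0$ and $\sigma_j\mapsto\sum_{k\geq1}T_{kj}\sigma_k$. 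I would then check that conjugation by $Y_2$ and complex conjugation each act as a fixed signed permutation on $\{\sigma_1,\sigma_2,\sigma_3\}=\{X_2,Y_2,Z_2\}$ (using $Y_2 X_2 Y_2 = -X_2$, $Y_2 Y_2 Y_2 = Y_2$, $Y_2 Z_2 Y_2 = -Z_2$, and $\overline{X_2}=X_2$, $\overline{Y_2}=-Y_2$, $\overline{Z_2}=Z_2$), so both sides of (\ref{symmetry}) reduce to the same linear combination; the key point is that because the $\sigma_0\to\sigma_j$ and $\sigma_j\to\sigma_0$ couplings vanish, the extra sign that $Y_2$-conjugation and conjugation introduce on $\sigma_0$ versus the $\sigma_j$'s never creates a mismatch. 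Equivalently and more cleanly, one notes $\Gamma_{\mathrm{unot}}$ is itself unital and equals the antipodal map on the Bloch ball, which commutes with any unital qubit map because every such map is Bloch-ball–linear and fixes the center.

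Finally, having established (\ref{unot-commute-2}), I would apply the immediately preceding lemma to conclude $|\Phi_2\rangle$ is universally optimal; but since the statement to be proved is exactly "(\ref{unot-commute-2}) holds iff $Ch(\Lambda_\theta)$ is in the span iff $\Lambda_\theta(\mathbf{1})=\mathbf{1}$", no further invocation is needed beyond the two equivalences above. I expect the main obstacle to be the bookkeeping in the implication "unitality $\Rightarrow$ (\ref{symmetry})": one must be careful that the converse also holds (so that the "if" can be upgraded, if desired), which amounts to observing that if $\Lambda_\theta$ has a nonzero $\sigma_0\to\sigma_j$ or $\sigma_j\to\sigma_0$ coupling then testing (\ref{symmetry}) on $C=\mathbf{1}$ or on $C=\sigma_j$ exhibits a sign discrepancy — a short but slightly fiddly case check in the Pauli-transfer-matrix picture.
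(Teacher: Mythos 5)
Your proposal is correct, but it runs through a different (dual) picture than the paper's own proof. The paper converts (\ref{symmetry}) into a fixed-point condition on the Choi matrix, namely $W=\Gamma_{\mathrm{unot}}\otimes\Gamma_{\mathrm{unot}}\left(  W\right)$ with $W=Ch\left(  \Lambda_{\theta}\right)$ (its (\ref{symmetry-2})), and then simply observes that every generator of the reduced span ($\mathbf{1}\otimes\mathbf{1}$ and $A\otimes B$ with both $A,B$ non-identity Paulis) is invariant under $\Gamma_{\mathrm{unot}}\otimes\Gamma_{\mathrm{unot}}$, whereas the removed generators $X_{2}\otimes\mathbf{1}$, $Y_{2}\otimes\mathbf{1}$, $Z_{2}\otimes\mathbf{1}$ acquire a sign. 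You argue instead at the superoperator level: in the Pauli-transfer (Bloch) picture $\Gamma_{\mathrm{unot}}$ is $\mathrm{diag}\left(1,-1,-1,-1\right)$, a unital trace-preserving qubit map has block-diagonal transfer matrix, and these commute, which yields (\ref{unot-commute-2}) directly; this is arguably the more transparent reason the lemma holds, and you additionally spell out the equivalence of the reduced-span membership with $\Lambda_{\theta}\left(\mathbf{1}\right)=\mathbf{1}$ via $\mathrm{tr}_{\mathcal{H}_{in}}Ch\left(\Lambda_{\theta}\right)=\Lambda_{\theta}\left(\mathbf{1}\right)$, which the paper only asserts, as well as the (unneeded) converse. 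One small slip to fix: in your parenthetical on $\mathrm{span}_{\mathbb{R}}\,\mathcal{TP}$ you say the condition $\mathrm{tr}_{\mathcal{H}_{out}}H=\mathbf{1}_{in}$ kills the terms $A\otimes\mathbf{1}$ with $A\neq\mathbf{1}$; with the paper's convention (output in the left factor) trace preservation kills the terms $\mathbf{1}\otimes B$ with $B\neq\mathbf{1}$, and it is unitality that kills $A\otimes\mathbf{1}$ with $A\neq\mathbf{1}$ --- your later, correct use of $\mathrm{tr}_{\mathcal{H}_{in}}$ is consistent with the latter, so the main chain is unaffected; note also that a linear span can only enforce $\mathrm{tr}_{\mathcal{H}_{out}}H\propto\mathbf{1}_{in}$, the normalization being supplied separately by trace preservation.
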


\begin{proof}
Since
\[
\Lambda_{\theta}\left(  C\right)  =\mathrm{tr}_{\mathcal{H}_{in}\,}\,Ch\left(
\Lambda_{\theta}\right)  \left(  \mathbf{1}_{out}\otimes C^{T}\right)  ,
\]
and
\begin{align*}
\overline{\Lambda_{\theta}}\left(  ACA^{\dagger}\right)   &  =\mathrm{tr}%
_{\mathcal{H}_{in}}\,\left(  \mathbf{1}\otimes A^{T}\overline{Ch\left(
\Lambda_{\theta}\right)  }\mathbf{1}\otimes\overline{A}\right)  \left(
\mathbf{1}\otimes C^{T}\right)  ,\\
A\Lambda_{\theta}\left(  C\right)  A^{\dagger}  &  =\mathrm{tr}_{\mathcal{H}%
_{in}\,}\,\,\left(  A\otimes\mathbf{1}\,Ch\left(  \Lambda_{\theta}\right)
\,A^{\dagger}\otimes\mathbf{1}\right)  \left(  \mathbf{1}\otimes C^{T}\right)
,
\end{align*}
(\ref{symmetry}) is equivalent to
\[
\mathbf{1}\otimes Y_{2}\,\,\overline{Ch\left(  \Lambda_{\theta}\right)
}\,\mathbf{1}\otimes Y_{2}=Y_{2}\otimes\mathbf{1}\,Ch\left(  \Lambda_{\theta
}\right)  Y_{2}\otimes\mathbf{1},
\]
or equivalently, with $\ W=Ch\left(  \Lambda_{\theta}\right)  $,
\begin{equation}
W=\Gamma_{\mathrm{unot}}\otimes\Gamma_{\mathrm{unot}}\left(  W\right)  .
\label{symmetry-2}%
\end{equation}
Each element of $\mathcal{TP}$ other than $X_{2}\otimes\mathbf{1}$,
$Y_{2}\otimes\mathbf{1}$ and $Z_{2}\otimes\mathbf{1}$ satisfies
(\ref{symmetry-2}). Therefore, we have the assertion.
\end{proof}

Combining these lemmas, we have the following theorem.

\begin{theorem}
\label{th:unital-qubit}Suppose $\mathcal{H}_{in}=\mathcal{H}_{out}=%
\mathbb{C}
^{2}$. Then, the input $\left\vert \Phi_{2}\right\rangle $ is universally
optimal if $\Lambda_{\theta}$ is unital.
\end{theorem}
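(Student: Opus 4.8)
The plan is to combine the two lemmas immediately preceding the theorem. The second lemma tells us that condition (\ref{unot-commute-2}) (equivalently the symmetry (\ref{symmetry})) holds precisely when $\Lambda_\theta(\mathbf{1}) = \mathbf{1}$, that is, when $\Lambda_\theta$ is unital. The first lemma tells us that, whenever (\ref{unot-commute-2}) holds, the maximally entangled input $\left\vert \Phi_2 \right\rangle$ is universally optimal. So the proof is simply: given that $\Lambda_\theta$ is unital, apply the second lemma to conclude (\ref{unot-commute-2}), then apply the first lemma to conclude that $\left\vert \Phi_2 \right\rangle$ is universally optimal. There is essentially nothing left to do.

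One small point of care: the hypothesis of the first lemma is stated for a single fixed $\theta$, but in the definition of universal optimality the whole family $\left\{ \Lambda_\theta \right\}_{\theta \in \Theta}$ is involved, and the construction of the randomizing map $\Gamma_\psi$ in that lemma's proof is uniform in $\theta$ (the unitary $V^T$, the instrument $\left\{ \sqrt{M}, \sqrt{\mathbf{1}-M} \right\}$, and the correction $\Gamma_{\mathrm{unot}} \otimes \Gamma_{\mathrm{unot}}$ do not depend on $\theta$). Since $\Lambda_\theta$ is unital for every $\theta$, the relation (\ref{unot-commute-2}) holds for every $\theta$ simultaneously, so the single $\theta$-independent map $\Gamma_\psi$ satisfies $\Gamma_\psi(\Lambda_\theta \otimes \mathbf{I}(\left\vert \Phi_2 \right\rangle \left\langle \Phi_2 \right\vert)) = \Lambda_\theta \otimes \mathbf{I}(\left\vert \varphi_{\boldsymbol{p},V} \right\rangle \left\langle \varphi_{\boldsymbol{p},V} \right\vert)$ for all $\theta$, and Proposition \ref{prop:randomize} then gives $\left\{ \Lambda_\theta \otimes \mathbf{I}(\left\vert \Phi_2 \right\rangle \left\langle \Phi_2 \right\vert) \right\}_{\theta} \succeq^c \left\{ \Lambda_\theta \otimes \mathbf{I}(\left\vert \varphi_{\boldsymbol{p},V} \right\rangle \left\langle \varphi_{\boldsymbol{p},V} \right\vert) \right\}_{\theta}$.

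The remaining thing to verify is that every pure input state on $\mathcal{H}_{in} \otimes \mathcal{H}_R$ is, up to a local unitary on $\mathcal{H}_R$, of the form $\left\vert \varphi_{\boldsymbol{p},V} \right\rangle$. This is just the Schmidt decomposition: any $\left\vert \psi \right\rangle \in \mathbb{C}^2 \otimes \mathbb{C}^2$ can be written as $\sqrt{p_1}\, \left\vert e_1 \right\rangle \left\vert f_1 \right\rangle + \sqrt{p_2}\, \left\vert e_2 \right\rangle \left\vert f_2 \right\rangle$ with $\left\{ \left\vert e_i \right\rangle \right\}$, $\left\{ \left\vert f_i \right\rangle \right\}$ orthonormal and $p_1 + p_2 = 1$; choosing $V \in \mathrm{SU}(2)$ with $V \left\vert i \right\rangle = \left\vert e_i \right\rangle$ (absorbing a phase) and a unitary $W$ on $\mathcal{H}_R$ with $W^\dagger \left\vert f_i \right\rangle = \left\vert i \right\rangle$, we get $\left\vert \psi \right\rangle = (\mathbf{1} \otimes W) \left\vert \varphi_{\boldsymbol{p},V} \right\rangle$. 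Since $\rho \equiv^c \Lambda_{\mathbf{1} \otimes U}(\rho)$ for any $U \in \mathrm{SU}(\mathcal{H}_R)$ (noted in the text) and $\succeq^c$ is transitive, universal optimality of $\left\vert \Phi_2 \right\rangle$ against all $\left\vert \varphi_{\boldsymbol{p},V} \right\rangle$ upgrades to universal optimality against all pure inputs, and since an optimal input may be taken pure, the theorem follows.

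There is no real obstacle here — the content is entirely carried by the two preceding lemmas — so the proof is a short assembly. If one wanted to be scrupulous, the only place warranting a sentence is the uniformity-in-$\theta$ remark above, to be sure that the randomizing map delivered by the first lemma is the same for all members of the family; but this is manifest from its construction.

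\begin{proof}
Since $\Lambda_\theta$ is unital for every $\theta \in \Theta$, the preceding lemma gives that $\Lambda_\theta \circ \Gamma_{\mathrm{unot}} = \Gamma_{\mathrm{unot}} \circ \Lambda_\theta$, i.e. (\ref{unot-commute-2}) holds for every $\theta$. The first lemma of this section then applies: its construction of the trace preserving positive map $\Gamma$ (apply $V^T$ on $\mathcal{H}_R$, measure with $\left\{ \sqrt{M}, \sqrt{\mathbf{1}-M} \right\}$, and, in the failure branch, apply $\Gamma_{\mathrm{unot}} \otimes \Gamma_{\mathrm{unot}}$) does not depend on $\theta$, so the same $\Gamma$ satisfies
\[
\Gamma\left( \Lambda_\theta \otimes \mathbf{I}\left( \left\vert \Phi_2 \right\rangle \left\langle \Phi_2 \right\vert \right) \right) = \Lambda_\theta \otimes \mathbf{I}\left( \left\vert \varphi_{\boldsymbol{p},V} \right\rangle \left\langle \varphi_{\boldsymbol{p},V} \right\vert \right)
\]
for all $\theta$. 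By Proposition \ref{prop:randomize},
\[
\left\{ \Lambda_\theta \otimes \mathbf{I}\left( \left\vert \Phi_2 \right\rangle \left\langle \Phi_2 \right\vert \right) \right\}_{\theta \in \Theta} \succeq^c \left\{ \Lambda_\theta \otimes \mathbf{I}\left( \left\vert \varphi_{\boldsymbol{p},V} \right\rangle \left\langle \varphi_{\boldsymbol{p},V} \right\vert \right) \right\}_{\theta \in \Theta}.
\]
Finally, by the Schmidt decomposition any pure $\left\vert \psi \right\rangle \in \mathbb{C}^2 \otimes \mathcal{H}_R$ equals $(\mathbf{1} \otimes W) \left\vert \varphi_{\boldsymbol{p},V} \right\rangle$ for suitable $\boldsymbol{p}$, $V \in \mathrm{SU}(2)$ and $W \in \mathrm{SU}(\mathcal{H}_R)$; since $\left\vert \varphi_{\boldsymbol{p},V} \right\rangle \equiv^c (\mathbf{1} \otimes W)\left\vert \varphi_{\boldsymbol{p},V} \right\rangle$ and $\succeq^c$ is transitive, $\left\vert \Phi_2 \right\rangle$ is universally better than every pure input, hence (as an optimal input may be taken pure) universally optimal.
\end{proof}
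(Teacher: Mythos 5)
Your proof is correct and is essentially the paper's own argument: the paper proves the theorem by exactly this assembly ("Combining these lemmas..."), namely that unitality gives (\ref{unot-commute-2}) and the preceding lemma then yields universal optimality of $\left\vert \Phi_{2}\right\rangle$. Your added remarks — the $\theta$-uniformity of the randomizing map and the Schmidt-decomposition reduction of an arbitrary pure input to $\left\vert \varphi_{\boldsymbol{p},V}\right\rangle$ up to a local unitary on $\mathcal{H}_{R}$ — merely make explicit details the paper leaves implicit.
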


\begin{example}
\label{ex:su2}Due to (\ref{unot-commute}), the family $\left\{  \Upsilon
_{U}\,;\,U\in\mathrm{SU}\left(  2\right)  \right\}  $ satisfies
(\ref{unot-commute-2}).
\end{example}

\begin{example}
\label{ex:2-diag}Channel family $\left\{  \Lambda_{\theta}\right\}  $ with%
\[
Ch\left(  \Lambda_{\theta}\right)  =\left[
\begin{array}
[c]{cccc}%
1 & 0 & 0 & \theta^{1}-\sqrt{-1}\theta^{2}\\
0 & 0 & 0 & 0\\
0 & 0 & 0 & 0\\
\theta^{1}+\sqrt{-1}\theta^{2} & 0 & 0 & 1
\end{array}
\right]
\]
satisfies (\ref{unot-commute-2}). In Kraus representation, $\Lambda_{\theta}$
is expressed as%
\begin{align*}
\Lambda_{\theta}  &  =\sum_{i=1}^{2}\Upsilon_{E_{i}},\\
E_{1}  &  =\left[
\begin{array}
[c]{cc}%
1 & 0\\
0 & \theta^{1}+\sqrt{-1}\theta^{2}%
\end{array}
\right]  ,E_{2}=\left[
\begin{array}
[c]{cc}%
0 & 0\\
0 & \sqrt{1-\sum_{i=1}^{2}\left(  \theta^{i}\right)  ^{2}}%
\end{array}
\right]  .
\end{align*}

\end{example}

\section{A pair of measurements}

\label{sec:measurement}

Let us consider a family $\left\{  \Lambda_{\theta}\right\}  _{\theta
\in\left\{  +,-\right\}  }$ such that $\Lambda_{\theta}:\mathcal{B}\left(
\mathcal{H}_{in}\right)  \rightarrow\mathcal{B}\left(  \mathcal{H}%
_{out}\right)  $, $\mathcal{H}_{out}=%
\mathbb{C}
^{m}$, and
\begin{equation}
\Lambda_{\theta}\left(  \rho\right)  =\sum_{i=1}^{m}\left\{  \mathrm{tr}\,\rho
M_{\theta}\left(  i\right)  \right\}  \left\vert i\right\rangle \left\langle
i\right\vert . \label{measurement}%
\end{equation}
This corresponds to measurements which outputs classical data $"i"\,$ with
probability $\mathrm{tr}\,\rho M_{\theta}\left(  i\right)  $.

\begin{example}
\label{ex:ortho-measure}Suppose
\begin{equation}
M_{+}\left(  i\right)  M_{-}\left(  i\right)  =0,\,\,\left(  i=1,\cdots
,m\right)  . \label{meas-ortho}%
\end{equation}
For example, suppose
\begin{align}
\mathrm{rank}M_{+}\left(  i\right)   &  =1,\nonumber\\
M_{-}\left(  i\right)   &  :=\frac{1}{d-1}\left\{  \mathrm{tr}M_{+}\left(
i\right)  \,\cdot\mathbf{1}-M_{+}\left(  i\right)  \right\}  .
\label{meas-ortho-1}%
\end{align}
Then,
\begin{align*}
M_{-}\left(  i\right)  M_{+}\left(  i\right)   &  =M_{+}\left(  i\right)
M_{-}\left(  i\right) \\
&  =\frac{1}{d-1}\left\{  \mathrm{tr}M_{+}\,\left(  i\right)  \cdot
M_{-}\left(  i\right)  -\left(  M_{-}\left(  i\right)  \right)  ^{2}\right\}
\\
&  =0
\end{align*}
and
\begin{align*}
\sum_{i=1}^{m}M_{-}\left(  i\right)   &  =\frac{1}{d-1}\left\{  \mathrm{tr}%
\sum_{i=1}^{m}M_{+}\left(  i\right)  \,\cdot\mathbf{1}-\sum_{i=1}^{m}%
M_{+}\left(  i\right)  \right\} \\
&  =\frac{1}{d-1}\left\{  \mathrm{tr}\,\mathbf{1}\cdot\mathbf{1-1}\right\}
\mathbf{=1.}%
\end{align*}
Thus, (\ref{meas-ortho-1}) is a special case of (\ref{meas-ortho}).

An input $\left\vert \psi\right\rangle $ is universally optimal if
\begin{equation}
\sqrt{\rho_{\psi}}M_{+}\left(  i\right)  ^{T}\rho_{\psi}M_{-}\left(  i\right)
^{T}\sqrt{\rho_{\psi}}=0,\,\left(  i=1,\cdots,m\right)  , \label{c=0}%
\end{equation}
where $\rho_{\psi}$ is as of (\ref{rho-psi}). In particular, $\left\vert
\Phi_{d}\right\rangle $ is universally optimal.

The proof is as follows. Suppose (\ref{c=0}) holds. Then%
\begin{align*}
&  \Lambda_{+}\otimes\mathbf{I}\left(  \left\vert \psi\right\rangle
\left\langle \psi\right\vert \right)  \Lambda_{-}\otimes\mathbf{I}\left(
\left\vert \psi\right\rangle \left\langle \psi\right\vert \right) \\
&  =\sum_{i=1}^{m}\left\vert i\right\rangle \left\langle i\right\vert
\otimes\sqrt{\rho_{\psi}}M_{+}\left(  i\right)  ^{T}\rho_{\psi}M_{-}\left(
i\right)  ^{T}\sqrt{\rho_{\psi}}\\
&  =0
\end{align*}
Therefore, \ $\Lambda_{+}\otimes\mathbf{I}\left(  \left\vert \psi\right\rangle
\left\langle \psi\right\vert \right)  $ and $\Lambda_{-}\otimes\mathbf{I}%
\left(  \left\vert \psi\right\rangle \left\langle \psi\right\vert \right)  $
can be discriminated perfectly. Therefore, for any $\rho_{in}\in
\mathcal{B}\left(  \mathcal{H}_{in}\otimes\mathcal{H}_{R}\right)  $, there is
a trace preserving CPTP map $\Gamma$ with
\[
\Gamma\left(  \Lambda_{\theta}\otimes\mathbf{I}\left(  \left\vert
\psi\right\rangle \left\langle \psi\right\vert \right)  \right)
=\Lambda_{\theta}\otimes\mathbf{I}\left(  \rho_{in}\right)  \,\,\,\,(\theta
=+,-),
\]
and by Proposition \ref{prop:randomize}, we have the assertion.
\end{example}

\begin{example}
\label{ex:mes-rotate}Let $\left\{  U_{i}\right\}  _{i=1}^{m}$ be unitary
matrices such that
\begin{equation}
\sum_{i=1}^{m}U_{i}AU_{i}^{\dagger}=c\,\left(  \mathrm{tr}\,A\right)
\mathbf{1}, \label{UAU=c1}%
\end{equation}
and define
\begin{equation}
M_{\theta}\left(  i\right)  :=\frac{1}{c\,}U_{i}M_{\theta}U_{i}^{\dagger},
\label{meas-rotate}%
\end{equation}
where
\begin{align*}
\left[  M_{+},M_{-}\right]   &  =0,\\
M_{\theta}  &  \geq0,\\
\mathrm{tr}\,M_{+}  &  =\mathrm{tr}\,M_{-}=1.
\end{align*}
Then, $\left\vert \Phi_{d}\right\rangle $ is universally optimal.

The proof is as follows. Observe
\begin{align*}
&  \left[  \Lambda_{+}\otimes\mathbf{I}\left(  \left\vert \Phi_{d}%
\right\rangle \left\langle \Phi_{d}\right\vert \right)  ,\Lambda_{-}%
\otimes\mathbf{I}\left(  \left\vert \Phi_{d}\right\rangle \left\langle
\Phi_{d}\right\vert \right)  \right] \\
&  =\frac{1}{d^{2}c\,^{2}}\sum_{i=1}^{m}\left\vert i\right\rangle \left\langle
i\right\vert \otimes\left[  \left(  U_{i}M_{+}U_{i}^{\dagger}\right)
^{T},\left(  U_{i}M_{-}U_{i}^{\dagger}\right)  ^{T}\right] \\
&  =0.
\end{align*}
Also,
\begin{align*}
&  \left\Vert \Lambda_{+}\otimes\mathbf{I}\left(  \left\vert \varphi
\right\rangle \left\langle \varphi\right\vert \right)  -s\,\Lambda_{-}%
\otimes\mathbf{I}\left(  \left\vert \varphi\right\rangle \left\langle
\varphi\right\vert \right)  \right\Vert _{1}\\
&  =\sum_{i=1}^{m}\left\Vert \sqrt{\rho_{\varphi}}\left(  M_{+}\left(
i\right)  ^{T}-s\,M_{-}\left(  i\right)  ^{T}\right)  \sqrt{\rho_{\varphi}%
}\right\Vert _{1}\\
&  \leq\sum_{i=1}^{m}\left\Vert \sqrt{\rho_{\varphi}}\left\vert M_{+}\left(
i\right)  ^{T}-s\,M_{-}\left(  i\right)  ^{T}\right\vert \sqrt{\rho_{\varphi}%
}\right\Vert _{1}\\
&  =\sum_{i=1}^{m}\mathrm{tr}\,\rho_{\varphi}\left\vert M_{+}\left(  i\right)
-s\,M_{-}\left(  i\right)  \right\vert ^{T}\\
&  =\frac{1}{c}\mathrm{tr}\,\rho_{\varphi}\left(  \sum_{i=1}^{m}%
U_{i}\left\vert M_{+}-s\,M_{-}\right\vert U_{i}^{\dagger}\right)  ^{T}\\
&  =\mathrm{tr}\,\rho_{\varphi}\mathrm{tr}\,\left\vert M_{+}-s\,M_{-}%
\right\vert =\mathrm{tr}\,\left\vert M_{+}-s\,M_{-}\right\vert ,
\end{align*}
where the inequality in the third line is true if $\left\vert \varphi
\right\rangle =\left\vert \Phi_{d}\right\rangle $. Therefore, by
Lemma\thinspace\ref{lem:commute}, we have the assertion.
\end{example}

\begin{example}
Let%
\begin{align*}
M_{\theta}\left(  2\right)   &  =\mathbf{1}-M_{\theta}\left(  1\right)
,\,\,\\
M_{-}\left(  2\right)   &  =M_{+}\left(  1\right)  =M=\mathrm{diag}\left(
a_{1},\cdots,a_{d}\right)  ,\\
a_{1}  &  >a_{2}\geq\cdots\geq a_{d}.
\end{align*}
Then,%
\begin{align*}
&  \left\Vert \Lambda_{+}\otimes\mathbf{I}\left(  \left\vert \psi\right\rangle
\left\langle \psi\right\vert \right)  -s\,\Lambda_{-}\otimes\mathbf{I}\left(
\left\vert \psi\right\rangle \left\langle \psi\right\vert \right)  \right\Vert
_{1}\\
&  =\left\Vert \sqrt{\rho_{\psi}}\left(  -s\mathbf{1}+\left(  1+s\right)
M^{T}\right)  \sqrt{\rho_{\psi}}\right\Vert _{1}+\left\Vert \sqrt{\rho_{\psi}%
}\left(  \mathbf{1}-\left(  1+s\right)  M^{T}\right)  \sqrt{\rho_{\psi}%
}\right\Vert _{1}\\
&  \leq\mathrm{tr}\,\rho_{\psi}\left(  \left\vert -s\mathbf{1}+\left(
1+s\right)  M^{T}\right\vert +\left\vert \mathbf{1}-\left(  1+s\right)
M^{T}\right\vert \right) \\
&  =\sum_{i=1}^{d}\rho_{\psi,i,i}\left(  \left\vert -s+\left(  1+s\right)
a_{i}\right\vert +\left\vert 1-\left(  1+s\right)  a_{i}\right\vert \right) \\
&  =\sum_{i=1}^{d}\rho_{\psi,i,i}\left(  \left\vert \left(  1-a_{i}\right)
s-a_{i}\right\vert +\left\vert a_{i}s-\left(  1-a_{i}\right)  \right\vert
\right) \\
&  \leq\left\vert \left(  1-a_{1}\right)  s-a_{1}\right\vert +\left\vert
a_{1}s-\left(  1-a_{1}\right)  \right\vert ,
\end{align*}
and the inequalities in the third and the fourth line are achieved by
$\rho_{\psi}=\left\vert 1\right\rangle \left\langle 1\right\vert $. Therefore,
by Lemma\thinspace\ref{lem:commute}, $\,\left\vert \psi\right\rangle
=\left\vert 1\right\rangle \left\vert 1\right\rangle $ is universally optimal.
\end{example}

\section{\textrm{SU}$(d)$ family}

\label{sec:su(d)}

\subsection{$d=2$ case}

\label{subsec:su2}

In this subsection, we consider the family $\left\{  \Upsilon_{U}%
\,;\,U\in\mathrm{SU}\left(  2\right)  \right\}  $. \cite{Fujiwara:2002} had
shown that $\left\vert \Phi_{2}\right\rangle $ is optimal for the mean square
error with the constraint (\ref{EML=0}). Also, as stated in Theorem\thinspace
\ref{th:unital-qubit}, Section\thinspace\ref{sec:d=2}, $\left\vert \Phi
_{2}\right\rangle $ is a universally optimal state.

Define for $\left\vert \psi\right\rangle \in\mathcal{H}_{in}\otimes
\mathcal{H}_{R}$,
\[
\mathcal{U}\left(  \psi\right)  :=\left\{  U;\mathrm{tr}\,\rho_{\psi
}U=0,\,U\in\mathrm{SU}\left(  d\right)  \right\}  .
\]
Below, we consider the test between the two hypotheses, $U=\mathbf{1}$ v.s.
$U\in\mathcal{U}\left(  \psi\right)  $. In other words, $\mathcal{D}=\left\{
0,1\right\}  $ and the loss function $l^{\psi}$ is such that
\begin{align*}
l_{U}^{\psi}\left(  0\right)   &  =\left\{
\begin{array}
[c]{cc}%
1, & \left(  U\in\mathcal{U}\left(  \psi\right)  \right) \\
0, & \left(  U=\mathbf{1}\right)
\end{array}
\right.  ,\\
l_{U}^{\psi}\left(  1\right)   &  =\left\{
\begin{array}
[c]{cc}%
0, & \left(  U\in\mathcal{U}\left(  \psi\right)  \right) \\
1, & \left(  U=\mathbf{1}\right)
\end{array}
\right.  ,\\
l_{U}^{\psi}\left(  1\right)   &  =l_{U}^{\psi}\left(  0\right)
=0\,,U\notin\mathcal{U}\left(  \psi\right)  \cup\left\{  \mathbf{1}\right\}
\end{align*}

\begin{theorem}
Consider the family $\left\{  \Upsilon_{U}\,;\,U\in\mathrm{SU}\left(
2\right)  \right\}  $. Then, $\left\vert \Phi_{2}\right\rangle $ is strictly
universally optimal.
\end{theorem}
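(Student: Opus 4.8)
The plan is to establish that $\left\vert \Phi_{2}\right\rangle $ is strictly universally optimal in two steps: first, that it is universally optimal, and second, that no other input state $\rho$ with $\rho\succeq^{c}\left\vert \Phi_{2}\right\rangle $ can exist, i.e.\ strictness. Universal optimality is already available: by Example\thinspace\ref{ex:su2}, the family $\left\{  \Upsilon_{U}\,;\,U\in\mathrm{SU}\left(  2\right)  \right\}  $ satisfies (\ref{unot-commute-2}), so Theorem\thinspace\ref{th:unital-qubit} gives $\left\vert \Phi_{2}\right\rangle \succeq^{c}\rho^{\prime}$ for every input $\rho^{\prime}$. The work is therefore entirely in proving strictness: for an arbitrary input $\left\vert \psi\right\rangle $ that is \emph{not} universally equivalent to $\left\vert \Phi_{2}\right\rangle $, exhibit one statistical decision problem on which $\left\vert \Phi_{2}\right\rangle $ strictly beats $\left\vert \psi\right\rangle $; and the natural candidate problem is exactly the test $U=\mathbf{1}$ versus $U\in\mathcal{U}\left(  \psi\right)  $ with loss $l^{\psi}$ introduced just above the theorem.

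The key steps, in order, are as follows. First I would reduce to the case $\rho_{\psi}\neq\mathbf{1}/2$: if $\rho_{\psi}=\mathbf{1}/2$ then $\left\vert \psi\right\rangle $ is, up to a local unitary on $\mathcal{H}_{R}$, equal to $\left\vert \Phi_{2}\right\rangle $, hence $\left\vert \psi\right\rangle \equiv^{c}\left\vert \Phi_{2}\right\rangle $ and there is nothing to prove for strictness against that $\psi$. So assume $\rho_{\psi}$ has eigenvalues $p_{1}>p_{2}$. Second, analyze the decision problem $l^{\psi}$ on the input $\left\vert \psi\right\rangle $: the two hypotheses produce output states $\Upsilon_{\mathbf{1}}\otimes\mathbf{I}\left(  \left\vert \psi\right\rangle \left\langle \psi\right\vert \right)  =\left\vert \psi\right\rangle \left\langle \psi\right\vert $ and, for $U\in\mathcal{U}\left(  \psi\right)  $, the state $\Upsilon_{U}\otimes\mathbf{I}\left(  \left\vert \psi\right\rangle \left\langle \psi\right\vert \right)  $. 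The condition $\operatorname{tr}\rho_{\psi}U=0$ is precisely $\left\langle \psi\right\vert U\otimes\mathbf{1}\left\vert \psi\right\rangle =0$, so these two output states are orthogonal pure states and can be discriminated perfectly; the optimal (Bayes, say uniform-prior) risk of $l^{\psi}$ on $\left\vert \psi\right\rangle $ is $0$. Third — and this is the crux — show that on $\left\vert \Phi_{2}\right\rangle $ this same decision problem has \emph{strictly positive} optimal risk, because there exists $U\in\mathcal{U}\left(  \psi\right)  $ with $\left\vert \left\langle \Phi_{2}\right\vert U\otimes\mathbf{1}\left\vert \Phi_{2}\right\rangle \right\vert =\left\vert \tfrac{1}{2}\operatorname{tr}U\right\vert \neq 0$; for such $U$, $\Upsilon_{\mathbf{1}}\otimes\mathbf{I}\left(  \left\vert \Phi_{2}\right\rangle \left\langle \Phi_{2}\right\vert \right)  $ and $\Upsilon_{U}\otimes\mathbf{I}\left(  \left\vert \Phi_{2}\right\rangle \left\langle \Phi_{2}\right\vert \right)  $ are non-orthogonal, so no measurement discriminates them with zero error, and the optimal risk is positive. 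Concluding, if we had $\left\vert \Phi_{2}\right\rangle \equiv^{c}\left\vert \psi\right\rangle $ the two optimal risks would have to coincide — contradiction — so $\left\vert \psi\right\rangle \not\succeq^{c}\left\vert \Phi_{2}\right\rangle $, which together with universal optimality is $\left\vert \Phi_{2}\right\rangle \succ^{c}\left\vert \psi\right\rangle $.

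The main obstacle is the third step: verifying that $\mathcal{U}\left(  \psi\right)  $ is nonempty and, more importantly, that it necessarily contains an element $U$ with $\operatorname{tr}U\neq0$ whenever $\rho_{\psi}\neq\mathbf{1}/2$. Here I would write $\rho_{\psi}=\tfrac12(\mathbf{1}+\vec r\cdot\vec\sigma)$ with $|\vec r|>0$ and $U=u_{0}\mathbf{1}+i\,\vec u\cdot\vec\sigma\in\mathrm{SU}(2)$, so $\operatorname{tr}\rho_{\psi}U=u_{0}-i\,\vec r\cdot\vec u$; requiring this to vanish forces $u_{0}=0$ and $\vec r\cdot\vec u=0$ — so in fact \emph{every} $U\in\mathcal{U}\left(  \psi\right)  $ is traceless, and the construction of the previous paragraph collapses. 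The correct fix, which I would flesh out carefully, is to note that $l^{\psi}$-risk zero on $\left\vert \Phi_{2}\right\rangle $ would require \emph{simultaneous} perfect discrimination of $\left\vert \Phi_{2}\right\rangle \left\langle \Phi_{2}\right\vert $ from $\Upsilon_{U}\otimes\mathbf{I}\left(  \left\vert \Phi_{2}\right\rangle \left\langle \Phi_{2}\right\vert \right)  $ for \emph{all} $U\in\mathcal{U}\left(  \psi\right)  $ at once by one measurement whose ``$U=\mathbf{1}$'' outcome has probability $1$ on $\left\vert \Phi_{2}\right\rangle $; since $\mathcal{U}\left(  \psi\right)  $ contains at least two distinct tracefree unitaries $U_{1},U_{2}$ (the set is a two-sphere's worth when $|\vec r|>0$) whose images $\Upsilon_{U_{1}}\otimes\mathbf{I}\left(  \left\vert \Phi_{2}\right\rangle \left\langle \Phi_{2}\right\vert \right)  $ and $\Upsilon_{U_{2}}\otimes\mathbf{I}\left(  \left\vert \Phi_{2}\right\rangle \left\langle \Phi_{2}\right\vert \right)  $ are non-orthogonal to each other, the ``accept $U=\mathbf{1}$'' POVM element would have to be the zero operator on the span of all the $\Upsilon_{U}\otimes\mathbf{I}$ images, and one checks this span is all of $\mathcal{B}(\mathcal{H}_{out}\otimes\mathcal{H}_{R})$, forcing that element to be $0$ and hence risk $1$, not $0$ — whereas on $\left\vert \psi\right\rangle $ itself, $\mathcal{U}(\psi)$ is tailored so that all the output states remain mutually compatible with a single perfect test, giving risk $0$. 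Pinning down that span computation (equivalently, that $\{\,U\otimes\mathbf{1}\left\vert \Phi_{2}\right\rangle : U\in\mathcal{U}(\psi)\,\}$ together with $\left\vert \Phi_{2}\right\rangle$ spans $\mathcal{H}_{out}\otimes\mathcal{H}_{R}$) is the real content of the argument; everything else is bookkeeping with Lemma\thinspace\ref{lem:sufficient}.
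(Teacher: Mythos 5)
Your first step (universal optimality of $\left\vert \Phi_{2}\right\rangle$ via Example~\ref{ex:su2} and Theorem~\ref{th:unital-qubit}) agrees with the paper. The strictness step, however, is built on the wrong decision problem, and the ``fix'' you sketch is not repairable. You test $U=\mathbf{1}$ against $\mathcal{U}\left(\psi\right)$, i.e.\ you use $l^{\psi}$ tailored to the \emph{competitor} state. As you yourself compute, $\mathrm{tr}\,\rho_{\psi}U=0$ forces $u_{0}=0$, so every $U\in\mathcal{U}\left(\psi\right)$ is traceless; hence $\mathcal{U}\left(\psi\right)\subseteq\mathcal{U}\left(\Phi_{2}\right)$ and $\left\langle \Phi_{2}\right\vert U\otimes\mathbf{1}\left\vert \Phi_{2}\right\rangle =\tfrac12\mathrm{tr}\,U=0$ for all such $U$. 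Consequently the measurement $\left\{ \left\vert \Phi_{2}\right\rangle \left\langle \Phi_{2}\right\vert ,\mathbf{1}-\left\vert \Phi_{2}\right\rangle \left\langle \Phi_{2}\right\vert \right\}$ already achieves risk $0$ for $l^{\psi}$ on the input $\left\vert \Phi_{2}\right\rangle$: it is irrelevant that the images $U\otimes\mathbf{1}\left\vert \Phi_{2}\right\rangle$ are mutually non-orthogonal, because zero risk only requires each of them to be orthogonal to the single accept vector $\left\vert \Phi_{2}\right\rangle$, not to each other. So your claim that the accept-POVM element must vanish is false (and, incidentally, the relevant set of unitaries is a circle $\vec{u}\perp\vec{r}$, $\left\vert \vec{u}\right\vert =1$, whose images together with $\left\vert \Phi_{2}\right\rangle$ span only a three-dimensional subspace). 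There is also a structural warning sign you should have caught: a problem on which the risk is $0$ for $\left\vert \psi\right\rangle$ but strictly positive for $\left\vert \Phi_{2}\right\rangle$ would contradict the relation $\left\vert \Phi_{2}\right\rangle \succeq^{c}\left\vert \psi\right\rangle$ you established in the first step, since that relation forces the inequality between optimal risks to go the other way on \emph{every} problem. To refute $\left\vert \psi\right\rangle \succeq^{c}\left\vert \Phi_{2}\right\rangle$ you must exhibit a problem where $\left\vert \psi\right\rangle$ does strictly \emph{worse}, not better.

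The paper does exactly this by choosing the test adapted to $\left\vert \Phi_{2}\right\rangle$, namely $l^{\Phi_{2}}$, i.e.\ $U=\mathbf{1}$ versus the full set of traceless $U\in\mathrm{SU}\left(2\right)$. On the input $\left\vert \Phi_{2}\right\rangle$ the risk is $0$ for every prior, by the orthogonality just described. On a non-maximally-entangled $\left\vert \psi\right\rangle =\sqrt{p_{1}}\left\vert 1\right\rangle \left\vert 1\right\rangle +\sqrt{p_{2}}\left\vert 2\right\rangle \left\vert 2\right\rangle$ with $p_{1}\neq p_{2}$, the traceless unitary $U=\mathrm{diag}\left(\sqrt{-1},-\sqrt{-1}\right)$ in the Schmidt basis satisfies $\left\langle \psi\right\vert U\otimes\mathbf{1}\left\vert \psi\right\rangle =\sqrt{-1}\left(p_{1}-p_{2}\right)\neq0$, so under a prior charging $\left\{ \mathbf{1},U\right\}$ the two output states cannot be distinguished without error and the optimal risk is strictly positive; hence $\left\vert \psi\right\rangle \not\succeq^{c}\left\vert \Phi_{2}\right\rangle$. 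In short, the asymmetry that yields strictness is $\mathcal{U}\left(\Phi_{2}\right)\supsetneq\mathcal{U}\left(\psi\right)$ (compare Proposition~\ref{prop:include}); by testing against $\mathcal{U}\left(\psi\right)$, the smaller set, you chose a problem both inputs solve perfectly, which can separate nothing.
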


\begin{proof}
Consider the loss function $l^{\Phi_{2}}$.Then, since $\left\vert \Phi
_{2}\right\rangle $ and $U\otimes\mathbf{1}\left\vert \Phi_{2}\right\rangle $
($U\in\mathcal{U}\left(  \Phi_{2}\right)  $) are orthogonal, for any prior
distribution $\pi$,
\[
\min_{M}R\left(  l^{\Phi_{2}},M,\pi,\left\vert \Phi_{2}\right\rangle \right)
=0.
\]
This is not the case if the input $\left\vert \psi\right\rangle $ is not
maximally entangled. Indeed, any \ $U\in\mathrm{SU}\left(  2\right)  $ satisfy
$\left\vert U_{11}\right\vert =\left\vert U_{22}\right\vert $. Without loss of
generality, suppose the Schmidt basis of $\left\vert \psi\right\rangle
=\sqrt{p_{1}}\left\vert 1\right\rangle \left\vert 1\right\rangle +\sqrt{p_{2}%
}\left\vert 2\right\rangle \left\vert 2\right\rangle $, where $p_{1}\neq
p_{2}$ . Then, the inner product between $\left\vert \psi\right\rangle $ and
$U\otimes\mathbf{1}\left\vert \psi\right\rangle $ equals%
\[
\left\langle \psi\right\vert U\otimes\mathbf{1}\left\vert \psi\right\rangle
=p_{1}U_{11}+p_{2}U_{22}.
\]
But this cannot equal to $0$ because of $p_{1}\neq p_{2}$ and $\left\vert
U_{11}\right\vert =\left\vert U_{22}\right\vert $. Therefore,
\[
\min_{M}R\left(  l^{\Phi_{2}},M,\pi,\left\vert \psi\right\rangle \right)
\neq0.
\]

\end{proof}

\subsection{Tests on $\mathrm{SU}\left(  d\right)  $ ($d\geq3$)}

\label{subsec:sud}

This subsection is devoted to the proof of the following theorem.

\begin{theorem}
\label{th:su(d)}Consider the channel family $\left\{  \Upsilon_{U}%
\,;\,U\in\mathrm{SU}\left(  d\right)  \right\}  $, where $d\geq3$. Then, any
$\left\vert \psi\right\rangle \in\mathcal{H}_{in}\otimes\mathcal{H}_{R}$ is
strictly admissible.
\end{theorem}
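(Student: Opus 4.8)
The plan is to show that every $\left\vert \psi\right\rangle \in\mathcal{H}_{in}\otimes\mathcal{H}_{R}$ is \emph{undominated} by exhibiting, for each $\left\vert \psi\right\rangle$, a single statistical decision problem on which $\left\vert \psi\right\rangle$ strictly beats every other input. The natural candidate is the binary test $l^{\psi}$ already defined just before the theorem: $U=\mathbf{1}$ versus $U\in\mathcal{U}\left(  \psi\right)$, where $\mathcal{U}\left(  \psi\right)  :=\left\{  U\in\mathrm{SU}\left(  d\right)  ;\,\mathrm{tr}\,\rho_{\psi}U=0\right\}$. On input $\left\vert \psi\right\rangle$ the inner product $\left\langle \psi\right\vert U\otimes\mathbf{1}\left\vert \psi\right\rangle = \mathrm{tr}\,\rho_{\psi}U = 0$ for every $U\in\mathcal{U}\left(  \psi\right)$, so $\Upsilon_{U}\otimes\mathbf{I}\left(  \left\vert \psi\right\rangle\left\langle \psi\right\vert\right)$ is orthogonal to $\Upsilon_{\mathbf{1}}\otimes\mathbf{I}\left(  \left\vert \psi\right\rangle\left\langle \psi\right\vert\right)$, and the projection measurement onto the Schmidt support of $U\otimes\mathbf{1}\left\vert \psi\right\rangle$ versus its complement achieves zero risk: $\min_{M} R\left(  l^{\psi},M,\pi,\left\vert \psi\right\rangle\right)=0$ for every prior $\pi$. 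This is the $d=2$ argument verbatim.

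The heart of the matter is to show that \emph{no other} input $\left\vert \varphi\right\rangle$ achieves zero risk on $l^{\psi}$, i.e.\ that there exists $U\in\mathcal{U}\left(  \psi\right)$ with $\mathrm{tr}\,\rho_{\varphi}U\neq 0$, for every $\left\vert \varphi\right\rangle\not\equiv^{c}\left\vert \psi\right\rangle$ — indeed I want the sharper statement that zero risk on $l^{\psi}$ for \emph{all} priors forces $\rho_{\varphi}=\rho_{\psi}$ (up to unitary on $\mathcal{H}_{R}$, hence universal equivalence). First I would reduce to reduced density matrices: since $\Upsilon_{U}$ acts only on $\mathcal{H}_{in}$, the relevant data is the inner-product function $U\mapsto\mathrm{tr}\,\rho_{\varphi}U$ on $\mathrm{SU}\left(  d\right)$. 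Zero Bayes risk for every prior supported on $\mathcal{U}\left(  \psi\right)\cup\left\{  \mathbf{1}\right\}$ forces, for each individual $U\in\mathcal{U}\left(  \psi\right)$, that $\Upsilon_{U}\otimes\mathbf{I}\left(  \left\vert \varphi\right\rangle\left\langle \varphi\right\vert\right)$ be perfectly distinguishable from $\left\vert \varphi\right\rangle\left\langle \varphi\right\vert$, which (by Uhlmann) means $\mathrm{tr}\,\rho_{\varphi}U=0$. So I must show: $\mathcal{U}\left(  \psi\right)\subseteq\mathcal{U}\left(  \varphi\right)$ implies $\rho_{\varphi}\propto\rho_{\psi}$, using $d\geq3$.

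The key linear-algebra step is that the \emph{real linear span} of $\mathcal{U}\left(  \psi\right)$, together with $\mathbf{1}$, is all of $\mathcal{B}\left(\mathbb{C}^{d}\right)$ whenever $d\geq 3$ — equivalently, the only density matrices $\rho$ with $\mathrm{tr}\,\rho U=0$ for all $U\in\mathcal{U}\left(  \psi\right)$ are multiples of $\rho_{\psi}$. Concretely, $\mathcal{U}\left(  \psi\right)$ is the set of $U\in\mathrm{SU}\left(  d\right)$ orthogonal to $\rho_{\psi}$ in the Hilbert–Schmidt pairing $\langle\rho_{\psi},U\rangle = \mathrm{tr}\,\rho_{\psi}U$; I want its span to be exactly the Hilbert–Schmidt orthogonal complement of $\rho_{\psi}$ inside $\mathcal{B}\left(\mathbb{C}^{d}\right)$ (a codimension-one real subspace once we also throw in that $\mathrm{tr}\,\rho_{\psi}U$ ranges over a disk). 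For $d\geq 3$ this should follow because the constraint $\mathrm{tr}\,\rho_{\psi}U=0$ is one (complex) equation on the $(d^{2}-1)$-real-dimensional manifold $\mathrm{SU}\left(  d\right)$, cutting out a large connected submanifold whose tangent directions already span the orthogonal complement; a clean way is to diagonalize $\rho_{\psi}=\mathrm{diag}\left(p_{1},\dots,p_{d}\right)$ and explicitly produce enough special unitaries (permutation-type matrices times diagonal phases, and $2\times2$ rotation blocks in coordinates $i,j$ with appropriate phases) that are traceless against $\rho_{\psi}$ and whose span is everything — here $d\geq 3$ is exactly what gives enough room, since for $d=2$ the constraint $|U_{11}|=|U_{22}|$ obstructs this (that is the content of the previous subsection). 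I expect this spanning/dimension-count to be the main obstacle: one must handle the case $p_{1}\neq p_{2}$ carefully (the obstruction genuinely present at $d=2$ must be shown to disappear at $d=3$) and, when $\rho_{\psi}$ is degenerate, verify that the extra unitaries acting within an eigenspace still give traceless pairings. Once the span is all of $\mathcal{B}\left(\mathbb{C}^{d}\right)$, $\mathcal{U}\left(  \psi\right)\subseteq\mathcal{U}\left(  \varphi\right)$ forces $\rho_{\varphi}\perp$ (span$\,\mathcal{U}\left(  \psi\right)\ominus\mathbb{R}\rho_{\psi}$), hence $\rho_{\varphi}\propto\rho_{\psi}$, hence $\rho_{\varphi}=\rho_{\psi}$ by normalization, hence $\left\vert \varphi\right\rangle\equiv^{c}\left\vert \psi\right\rangle$ — so $\left\vert \psi\right\rangle$ is not strictly dominated by anything, i.e.\ it is strictly admissible.
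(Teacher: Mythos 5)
Your overall skeleton is the same as the paper's: you run the test with loss $l^{\psi}$ ($U=\mathbf{1}$ versus $U\in\mathcal{U}\left(\psi\right)$), note that $\left\vert\psi\right\rangle$ attains zero risk for every prior, and observe that any competitor $\left\vert\varphi\right\rangle$ which is at least as good on all of these problems must satisfy $\mathrm{tr}\,\rho_{\varphi}U=0$ for each $U\in\mathcal{U}\left(\psi\right)$, i.e. $\mathcal{U}\left(\psi\right)\subseteq\mathcal{U}\left(\varphi\right)$; this is exactly the role of Proposition\thinspace\ref{prop:include}. So, as in the paper, everything reduces to the implication $\mathcal{U}\left(\psi\right)\subseteq\mathcal{U}\left(\varphi\right)\Rightarrow\rho_{\varphi}=\rho_{\psi}$ for $d\geq3$. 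For that implication you propose a route different from the paper's: you want $\mathrm{span}\,\mathcal{U}\left(\psi\right)$ (plus $\mathbf{1}$) to be all of $\mathcal{B}\left(\mathbb{C}^{d}\right)$, i.e. the Hilbert--Schmidt annihilator of $\mathcal{U}\left(\psi\right)$ to be exactly the multiples of $\rho_{\psi}$, whereas the paper proves only that no \emph{other density matrix} annihilates $\mathcal{U}\left(\psi\right)$, via the sets $Ang\left(x\right)$ of diagonal vectors (Lemma\thinspace\ref{lem:parallel}, whose proof takes the whole of Appendix A) followed by Lemma\thinspace\ref{lem:kernel} to pass from equality of diagonals in all bases to equality of matrices.

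The gap is that this spanning claim, which is where the entire difficulty of the theorem lives, is asserted rather than proved; you yourself flag it as "the main obstacle" and give only a one-line plan. The claim does appear to be true and your toolkit is the right one: generalized permutation matrices supported on derangements lie in $\mathcal{U}\left(\psi\right)$ for any $\rho_{\psi}$ diagonal in the chosen basis, and for $d\geq3$ their span contains every off-diagonal matrix unit, so any annihilating $B$ must be diagonal; one must then show that the achievable diagonal vectors $\left(U_{11},\cdots,U_{dd}\right)$ of elements of $\mathcal{U}\left(\psi\right)$ admit no annihilating diagonal functional besides multiples of $\left(p_{1},\cdots,p_{d}\right)$, which can be done with two-by-two block families of the form $e^{\sqrt{-1}\delta/2}W\oplus e^{-\sqrt{-1}\delta}$, $W\in\mathrm{SU}\left(2\right)$, when the two eigenvalues in the block are distinct, but needs separate treatment when $\rho_{\psi}$ has degenerate or vanishing eigenvalues and an induction (or a uniform argument) for general $d$. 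That last statement about diagonal vectors is essentially the content of the paper's Lemma\thinspace\ref{lem:parallel}, so as written your proposal replaces the paper's hardest lemma by an unproven assertion of comparable depth (indeed a strictly stronger one, since you want the annihilator to contain no matrix at all outside $\mathbb{C}\rho_{\psi}$, not merely no other state). Until that spanning step is carried out in full generality, the proof is incomplete; the surrounding reduction, and the final step that $\rho_{\varphi}=\rho_{\psi}$ gives $\left\vert\varphi\right\rangle\equiv^{c}\left\vert\psi\right\rangle$ and hence non-domination, are fine.
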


First, we introduce a series of propositions and lemmas.

\begin{proposition}
\label{prop:include}Consider the channel family $\left\{  \Upsilon_{U}%
;U\in\mathrm{SU}\left(  d\right)  \right\}  $. Then, if $\mathcal{U}\left(
\psi\right)  -\mathcal{U}\left(  \psi^{\prime}\right)  \neq\emptyset$,
\[
\min_{M}R\left(  l^{\psi},M,\pi,\left\vert \psi^{\prime}\right\rangle \right)
>\min_{M}R\left(  l^{\psi},M,\pi,\left\vert \psi\right\rangle \right)
,\,\exists\pi.
\]

\end{proposition}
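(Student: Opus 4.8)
The plan is to choose a prior $\pi$ supported on only two elements of $\mathrm{SU}\left(d\right)$ --- the identity $\mathbf{1}$ and one witness $U_{0}$ of the hypothesis $\mathcal{U}\left(\psi\right)-\mathcal{U}\left(\psi^{\prime}\right)\neq\emptyset$ --- so that the decision problem attached to $l^{\psi}$ degenerates to a binary discrimination of two pure states, whose minimal Bayes risk is the Helstrom bound and can be written in closed form; the two inputs then differ precisely in whether that bound vanishes.

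First I would record the elementary computation for $\{\Upsilon_{U}\}$: for any unit vector $\left\vert\chi\right\rangle\in\mathcal{H}_{in}\otimes\mathcal{H}_{R}$ the output $\Upsilon_{U}\otimes\mathbf{I}\left(\left\vert\chi\right\rangle\left\langle\chi\right\vert\right)=\left(U\otimes\mathbf{1}\right)\left\vert\chi\right\rangle\left\langle\chi\right\vert\left(U^{\dagger}\otimes\mathbf{1}\right)$ is again a pure state, and its overlap with the output at $U=\mathbf{1}$ is $\left\langle\chi\right\vert U\otimes\mathbf{1}\left\vert\chi\right\rangle=\mathrm{tr}\,\rho_{\chi}U$. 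Next, fix $U_{0}\in\mathcal{U}\left(\psi\right)\setminus\mathcal{U}\left(\psi^{\prime}\right)$; since $\mathrm{tr}\,\rho_{\psi}\mathbf{1}=1\neq0$ we have $U_{0}\neq\mathbf{1}$, so $\pi:=\frac{1}{2}\left(\delta_{\mathbf{1}}+\delta_{U_{0}}\right)$ is a genuine probability measure on two distinct points. Because $l^{\psi}$ vanishes off $\mathcal{U}\left(\psi\right)\cup\{\mathbf{1}\}$ while $U_{0}\in\mathcal{U}\left(\psi\right)$, for any input $\left\vert\chi\right\rangle$ the Bayes risk $R\left(l^{\psi},M,\pi,\left\vert\chi\right\rangle\right)$ is exactly the $\frac{1}{2}$-weighted average error probability of the two-outcome test that decides between $\left\vert\chi\right\rangle$ (the output under $U=\mathbf{1}$) and $\left(U_{0}\otimes\mathbf{1}\right)\left\vert\chi\right\rangle$ (the output under $U=U_{0}$); minimizing over POVMs $M$ valued in $\{0,1\}$ yields the Helstrom value
\[
\min_{M}R\left(l^{\psi},M,\pi,\left\vert\chi\right\rangle\right)=\frac{1}{2}\left(1-\sqrt{1-\left\vert\mathrm{tr}\,\rho_{\chi}U_{0}\right\vert^{2}}\,\right).
\]

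The conclusion is then immediate. For $\chi=\psi$, membership $U_{0}\in\mathcal{U}\left(\psi\right)$ gives $\mathrm{tr}\,\rho_{\psi}U_{0}=0$, so the right-hand side is $0$ (equivalently $\left\vert\psi\right\rangle$ and $\left(U_{0}\otimes\mathbf{1}\right)\left\vert\psi\right\rangle$ are orthogonal, hence perfectly distinguishable). For $\chi=\psi^{\prime}$, non-membership $U_{0}\notin\mathcal{U}\left(\psi^{\prime}\right)$ gives $\mathrm{tr}\,\rho_{\psi^{\prime}}U_{0}\neq0$, so the right-hand side is strictly positive (equivalently the two pure states are non-orthogonal and cannot be told apart with zero error). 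Hence, for this $\pi$,
\[
\min_{M}R\left(l^{\psi},M,\pi,\left\vert\psi^{\prime}\right\rangle\right)>0=\min_{M}R\left(l^{\psi},M,\pi,\left\vert\psi\right\rangle\right),
\]
which is the assertion.

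I do not expect a genuine obstacle; the step closest to one is the bookkeeping in two places: (i) checking that cutting $\pi$ down to $\{\mathbf{1},U_{0}\}$ really turns $l^{\psi}$ into the zero--one loss of a binary test --- this is where it matters that $\mathbf{1}$ is the null and $U_{0}$ lies in the alternative $\mathcal{U}\left(\psi\right)$ of $l^{\psi}$, and that $l^{\psi}$ is $0$ for all other $U$ --- and (ii) the standard evaluation of the minimal Bayes error of two equiprobable pure states $\left\vert\phi_{0}\right\rangle,\left\vert\phi_{1}\right\rangle$ as $\frac{1}{2}\left(1-\sqrt{1-\left\vert\left\langle\phi_{0}\right.\left\vert\phi_{1}\right\rangle\right\vert^{2}}\right)$, here applied with $\left\langle\phi_{0}\right.\left\vert\phi_{1}\right\rangle=\mathrm{tr}\,\rho_{\chi}U_{0}$. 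If one prefers to avoid the explicit formula, only the two qualitative facts used above are really needed: orthogonal pure states are perfectly distinguishable, while non-orthogonal ones cannot be discriminated with zero error.
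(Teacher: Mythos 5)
Your proposal is correct and follows essentially the same route as the paper: both rest on the observation that for the input $\left\vert \psi\right\rangle$ the outputs at $\mathbf{1}$ and at any $U\in\mathcal{U}\left(\psi\right)$ are orthogonal (hence zero Bayes risk), while for $\left\vert \psi^{\prime}\right\rangle$ and a witness $U_{0}\in\mathcal{U}\left(\psi\right)-\mathcal{U}\left(\psi^{\prime}\right)$ they are non-orthogonal, so perfect discrimination fails and the risk is strictly positive for a suitable prior. Your only additions are making the prior explicit as the two-point measure $\frac{1}{2}\left(\delta_{\mathbf{1}}+\delta_{U_{0}}\right)$ and quoting the Helstrom value, which the paper's qualitative argument does not need but which does no harm.
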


\begin{proof}
By the definition of $\mathcal{U}\left(  \psi\right)  $, there is a projective
binary measurement $\left\{  M_{0},M_{1}\right\}  $ which distinguishes
$\left\vert \psi\right\rangle $ and $\left\{  U\otimes\mathbf{1}\left\vert
\psi\right\rangle ;U\in\mathcal{U}\left(  \psi\right)  \right\}  $ without
error. Therefore, by the definition of $l^{\psi}$, \
\[
\min_{M}R\left(  l^{\psi},M,\pi,\left\vert \psi\right\rangle \right)
=0,\,\forall\pi.
\]
If $U\in\mathcal{U}\left(  \psi\right)  -\mathcal{U}\left(  \psi^{\prime
}\right)  \neq\emptyset$, \ then $\left\vert \psi^{\prime}\right\rangle $ and
$U\otimes\mathbf{1}\left\vert \psi^{\prime}\right\rangle $ can not be
distinguished perfectly. Therefore, \
\[
\min_{M}R\left(  l^{\psi},M,\pi,\left\vert \psi^{\prime}\right\rangle \right)
>0,\,\exists\pi.
\]
Therefore, we have the assertion.
\end{proof}

\ 

For $x\in%
\mathbb{R}
^{d}$, define
\[
Ang\left(  x\right)  :=\left\{  \vec{\omega};\vec{\omega}\in%
\mathbb{C}
^{d-1},\left\vert \omega_{i}\right\vert =1,\sum_{i=1}^{d-1}x_{i}\omega
_{i}+x_{d}=0\right\}  .
\]
Also, if
\[
x_{i_{1}}\geq x_{i_{2}}\geq\cdots\geq x_{i_{d}},
\]%
\[
x_{j}^{\downarrow}:=x_{i_{j}},\,\,x^{\downarrow}:=\left(  x_{1}^{\downarrow
},x_{2}^{\downarrow},\cdots,x_{d}^{\downarrow}\right)  .
\]

The proof of the following lemma will be given in Appendix\thinspace
\ref{appendix:pf-lem}

\begin{lemma}
\label{lem:parallel}Suppose \ $d\geq3$, $x_{i},x_{i}^{\prime}>0$
($i=1,\cdots,d$),\thinspace\ $x_{1}^{\downarrow}>\sum_{i=2}^{d}x_{i}%
^{\downarrow}$, and $x_{1}^{\prime\downarrow}>\sum_{i=2}^{d}x_{i}^{^{\prime
}\downarrow}$. If $Ang\left(  x\right)  \subset Ang\left(  x^{\prime}\right)
$, then $x^{\prime}=sx$ for some $s\in%
\mathbb{R}
$.
\end{lemma}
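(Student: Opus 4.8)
The plan is to reduce the statement to a counting/geometry argument about which sign patterns $\vec\omega$ satisfy the linear relation $\sum_{i=1}^{d-1} x_i\omega_i + x_d = 0$ on the torus $|\omega_i|=1$. The hypothesis $x_1^{\downarrow}>\sum_{i=2}^{d} x_i^{\downarrow}$ is a ``dominant coordinate'' condition: one entry of $(x_1,\dots,x_d)$ strictly exceeds the sum of all the others. This is exactly the condition under which the polygon inequality fails for the $d$ vectors $x_i\omega_i$ (and $x_d$), so $Ang(x)$ is small and rigid; the idea is that membership of enough points of $Ang(x)$ in $Ang(x')$ forces the coefficient vectors to be proportional.

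First I would normalize: by relabelling we may assume the dominant coordinate is $x_d$ itself (if it is some $x_j$ with $j\le d-1$, solve the relation for $\omega_j$ instead; the set $Ang(x)$ is then described symmetrically, and I would note that the hypothesis on $x'$ places its dominant coordinate in the same slot once we know the two are proportional — so I would first handle the case where $x_d,x'_d$ are the dominant coordinates and afterwards argue the dominant slots must coincide). With $x_d$ dominant, the relation $\sum_{i=1}^{d-1}x_i\omega_i=-x_d$ has solutions only when the $\omega_i$ are ``aligned enough'': writing $\omega_i=e^{\sqrt{-1}\,\varphi_i}$, the left side is a sum of $d-1$ planar vectors of lengths $x_1,\dots,x_{d-1}$, and its modulus can reach $x_d<\sum_{i<d}x_i$ but the set of admissible angle tuples is a genuine hypersurface in the $(d-1)$-torus. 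The key step is to produce, inside $Ang(x)$, a collection of points rich enough to pin down $x'$: for instance, fix all but two of the $\omega_i$, say $\omega_k,\omega_\ell$, to convenient roots of unity (I can choose $\pm1$ using $d\ge 3$ so there are at least two free coordinates beyond the one solved for), and let $(\omega_k,\omega_\ell)$ range over the curve determined by the two remaining degrees of freedom. Each such one-parameter family lies in $Ang(x)$, hence in $Ang(x')$, which forces $x'$ to satisfy the \emph{same} two-variable relation for those parameter values; comparing, $x'_k/x_k=x'_\ell/x_\ell$ and the constant terms match up. Ranging over all choices of the pair $\{k,\ell\}$ gives $x'_i/x_i$ independent of $i$, i.e. $x'=sx$.

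The step I expect to be the main obstacle is making the ``rich enough family of points in $Ang(x)$'' argument precise — in particular checking that the chosen sub-curves actually lie in $Ang(x)$ (this is where the dominance inequality $x_1^{\downarrow}>\sum_{i\ge2}x_i^{\downarrow}$ must be used, to guarantee the remaining two-vector subproblem $x_k\omega_k+x_\ell\omega_\ell=-(x_d+\text{fixed terms})$ is solvable, i.e. the target length lies in $[\,|x_k-x_\ell|,\,x_k+x_\ell\,]$), and that from equality of two one-parameter families one may legitimately conclude equality of the coefficients rather than merely of the curves. A clean way to finish is: two relations $a\omega_k+b\omega_\ell=r$ and $a'\omega_k+b'\omega_\ell=r'$ holding for a whole arc of $(\omega_k,\omega_\ell)$ (an arc, not a point — here $d\ge3$ is essential, since for $d=2$ the set $Ang(x)$ can be a single point and the rigidity fails) force $(a',b',r')$ to be a real scalar multiple of $(a,b,r)$; differentiating along the arc, or evaluating at three points of it, makes this elementary. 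I would also need a short remark ruling out the degenerate possibility that some $x_i'$ vanishes or changes the dominant slot, which follows from $x_i'>0$ and from proportionality once established. The appendix reference \ref{appendix:pf-lem} indicates the authors indeed relegate these verifications to a detailed appendix, consistent with this being the technically fussiest point.
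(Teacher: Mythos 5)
Your reading of the hypothesis, and hence of the geometry of $Ang\left(x\right)$, is the first problem. You take the printed inequality $x_{1}^{\downarrow}>\sum_{i=2}^{d}x_{i}^{\downarrow}$ at face value and build the whole plan on ``one coordinate dominates, the polygon inequality fails, so $Ang\left(x\right)$ is small and rigid.'' But by Lemma\thinspace\ref{lem:polygon}, failure of the polygon inequality means $Ang\left(x\right)=\emptyset$; then $Ang\left(x\right)\subset Ang\left(x^{\prime}\right)$ holds for every $x^{\prime}$ and the conclusion $x^{\prime}=sx$ is simply false, so no proof could exist under that reading. The inequality is a typo for the strict polygon inequality $x_{1}^{\downarrow}<\sum_{i=2}^{d}x_{i}^{\downarrow}$: the appendix proof runs through Lemma\thinspace\ref{lem:smooth} (which assumes $x_{1}<\sum_{i=2}^{d}x_{i}$ and makes $Ang\left(x\right)$ a nonempty $(d-3)$-dimensional manifold), and the application in Theorem\thinspace\ref{th:su(d)} restates the condition as $0<x_{i}<1/2$. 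Under the intended hypothesis there is no dominant coordinate at all, so your normalization step (moving the dominant entry to the last slot, later arguing the dominant slots coincide) has no content, and your own construction tacitly needs $Ang\left(x\right)$ to be rich in points, which contradicts the ``dominance'' picture you start from.

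Even granting the corrected hypothesis, the engine of your argument does not work as stated. The defining relation $\sum_{i=1}^{d-1}x_{i}\omega_{i}+x_{d}=0$ is one complex, i.e.\ two real, equations; after freezing $d-3$ of the phases at $\pm1$, the residual constraint $x_{k}\omega_{k}+x_{\ell}\omega_{\ell}=-c$ with $c$ real cuts the two remaining circles in a generically finite set (a conjugate pair of points), not a one-parameter arc, and for $d=3$ there is nothing to freeze and $Ang\left(x\right)$ is itself just a conjugate pair. So your ``clean finish'' (two linear relations agreeing on a whole arc must be proportional) has nothing to act on. A conjugate pair of non-real solutions would in fact suffice (imaginary parts give $x_{k}^{\prime}/x_{k}=x_{\ell}^{\prime}/x_{\ell}$, real parts handle the constant term), but then the entire burden becomes showing that for enough pairs $\left(k,\ell\right)$ some sign pattern makes the two-term subproblem solvable and non-degenerate (avoiding $c=0$ and $\left\vert c\right\vert=\left\vert x_{k}-x_{\ell}\right\vert$ or $x_{k}+x_{\ell}$) — precisely the step you defer as ``the main obstacle,'' and it is not routine. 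The paper's proof avoids this specialization entirely: it uses the manifold structure from Lemma\thinspace\ref{lem:smooth}, differentiates both relations along smooth curves in $Ang\left(x\right)$ to place the four vectors $\left(x_{i}\cos\eta_{i}\right)$, $\left(x_{i}\sin\eta_{i}\right)$, $\left(x_{i}^{\prime}\cos\eta_{i}\right)$, $\left(x_{i}^{\prime}\sin\eta_{i}\right)$ in a two-dimensional orthogonal complement of the tangent space, and invokes Lemma\thinspace\ref{lem:other-than-triangle} (a point whose angles take at least three distinct values) to make the first two a basis, after which a short trigonometric identity forces $x^{\prime}=a_{1}x$; the $d=3$ case is the elementary triangle computation. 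As it stands, your proposal has genuine gaps both in the foundational reading and in the key construction.
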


\begin{lemma}
\label{lem:kernel}Let $\rho_{0},\rho_{1}\in\mathcal{B}\left(
\mathbb{C}
^{d}\right)  $, and $\left\{  \left\vert i\right\rangle \right\}  _{i=1}^{d}$
be an orthonormal basis in $%
\mathbb{C}
^{d}$. \ Suppose an Hermitian matrix $A$ satisfies
\[
\left\langle i\right\vert U^{\dagger}AU\left\vert i\right\rangle =0,\left(
i=0,\cdots,d\right)  \,,
\]
for any unitary $U\in U\left(  d\right)  $ such that
\begin{equation}
\,0<\left\langle i\right\vert U^{\dagger}\rho_{j}U\left\vert i\right\rangle
<\frac{1}{2}\,\,\left(  i=1,\cdots,d,\,j=0,1\right)  . \label{u-constrain}%
\end{equation}
Then we have%
\[
A=0.
\]

\end{lemma}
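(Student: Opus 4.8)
The plan is to show that the family of unitaries $U$ satisfying the open constraint \eqref{u-constrain} is rich enough that the conditions $\langle i\vert U^\dagger AU\vert i\rangle=0$ pin down all entries of $A$ in a fixed basis, forcing $A=0$. First I would fix attention on two basis vectors, say $\vert 1\rangle$ and $\vert 2\rangle$, and look at unitaries $U$ that act as a $2\times 2$ block on $\mathrm{span}\{\vert 1\rangle,\vert 2\rangle\}$ and (close to) the identity on the orthogonal complement, so that the constraint \eqref{u-constrain} is satisfied by continuity — the point being that if we start from a $U_0$ for which all the quantities $\langle i\vert U_0^\dagger\rho_j U_0\vert i\rangle$ lie strictly in $(0,\tfrac12)$ (such a $U_0$ exists: take any $U_0$ close enough to a "generic" unitary that spreads $\rho_0,\rho_1$ evenly, which is possible since $d\geq 2$ and the constraint is an open condition on a nonempty set), then a whole neighborhood of $U_0$ in $U(d)$ still satisfies \eqref{u-constrain}.

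Next I would exploit that on such a neighborhood the equations $\langle i\vert U^\dagger A U\vert i\rangle = 0$ hold identically. Writing $U = U_0 e^{itH}$ for small $t$ and Hermitian $H$, and differentiating in $t$ at $t=0$, one gets $\langle i\vert [\,U_0^\dagger A U_0,\, iH\,]\vert i\rangle = 0$ for all Hermitian $H$ and all $i$. Taking $H$ to be a suitable rank-two Hermitian matrix $\vert i\rangle\langle k\vert + \vert k\rangle\langle i\vert$ (or its imaginary counterpart) isolates the off-diagonal matrix element: it yields $(U_0^\dagger A U_0)_{ik} = 0$ for every $i\neq k$. Hence $B:=U_0^\dagger A U_0$ is diagonal. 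Then varying $U_0$ itself within the (open, hence positive-dimensional) set of admissible unitaries — or equivalently applying the same first-order argument around several nearby base points whose conjugating unitaries differ — the diagonal matrices $U_0^\dagger A U_0$ must stay diagonal as $U_0$ ranges over an open set, and a diagonal matrix that stays diagonal under conjugation by a nontrivial path of unitaries must be a scalar; combined with $\mathrm{tr}\,A = \sum_i \langle i\vert U_0^\dagger A U_0\vert i\rangle = 0$ this forces $A=0$.

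The main obstacle I expect is verifying nonemptiness and openness of the set of admissible $U$ carefully enough to justify the first-order (derivative) argument: one must confirm that there really is a $U_0$ with all $d$ diagonal entries of both $U_0^\dagger\rho_0 U_0$ and $U_0^\dagger\rho_1 U_0$ strictly between $0$ and $\tfrac12$ — this uses $d\geq 3$ in an essential way, since with $d\geq 3$ one has enough room to simultaneously equalize (approximately) the diagonals of two density matrices below $\tfrac12$, whereas for $d=2$ a pure state has a diagonal entry equal to $1$ in some basis and the $<\tfrac12$ bound can fail. Once that geometric fact is in hand, the differentiation argument and the "diagonal-under-an-open-set-of-conjugations implies scalar" step are routine. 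I would also keep track that the quantifier is "for any $U$ satisfying \eqref{u-constrain}", so it suffices to use only the unitaries in one such neighborhood; I do not need the full conjugation orbit, just enough directions $H$ to kill every off-diagonal entry and enough movement of the base point to kill the nonconstant part of the diagonal.
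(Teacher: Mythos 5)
Your proposal is correct and follows essentially the same route as the paper: use openness of the constraint (\ref{u-constrain}) to differentiate $\left\langle i\right\vert U^{\dagger}AU\left\vert i\right\rangle =0$ along arbitrary Hermitian directions, and pick rank-two directions (the paper's $E_{ij}$, $F_{ij}$) to kill every off-diagonal entry of $U^{\dagger}AU$. Your closing step is slightly more roundabout than necessary -- the hypothesis already gives all diagonal entries of $U_{0}^{\dagger}AU_{0}$ equal to zero, so once the off-diagonals vanish you have $U_{0}^{\dagger}AU_{0}=0$ directly, without the ``diagonal under an open set of conjugations implies scalar plus traceless'' detour -- but that detour is also valid.
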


\begin{proof}
Let
\[
E_{ij}:=\left\vert i\right\rangle \left\langle j\right\vert +\left\vert
j\right\rangle \left\langle i\right\vert ,\,F_{ij}:=\sqrt{-1}\left(
\left\vert i\right\rangle \left\langle j\right\vert -\left\vert j\right\rangle
\left\langle i\right\vert \right)  .
\]
Then, there are real numbers $a_{i}$, $b_{ij}$, and $c_{ij}$ with \
\[
U^{\dagger}AU=\sum_{i}a_{i}\left\vert i\right\rangle \left\langle i\right\vert
+\sum_{i>j}\left(  b_{ij}E_{ij}+c_{ij}F_{ij}\right)  .
\]
Due to $\left\langle i\right\vert U^{\dagger}AU\left\vert i\right\rangle =0$,
$a_{i}=0$.

If $U\in\mathrm{U}\left(  d\right)  $ satisfies (\ref{u-constrain}), then any
member of neighborhood of $U$ satisfies (\ref{u-constrain}). Therefore,
\[
\left\langle i\right\vert \left[  U^{\dagger}AU,H\right]  \left\vert
i\right\rangle =0,\,\left(  i=0,\cdots,d\right)  \,,
\]
for any Hermitian matrix $H$. Also,
\begin{equation}
\left[  E_{ij},F_{ij}\right]  =-2\sqrt{-1}\left(  \left\vert i\right\rangle
\left\langle i\right\vert +\left\vert j\right\rangle \left\langle j\right\vert
\right)  ,\nonumber
\end{equation}
and, if $\delta_{ik}\delta_{jl}=\delta_{il}\delta_{jk}=0$,
\[
\left\langle m\right\vert \left[  E_{ij},F_{kl}\right]  \left\vert
m\right\rangle =0.\,\,
\]
Therefore,
\begin{align*}
\left\langle i\right\vert \left[  U^{\dagger}AU,E_{ij}\right]  \left\vert
i\right\rangle  &  =-2\sqrt{-1}c_{ij},\\
\left\langle i\right\vert \left[  U^{\dagger}AU,F_{ij}\right]  \left\vert
i\right\rangle  &  =2\sqrt{-1}b_{ij}.
\end{align*}
Hence, $b_{ij}=c_{ij}=0$. After all, $U^{\dagger}AU=0$, implying $A=0$.
\end{proof}

Now, we are in the position to present the proof of Theorem\thinspace
\ref{th:su(d)}.

\begin{proof}
\textbf{(Theorem\thinspace\ref{th:su(d)})} In view of Proposition\thinspace
\ref{prop:include}, it suffices to show that $\mathcal{U}\left(  \psi\right)
\subset\mathcal{U}\left(  \psi^{\prime}\right)  $ implies $\rho_{\psi^{\prime
}}=\rho_{\psi}$. \ 

Suppose $\mathcal{U}\left(  \psi\right)  \subset\mathcal{U}\left(
\psi^{\prime}\right)  $. Let us define, for $x\in%
\mathbb{R}
^{d},$
\[
\widetilde{Ang}\left(  x\right)  :=\left\{  \vec{\omega};\vec{\omega}\in%
\mathbb{C}
^{d},\left\vert \omega_{i}\right\vert =1,\sum_{i=1}^{d}\omega_{i}x_{i}%
=0,\prod_{i=1}^{d}\omega_{i}=1\right\}  ,
\]
and, for a $d\times d$ matrix $A$,
\[
\mathrm{Diag}\left(  A\right)  :=\left(  A_{11},A_{22},\cdots,A_{dd}\right)
.
\]
Then,
\[
\mathcal{U}\left(  \rho_{\psi}\right)  =\left\{  U\mathrm{diag}\left(
\vec{\omega}\right)  U^{\dagger};U\in U\left(  d\right)  ,\vec{\omega}%
\in\widetilde{Ang}\left(  \mathrm{Diag}\left(  U^{\dagger}\rho_{\psi}U\right)
\right)  \,\right\}  .
\]
Therefore, $\mathcal{U}\left(  \psi\right)  \subset\mathcal{U}\left(
\psi^{\prime}\right)  $ implies
\[
\widetilde{Ang}\left(  \mathrm{Diag}\left(  U^{\dagger}\rho_{\psi}U\right)
\right)  \,\subset\widetilde{Ang}\left(  \mathrm{Diag}\left(  U^{\dagger}%
\rho_{\psi^{\prime}}U\right)  \right)  ,\,
\]
or equivalently
\[
Ang\left(  \mathrm{Diag}\left(  U^{\dagger}\rho_{\psi}U\right)  \right)
\subset Ang\left(  \mathrm{Diag}\left(  U^{\dagger}\rho_{\psi^{\prime}%
}U\right)  \right)
\]
for all $U\in\mathrm{SU}\left(  d\right)  $.

Here we have recourse to Lemma\thinspace\ref{lem:parallel}, with
$x_{i}:=\left\langle i\right\vert U^{\dagger}\rho_{\psi}U\left\vert
i\right\rangle $ and $x_{i}^{\prime}:=\left\langle i\right\vert U^{\dagger
}\rho_{\psi^{\prime}}U\left\vert i\right\rangle $. Suppose $x_{1}^{\downarrow
}>\sum_{i=2}^{d}x_{i}^{\downarrow}$, and $x_{1}^{\prime\downarrow}>\sum
_{i=2}^{d}x_{i}^{^{\prime}\downarrow}$. Due to $\mathrm{tr}\,U^{\dagger}%
\rho_{\psi}U=\mathrm{tr}\,U^{\dagger}\rho_{\psi^{\prime}}U=1$, this is
equivalent to
\begin{align*}
0  &  <\left\langle i\right\vert U^{\dagger}\rho_{\psi}U\left\vert
i\right\rangle <\frac{1}{2},\\
0  &  <\left\langle i\right\vert U^{\dagger}\rho_{\psi^{\prime}}U\left\vert
i\right\rangle <\frac{1}{2},\,\,(i=1,\cdots,d).
\end{align*}
Therefore, by Lemma\thinspace\ref{lem:parallel},
\[
\mathrm{Diag}\left(  U^{\dagger}\rho_{\psi}U\right)  =\mathrm{Diag}\left(
U^{\dagger}\rho_{\psi^{\prime}}U\right)  ,\,\,
\]
which leads to
\[
\mathrm{Diag}\left(  U^{\dagger}\left(  \rho_{\psi}-\rho_{\psi^{\prime}%
}\right)  U\right)  =0.
\]
Therefore, by Lemma\thinspace\ref{lem:kernel}, \ $\rho_{\psi}=\rho
_{\psi^{\prime}}$ . Thus we have the assertion.
\end{proof}

\section{Universal enhancement by entanglement}

\label{sec:entanglement}

\subsection{Subfamily of $\left\{  \Lambda_{2,\theta}^{\mathrm{gp}}\right\}
$}

Observe, in Example\thinspace\ref{ex:diag}, a separable state
\[
\left\vert \psi\right\rangle :=\frac{1}{\sqrt{d}}\sum_{i=1}^{d}\left\vert
i\right\rangle _{in}\left\vert f\right\rangle _{R}\succeq^{c}\left\vert
\Phi_{d}\right\rangle ,
\]
where $\left\vert f\right\rangle _{R}\in\mathcal{H}_{R}$ is arbitrary, is
universally optimal. Combining with the fact that $\left\vert \Phi
_{d}\right\rangle $ is universally optimal (Theorem\thinspace
\ref{th:covariant-opt-input}), we have
\begin{equation}
\left\vert \psi\right\rangle \equiv^{c}\left\vert \Phi_{d}\right\rangle .
\label{equiv}%
\end{equation}

The proof is as follows. For unitary operators $U_{i}\in U\left(
\mathcal{H}_{R}\right)  $ ($i=1,\cdots,d$), let $\ U_{1}\oplus\cdots\oplus
U_{d}$ be the unitary operator acting on $\mathcal{H}_{in}\otimes
\mathcal{H}_{R}=\mathcal{H}_{out}\otimes\mathcal{H}_{R}$ such that
\[
\left(  U_{1}\oplus\cdots\oplus U_{d}\right)  \left\vert i\right\rangle
_{in}\left\vert j\right\rangle _{R}=\left\vert i\right\rangle _{in}%
U_{i}\left\vert j\right\rangle _{R}.
\]
Then, if $U_{i}\left\vert f\right\rangle =\left\vert i\right\rangle $,
\[
\left(  U_{1}\oplus\cdots\oplus U_{d}\right)  \left\vert \psi\right\rangle
=\left\vert \Phi_{d}\right\rangle .
\]
Observe
\[
\left(  \Lambda_{d,\theta}^{\mathrm{diag}}\otimes\mathbf{I}\right)
\circ \Upsilon_{U_{1}\oplus\cdots\oplus U_{d}}=\Upsilon_{U_{1}\oplus
\cdots\oplus U_{d}}\circ\left(  \Lambda_{d,\theta}^{\mathrm{diag}}%
\otimes\mathbf{I}\right)  .
\]
Therefore, \quad\
\begin{align*}
\Upsilon_{U_{1}\oplus\cdots\oplus U_{d}}\circ\left(  \Lambda_{d,\theta
}^{\mathrm{diag}}\otimes\mathbf{I}\right)  \left(  \left\vert \psi
\right\rangle \left\langle \psi\right\vert \right)   &  =\left(
\Lambda_{d,\theta}^{\mathrm{diag}}\otimes\mathbf{I}\right)  \circ
\Upsilon_{U_{1}\oplus\cdots\oplus U_{d}}\left(  \left\vert \psi\right\rangle
\left\langle \psi\right\vert \right) \\
&  =\left(  \Lambda_{d,\theta}^{\mathrm{diag}}\otimes\mathbf{I}\right)
\left(  \left\vert \Phi_{d}\right\rangle \left\langle \Phi_{d}\right\vert
\right)  .
\end{align*}
Therefore, by proposition\thinspace\ref{prop:randomize}, we have the assertion
(\ref{equiv}).

So in this case, entanglement between $\mathcal{H}_{in}$ and $\mathcal{H}_{R}%
$\ is not necessary. When an entangled state is strictly universally better
than any separable state? Below, the condition for $\left\vert \Phi
_{d}\right\rangle $ to be strictly universally better than any separable
states is studied. After investigating the subfamily of $\left\{
\Lambda_{2,\theta}^{\mathrm{gp}}\right\}  $ in this subsection, we move to
families of measurements (Examples\thinspace\ref{ex:ortho-measure} and
\ref{ex:ortho-measure}) in the next subsection.

\begin{theorem}
\label{th:e-not-needed}With $\xi_{\theta}\in%
\mathbb{R}
^{3}$, let $\Lambda_{\theta}=\Lambda_{2,\xi_{\theta}}^{\mathrm{gp}}$. Then,
there is a separable state $\left\vert \psi\right\rangle =\left\vert \psi
_{in}\right\rangle \left\vert \psi_{R}\right\rangle $ with $\left\vert
\psi_{in}\right\rangle \left\vert \psi_{R}\right\rangle \succeq^{c}\left\vert
\Phi_{2}\right\rangle $, or equivalently
\begin{equation}
\left\{  \Lambda_{\theta}\left(  \left\vert \psi_{in}\right\rangle
\left\langle \psi_{in}\right\vert \right)  \right\}  _{\theta\in\Theta}%
\succeq^{c}\left\{  \Lambda_{\theta}\otimes\mathbf{I}\left(  \left\vert
\Phi_{2}\right\rangle \left\langle \Phi_{2}\right\vert \right)  \right\}
_{\theta\in\Theta} \label{e-not-needed}%
\end{equation}
if and only if

\begin{description}
\item[(i)] $\left\{  \xi_{\theta}\right\}  _{\theta\in\Theta}$ is on a
straight line.

\item[(ii)] If there is at least a pair $\theta_{1}$, $\theta_{2}$ such that
$\xi_{+}:=\xi_{\theta_{1}}$ and \ $\xi_{-}:=\xi_{\theta_{2}}$ are distinct,
$\xi_{+}$ , $\xi_{-}$ satisfies%
\begin{equation}
\xi_{\mathcal{+}}^{1}\xi_{-}^{0}=\xi_{-}^{1}\xi_{+}^{0},\,\,\xi_{\mathcal{+}%
}^{2}\xi_{\mathcal{-}}^{3}=\xi_{\mathcal{-}}^{2}\xi_{\mathcal{+}}^{3},
\label{nes-suff-1}%
\end{equation}
or%
\begin{equation}
\xi_{\mathcal{+}}^{2}\xi_{-}^{0}=\xi_{\mathcal{-}}^{2}\xi_{+}^{0}%
,\,\,\xi_{\mathcal{+}}^{3}\xi_{-}^{1}=\xi_{\mathcal{-}}^{3}\xi_{\mathcal{+}%
}^{1}, \label{nes-suff-2}%
\end{equation}
or
\begin{equation}
\xi_{\mathcal{+}}^{3}\xi_{-}^{0}=\xi_{\mathcal{-}}^{3}\xi_{+}^{0}%
,\,\,\xi_{\mathcal{+}}^{1}\xi_{\mathcal{-}}^{2}=\xi_{-}^{1}\xi_{\mathcal{+}%
}^{2}, \label{nes-suff-3}%
\end{equation}
where $\xi_{\pm}^{0}:=1-\xi_{\pm}^{1}-\xi_{\pm}^{2}-\xi_{\pm}^{3}$ .
\end{description}
\end{theorem}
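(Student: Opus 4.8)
The plan is to reduce everything to the two-point comparison criterion of Lemma~\ref{lem:commute}, using the Pauli parameterization of $\Lambda_{2,\xi}^{\mathrm{gp}}$ together with the Bloch-sphere geometry of a single qubit. First I would fix notation: write $\Lambda_\theta=\Lambda_{2,\xi_\theta}^{\mathrm{gp}}$ and recall that on a qubit density operator $\rho=\tfrac12(\mathbf 1+\vec r\cdot\vec\sigma)$ one has $\Lambda_{2,\xi}^{\mathrm{gp}}(\rho)=\tfrac12(\mathbf 1+(\xi^1 r_1,\xi^2 r_2,\xi^3 r_3)\cdot\vec\sigma)$ after absorbing the constraint $\xi^0=1-\xi^1-\xi^2-\xi^3$ — i.e. these are exactly the Pauli-diagonal unital qubit channels, which contract the Bloch ball along the three coordinate axes by factors determined by $(\xi^0,\xi^1,\xi^2,\xi^3)$. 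For a separable input $|\psi_{in}\rangle|\psi_R\rangle$, the relevant family is the single-qubit family $\{\Lambda_\theta(\rho_{\psi_{in}})\}_\theta$, whereas for $|\Phi_2\rangle$ the family is $\{\Lambda_\theta\otimes\mathbf I(|\Phi_2\rangle\langle\Phi_2|)\}_\theta$, whose members are (up to normalization) the Choi matrices $\tfrac14 Ch(\Lambda_\theta)$, which in this parameterization are themselves two-qubit states with Bloch-like coordinates $(\xi_\theta^0,\xi_\theta^1,\xi_\theta^2,\xi_\theta^3)$ on a tetrahedron. So the comparison $\{\Lambda_\theta(\rho_{\psi_{in}})\}\succeq^c\{\Lambda_\theta\otimes\mathbf I(|\Phi_2\rangle\langle\Phi_2|)\}$ is a comparison of a commuting qubit family with a (generally non-commuting) two-qubit family.

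For the \emph{sufficiency} direction I would use the second half of Lemma~\ref{lem:commute}: since the output states $\Lambda_\theta(\rho_{\psi_{in}})$ all lie on one axis of the Bloch ball (when $\rho_{\psi_{in}}$ is chosen to be a pure eigenstate of the appropriate Pauli matrix), they commute pairwise, so it suffices to verify the trace-norm inequality $\|\Lambda_+(\rho_{\psi_{in}})-s\,\Lambda_-(\rho_{\psi_{in}})\|_1\ge\|\Lambda_+\otimes\mathbf I(|\Phi_2\rangle\langle\Phi_2|)-s\,\Lambda_-\otimes\mathbf I(|\Phi_2\rangle\langle\Phi_2|)\|_1$ for all $s\ge0$ and every pair $\theta_1,\theta_2$. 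Both sides are piecewise-linear-after-a-square-root in $s$: the left side is $\max(|a-sb|,|a'-sb'|)+{}$something elementary, the right side is a sum of a genuine square-root term (coming from the $2\times2$ off-diagonal block of the Choi difference) and an absolute value. Under hypothesis (i) the parameters $\xi_\theta$ are collinear, so it is enough to check one pair $\xi_+,\xi_-$; under hypothesis (ii), conditions \eqref{nes-suff-1}--\eqref{nes-suff-3} are exactly the algebraic conditions that force the square-root term on the right to degenerate into an absolute value (one of the three off-diagonal ``radius'' contributions collapses), and one then picks $\rho_{\psi_{in}}$ to be the Pauli eigenstate along the corresponding axis so that the two trace-norm expressions coincide identically in $s$. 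I would organize this as three symmetric cases matching \eqref{nes-suff-1}, \eqref{nes-suff-2}, \eqref{nes-suff-3}.

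For the \emph{necessity} direction I would argue contrapositively. If (i) fails, there are three parameter points $\xi_{\theta_1},\xi_{\theta_2},\xi_{\theta_3}$ in general position; a separable input produces three qubit states that (being outputs of axis-contracting channels applied to a fixed $\rho_{\psi_{in}}$) lie on a line through the Bloch ball, so one can find three affine functionals distinguishing the $\Phi_2$-family that no measurement on the separable family can reproduce — concretely, use Lemma~\ref{lem:commute} on a well-chosen pair together with the observation that a single $\rho_{\psi_{in}}$ cannot simultaneously align with two non-parallel Bloch directions. If (i) holds but (ii) fails, then for the extremal pair $\xi_+,\xi_-$ the right-hand square-root term is strictly larger than any absolute-value lower bound achievable on a commuting single-qubit family, so the trace-norm inequality of Lemma~\ref{lem:commute} is violated for some $s$ and hence $\succeq^c$ fails; this is the computational heart of the converse and I would reduce it to showing that $\tfrac12\sqrt{(\cdots)^2+4(\cdots)}+\tfrac12|\cdots|$ strictly exceeds $\max_i|a_i-sb_i|+(\text{const})$ unless one of \eqref{nes-suff-1}--\eqref{nes-suff-3} holds. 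The main obstacle is precisely this last estimate: carefully tracking the eigenvalues of the $2\times2$ off-diagonal block of $\Lambda_+\otimes\mathbf I(|\Phi_2\rangle\langle\Phi_2|)-s\,\Lambda_-\otimes\mathbf I(|\Phi_2\rangle\langle\Phi_2|)$ as a function of $s$, and matching the resulting curve against the piecewise-linear envelope of all commuting single-qubit families, so as to extract exactly the three product-type equalities. The rest — the Choi-matrix bookkeeping, the reduction to a single pair under collinearity, and the choice of $\rho_{\psi_{in}}$ — is routine once the Bloch picture is set up.
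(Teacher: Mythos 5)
Your proposal rests on a computational premise that is false for this family, and that premise is what you call the ``computational heart'' of the converse. For the generalized Pauli family, $\Lambda_\theta\otimes\mathbf I\left(\left\vert\Phi_2\right\rangle\left\langle\Phi_2\right\vert\right)$ is Bell-diagonal (the four vectors $\left(X_2^jZ_2^k\otimes\mathbf 1\right)\left\vert\Phi_2\right\rangle$ are orthonormal), so the reference family is a \emph{commuting} two-qubit family and
$\left\Vert \Lambda_+\otimes\mathbf I\left(\left\vert\Phi_2\right\rangle\left\langle\Phi_2\right\vert\right)-s\,\Lambda_-\otimes\mathbf I\left(\left\vert\Phi_2\right\rangle\left\langle\Phi_2\right\vert\right)\right\Vert_1=\sum_{i=0}^3\left\vert\xi_+^i-s\,\xi_-^i\right\vert$,
with no square-root term; the square-root expression you recall belongs to the paper's discussion of $\Lambda_{1,\xi}^{\mathrm{damp}}$, a different (non-unital) channel. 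Consequently your proposed mechanism for \eqref{nes-suff-1}--\eqref{nes-suff-3} (``forcing the square-root term on the right to degenerate'') does not correspond to anything here. The conditions actually arise as the equality conditions, valid for all $s\ge0$, in the triangle inequality $\left\vert\xi_+^0+\xi_+^1-s\left(\xi_-^0+\xi_-^1\right)\right\vert+\left\vert\xi_+^2+\xi_+^3-s\left(\xi_-^2+\xi_-^3\right)\right\vert\le\sum_i\left\vert\xi_+^i-s\,\xi_-^i\right\vert$, i.e.\ as proportionality of the paired components; since the axis-input side can never strictly exceed the Bell-diagonal side, sufficiency demands exact coincidence of the two piecewise-linear trace-norm functions, after which the second half of Lemma~\ref{lem:commute} applies because the single-qubit outputs commute. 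So the object you plan to analyze (eigenvalues of an off-diagonal $2\times2$ block of the Choi difference) is the wrong one, and the case analysis built on it would not produce \eqref{nes-suff-1}--\eqref{nes-suff-3}.

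The necessity direction also has two genuine gaps. First, for an \emph{arbitrary} separable input you must force the outputs $\Lambda_\theta\left(\left\vert\psi_{in}\right\rangle\left\langle\psi_{in}\right\vert\right)$ to commute (parallel Bloch vectors) before any reduction to axis-aligned inputs; this does not follow from Lemma~\ref{lem:commute}, which only yields trace-norm inequalities. The paper obtains it from Lemma~\ref{lem:comm-sufficient}, usable because $\left\vert\Phi_2\right\rangle$ is universally optimal (Theorem~\ref{th:covariant-opt-input}), so \eqref{e-not-needed} upgrades to $\equiv^c$ and the commuting Bell-diagonal family forces a commuting qubit family; only then does the monotonicity of the trace norm in $\left\Vert\vec r_+\right\Vert$ reduce the problem to the three coordinate axes. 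Second, your argument for necessity of (i) does not work as stated: if the input is an axis eigenstate, the outputs are automatically collinear in the Bloch ball no matter how the $\xi_\theta$ are placed, so ``a single $\rho_{\psi_{in}}$ cannot align with two non-parallel directions'' gives no constraint on the $\xi_\theta$ themselves. The paper closes this by a transfer argument: the affine relation $\Lambda_\theta\left(\left\vert\psi_{in}\right\rangle\left\langle\psi_{in}\right\vert\right)=\lambda_\theta\Lambda_+\left(\left\vert\psi_{in}\right\rangle\left\langle\psi_{in}\right\vert\right)+\left(1-\lambda_\theta\right)\Lambda_-\left(\left\vert\psi_{in}\right\rangle\left\langle\psi_{in}\right\vert\right)$, combined with \eqref{e-not-needed} and tested against arbitrary effects $F$, forces the same affine identity on the Choi states and hence on the channels, which is exactly condition (i); some step of this kind (or an all-pairs proportionality argument) is missing from your sketch, as is the converse bookkeeping showing that (i) plus the two-point comparison for one extremal pair recovers \eqref{e-not-needed} for the whole family by linearity of the output distributions.
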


\ \ For example, $\left\{  \Lambda_{\frac{1}{2},\xi}^{\mathrm{damp}}\right\}
$ in Example\thinspace\ref{ex:g-damp} does not satisfy (i). Therefore,
$\left\vert \Phi_{2}\right\rangle $ is strictly universally better than any
separable states. On the other hand, $\left\{  \Lambda_{2,\xi}^{\mathrm{diag}%
}\right\}  $ (Example\ref{ex:diag} ) satisfies the hypothesis of the theorem.
Therefore, there is a separable state which is as good as $\left\vert \Phi
_{2}\right\rangle $.

\begin{proof}
We first study the case where $\Theta=\left\{  +,-\right\}  $ \ and
$\Lambda_{+}=\Lambda_{2,\xi_{+}}^{\mathrm{gp}}$ and $\Lambda_{-}%
=\Lambda_{2,\xi_{-}}^{\mathrm{gp}}$. and give necessary and sufficient
conditions for (\ref{e-not-needed}).

Suppose (\ref{e-not-needed}) holds. Then, since $\Lambda_{+}\otimes
\mathbf{I}\left(  \left\vert \Phi_{2}\right\rangle \left\langle \Phi
_{2}\right\vert \right)  $ and $\Lambda_{-}\otimes\mathbf{I}\left(  \left\vert
\Phi_{2}\right\rangle \left\langle \Phi_{2}\right\vert \right)  $ commutes, by
Lemma\thinspace\ref{lem:comm-sufficient}, $\Lambda_{\mathcal{+}}\left(
\left\vert \psi_{in}\right\rangle \left\langle \psi_{in}\right\vert \right)  $
and $\Lambda_{\mathcal{-}}\left(  \left\vert \psi_{in}\right\rangle
\left\langle \psi_{in}\right\vert \right)  $ has to commute. Let $\vec{r}$ and
$\vec{r}_{\theta}$ be a Bloch vector of $\left\vert \psi_{in}\right\rangle
\left\langle \psi_{in}\right\vert $ and $\Lambda_{\theta}\left(  \left\vert
\psi_{in}\right\rangle \left\langle \psi_{in}\right\vert \right)  $,
respectively. Then, this means that
\begin{equation}
\vec{r}_{\mathcal{-}}=\alpha\vec{r}_{\mathcal{+}}, \label{proportional-r}%
\end{equation}
for a real number $\alpha$. Also,
\[
\vec{r}_{\theta}=\mathrm{diag}\left(  a_{\theta}^{1},a_{\theta}^{2},a_{\theta
}^{3}\right)  \,\vec{r},
\]
where
\begin{align*}
a_{\theta}^{1}  &  :=1-2\xi_{\theta}^{2}-2\xi_{\theta}^{3},\\
a_{\theta}^{2}  &  :=1-2\xi_{\theta}^{1}-2\xi_{\theta}^{3},\\
a_{\theta}^{3}  &  :=1-2\xi_{\theta}^{1}-2\xi_{\theta}^{2}.
\end{align*}

Let us denote by $\rho\left(  \vec{r}\right)  $ the state with Bloch vector
$\vec{r}$. By simple calculations, we can verify
\begin{align*}
&  \left\Vert \rho\left(  \vec{r}_{\mathcal{+}}\right)  -s\,\rho\left(
\alpha\vec{r}_{\mathcal{+}}\right)  \right\Vert _{1}\\
&  =\left\vert \frac{1}{2}+\left\Vert \vec{r}_{\mathcal{+}}\right\Vert
-s\left(  \frac{1}{2}+\alpha\left\Vert \vec{r}_{\mathcal{+}}\right\Vert
\right)  \right\vert +\left\vert \frac{1}{2}-\left\Vert \vec{r}_{\mathcal{+}%
}\right\Vert -s\left(  \frac{1}{2}-\alpha\left\Vert \vec{r}_{\mathcal{+}%
}\right\Vert \right)  \right\vert
\end{align*}
is non-decreasing in $\left\Vert \vec{r}_{\mathcal{+}}\right\Vert $ for any
$s\geq0$. Therefore, by Lemma\thinspace\ref{lem:commute},
\[
\left\{  \rho\left(  \vec{r}_{\mathcal{+}}\right)  ,\rho\left(  \alpha\vec
{r}_{\mathcal{+}}\right)  \right\}  \succeq^{c}\left\{  \rho\left(  \vec
{r}_{\mathcal{+}}^{\prime}\right)  ,\rho\left(  \alpha\vec{r}_{\mathcal{+}%
}^{\prime}\right)  \right\}  ,
\]
if and only if $\left\Vert \vec{r}_{\mathcal{+}}\right\Vert \geq\left\Vert
\vec{r}_{\mathcal{+}}^{\prime}\right\Vert $.

Therefore, we concentrate on $\vec{r}$ which maximizes $\left\Vert \vec
{r}_{\mathcal{+}}\right\Vert =\left\Vert \mathrm{diag}\left(  a_{\mathcal{+}%
}^{1},a_{\mathcal{+}}^{2},a_{\mathcal{+}}^{3}\right)  \,\vec{r}\right\Vert $.
This maximum can be achieved at $\vec{r}=\left(  1,0,0\right)  $, $\left(
0,1,0\right)  $, or $\left(  0,0,1\right)  $. Therefore, (\ref{e-not-needed})
holds if and only if
\begin{equation}
\left\{  \rho\left(  a_{\mathcal{+}}^{1},0,0\right)  ,\rho\left(  a_{-}%
^{1},0,0\right)  \right\}  \succeq^{c}\left\{  \Lambda_{\theta}\otimes
\mathbf{I}\left(  \left\vert \Phi_{2}\right\rangle \left\langle \Phi
_{2}\right\vert \right)  \right\}  _{\theta\in\left\{  \mathcal{+}%
,\mathcal{-}\right\}  }, \label{e-not-needed-1}%
\end{equation}
or%
\begin{equation}
\left\{  \rho\left(  0,a_{\mathcal{+}}^{2},0\right)  ,\rho\left(
0,a_{\mathcal{-}}^{2},0\right)  \right\}  \succeq^{c}\left\{  \Lambda_{\theta
}\otimes\mathbf{I}\left(  \left\vert \Phi_{2}\right\rangle \left\langle
\Phi_{2}\right\vert \right)  \right\}  _{\theta\in\left\{  \mathcal{+}%
,\mathcal{-}\right\}  }, \label{e-not-needed-2}%
\end{equation}
or
\begin{equation}
\left\{  \rho\left(  0,0,a_{\mathcal{+}}^{3}\right)  ,\rho\left(
0,0,a_{\mathcal{-}}^{3}\right)  \right\}  \succeq^{c}\left\{  \Lambda_{\theta
}\otimes\mathbf{I}\left(  \left\vert \Phi_{2}\right\rangle \left\langle
\Phi_{2}\right\vert \right)  \right\}  _{\theta\in\left\{  \mathcal{+}%
,\mathcal{-}\right\}  }. \label{e-not-needed-3}%
\end{equation}

Suppose (\ref{e-not-needed-1}) is the case. Then, in view of Lemma\thinspace
\ref{lem:commute}, we have to have%
\begin{align}
&  \left\Vert \rho\left(  a_{\mathcal{+}}^{1},0,0\right)  -s\,\rho\left(
a_{-}^{1},0,0\right)  \right\Vert _{1}\nonumber\\
&  =\left\vert \xi_{+}^{0}+\xi_{+}^{1}-s\left(  \xi_{-}^{0}+\xi_{-}%
^{1}\right)  \right\vert +\left\vert \xi_{\mathcal{+}}^{2}+\xi_{\mathcal{+}%
}^{3}-s\left(  \xi_{\mathcal{-}}^{2}+\xi_{\mathcal{-}}^{3}\right)  \right\vert
\nonumber\\
&  \geq\left\vert \xi_{+}^{0}-s\xi_{+}^{0}\right\vert +\left\vert
\xi_{\mathcal{+}}^{1}-s\xi_{-}^{1}\right\vert +\left\vert \xi_{\mathcal{+}%
}^{2}-s\xi_{\mathcal{-}}^{2}\right\vert +\left\vert \xi_{\mathcal{+}}^{3}%
-s\xi_{\mathcal{-}}^{3}\right\vert \nonumber\\
&  =\left\Vert \Lambda_{\mathcal{+}}\otimes\mathbf{I}\left(  \left\vert
\Phi_{2}\right\rangle \left\langle \Phi_{2}\right\vert \right)  -s\,\Lambda
_{\mathcal{-}}\otimes\mathbf{I}\left(  \left\vert \Phi_{2}\right\rangle
\left\langle \Phi_{2}\right\vert \right)  \right\Vert _{1}\,, \label{rho-srho}%
\end{align}
where%
\[
\xi_{\pm}^{0}:=1-\xi_{\pm}^{1}-\xi_{\pm}^{2}-\xi_{\pm}^{3}.
\]
On the other hand, observe%
\begin{align*}
\left\vert \xi_{+}^{0}+\xi_{+}^{1}-s\left(  \xi_{-}^{0}+\xi_{-}^{1}\right)
\right\vert  &  \leq\left\vert \xi_{+}^{0}-s\xi_{-}^{0}\right\vert +\left\vert
\xi_{\mathcal{+}}^{1}-s\xi_{-}^{1}\right\vert ,\\
\left\vert \xi_{\mathcal{+}}^{2}+\xi_{\mathcal{+}}^{3}-s\left(  \xi
_{\mathcal{-}}^{2}+\xi_{\mathcal{-}}^{3}\right)  \right\vert  &
\leq\left\vert \xi_{\mathcal{+}}^{2}-s\xi_{\mathcal{-}}^{2}\right\vert
+\left\vert \xi_{\mathcal{+}}^{3}-s\xi_{\mathcal{-}}^{3}\right\vert .
\end{align*}
Therefore, the inequality (\ref{rho-s rho}) is true for any $s\geq0$ if and
only if identities in above two inequalities hold for any $s\geq0$. Therefore,
(\ref{e-not-needed-1}) if and only if (\ref{nes-suff-1}).

Similarly, (\ref{e-not-needed-2}) and (\ref{e-not-needed-3}) holds if and only
if (\ref{nes-suff-2}) and (\ref{nes-suff-3}), respectively. Therefore, in the
case of $\Theta=\left\{  +,-\right\}  $, there is $\left\vert \psi
_{in}\right\rangle $ with (\ref{e-not-needed}) if and only if one of
(\ref{nes-suff-1}), (\ref{nes-suff-2}) or (\ref{nes-suff-3}) holds.

Next, we treat the case where $\Theta$ is an arbitrary set, and $\Lambda
_{\theta}=\Lambda_{2,\xi_{\theta}}^{\mathrm{gp}}$. We suppose that there is at
least a pair $\theta_{1}$, $\theta_{2}$ such that $\xi_{+}:=\xi_{\theta_{1}}$
and \ $\xi_{-}:=\xi_{\theta_{2}}$ are distinct. In view of Lemma\thinspace
\ref{lem:comm-sufficient}, (\ref{e-not-needed}) holds only if $\Lambda
_{\theta}\left(  \left\vert \psi_{in}\right\rangle \left\langle \psi
_{in}\right\vert \right)  $ and $\Lambda_{\theta^{\prime}}\left(  \left\vert
\psi_{in}\right\rangle \left\langle \psi_{in}\right\vert \right)  $ commutes
for any $\theta$, $\theta^{\prime}$. Therefore, $\left\{  \Lambda_{\theta
}\left(  \left\vert \psi_{in}\right\rangle \left\langle \psi_{in}\right\vert
\right)  \right\}  _{\theta\in\Theta}$ is on a straight line passing through
origin. Denoting $\Lambda_{\theta_{1}}$ and $\Lambda_{\theta_{2}}$ by
$\Lambda_{+}$ and $\Lambda_{-}$, respectively, for any $\theta\in\Theta$,
there is $\lambda_{\theta}\in%
\mathbb{R}
$ such that
\begin{equation}
\Lambda_{\theta}\left(  \left\vert \psi_{in}\right\rangle \left\langle
\psi_{in}\right\vert \right)  =\lambda_{\theta}\Lambda_{+}\left(  \left\vert
\psi_{in}\right\rangle \left\langle \psi_{in}\right\vert \right)  +\left(
1-\lambda_{\theta}\right)  \Lambda_{-}\left(  \left\vert \psi_{in}%
\right\rangle \left\langle \psi_{in}\right\vert \right)  .
\label{lambda-theta}%
\end{equation}
We assert (\ref{e-not-needed}) holds if and only if
\begin{align}
&  \left\{  \Lambda_{+}\left(  \left\vert \psi_{in}\right\rangle \left\langle
\psi_{in}\right\vert \right)  ,\Lambda_{-}\left(  \left\vert \psi
_{in}\right\rangle \left\langle \psi_{in}\right\vert \right)  \right\}
_{\theta\in\Theta}\nonumber\\
&  \succeq^{c}\left\{  \Lambda_{+}\otimes\mathbf{I}\left(  \left\vert \Phi
_{2}\right\rangle \left\langle \Phi_{2}\right\vert \right)  ,\Lambda
_{-}\otimes\mathbf{I}\left(  \left\vert \Phi_{2}\right\rangle \left\langle
\Phi_{2}\right\vert \right)  \right\}  \label{e-not-needed-b}%
\end{align}
and
\begin{equation}
\Lambda_{\theta}=\lambda_{\theta}\Lambda_{+}+\left(  1-\lambda_{\theta
}\right)  \Lambda_{-}\,. \label{lambda-theta-2}%
\end{equation}
The statement of the present theorem follows immediately from this assertion.

First, we show `only if '. Obviously, (\ref{e-not-needed}) implies
(\ref{e-not-needed-b}). Also, due to (\ref{e-not-needed}), for any positive
operator $F\leq\mathbf{1}$ there is a positive operator $F^{\prime}%
\leq\mathbf{1}$ such that%
\begin{align*}
&  \mathrm{tr}\,\left[  \left\{  \lambda_{\theta}\Lambda_{+}+\left(
1-\lambda_{\theta}\right)  \Lambda_{-}-\Lambda_{\theta}\right\}
\otimes\mathbf{I}\left(  \left\vert \Phi_{2}\right\rangle \left\langle
\Phi_{2}\right\vert \right)  \right]  F\\
&  =\mathrm{tr}\,\left[  \left\{  \lambda_{\theta}\Lambda_{+}+\left(
1-\lambda_{\theta}\right)  \Lambda_{-}-\Lambda_{\theta}\right\}  \left(
\left\vert \psi_{in}\right\rangle \left\langle \psi_{in}\right\vert \right)
\right]  F^{\prime}\\
&  =0.
\end{align*}
Here the second identity is due to (\ref{lambda-theta}). Since $F\leq
\mathbf{1}$ is arbitrary, we have
\[
\Lambda_{\theta}\otimes\mathbf{I}\left(  \left\vert \Phi_{2}\right\rangle
\left\langle \Phi_{2}\right\vert \right)  =\left\{  \lambda_{\theta}%
\Lambda_{+}+\left(  1-\lambda_{\theta}\right)  \Lambda_{-}\right\}
\otimes\mathbf{I}\left(  \left\vert \Phi_{2}\right\rangle \left\langle
\Phi_{2}\right\vert \right)  .
\]
Therefore, $\left\{  \Lambda_{\theta}\otimes\mathbf{I}\left(  \left\vert
\Phi_{0}\right\rangle \left\langle \Phi_{0}\right\vert \right)  \right\}
_{\theta\in\Theta}$ is also on a straight line, and so is $\left\{
\Lambda_{\theta}\right\}  _{\theta\in\Theta}$. Thus we have
(\ref{lambda-theta-2}).

To show the opposite, suppose (\ref{e-not-needed-b}) and (\ref{lambda-theta-2}%
) holds. Then, for any measurement $M$, there exists a measurement $M^{\prime
}$ such that%
\begin{align*}
P_{\Lambda_{\theta}\otimes\mathbf{I}\left(  \left\vert \Phi_{2}\right\rangle
\left\langle \Phi_{2}\right\vert \right)  }^{M}  &  =\lambda_{\theta
}\,P_{\Lambda_{+}\otimes\mathbf{I}\left(  \left\vert \Phi_{2}\right\rangle
\left\langle \Phi_{2}\right\vert \right)  }^{M}+\left(  1-\lambda_{\theta
}\right)  \,P_{\Lambda_{-}\otimes\mathbf{I}\left(  \left\vert \Phi
_{2}\right\rangle \left\langle \Phi_{2}\right\vert \right)  }^{M}\\
&  =\lambda_{\theta}\,P_{\Lambda_{+}\left(  \left\vert \psi_{in}\right\rangle
\left\langle \psi_{in}\right\vert \right)  }^{M^{\prime}}+\left(
1-\lambda_{\theta}\right)  \,P_{\Lambda_{-}\left(  \left\vert \psi
_{in}\right\rangle \left\langle \psi_{in}\right\vert \right)  }^{M^{\prime}}\\
&  =\,P_{\Lambda_{\theta}\left(  \left\vert \psi_{in}\right\rangle
\left\langle \psi_{in}\right\vert \right)  }^{M^{\prime}}.
\end{align*}
Hence, we have (\ref{e-not-needed}), and our assertion is proved. Thus, we
have Theorem\thinspace\ref{th:e-not-needed}.
\end{proof}

\bigskip

\subsection{A pair of measurements}

In this subsection, we investigate the measurement families studied in
Examples\thinspace\ref{ex:ortho-measure},\thinspace\ref{ex:mes-rotate} of
Section\thinspace\ref{sec:measurement}.

First, in Example\thinspace\ref{ex:ortho-measure}, $\left\vert \psi
_{in}\right\rangle \left\vert \psi_{R}\right\rangle \succeq^{c}\left\vert
\Phi_{d}\right\rangle $ holds if and only if its output can be discriminated
with certainty, or equivalently,%
\begin{equation}
\left\langle \psi_{in}\right\vert M_{+}\left(  i\right)  \left\vert \psi
_{in}\right\rangle \left\langle \psi_{in}\right\vert M_{-}\left(  i\right)
\left\vert \psi_{in}\right\rangle =0,\,i=1,\cdots,m. \label{meas-conclusive}%
\end{equation}

\begin{proposition}
\label{prop:meas-ortho} In case of (\ref{meas-ortho-1}),
(\ref{meas-conclusive}) holds if and only if either
\begin{equation}
M_{+}\left(  i\right)  \left\vert \psi_{in}\right\rangle =0 \label{MM=0}%
\end{equation}
or
\begin{equation}
M_{+}\left(  i\right)  =c\,\left\vert \psi_{in}\right\rangle \left\langle
\psi_{in}\right\vert \,\,\,\,(c:\text{constant}) \label{M-prop-M}%
\end{equation}
holds for any $i$.
\end{proposition}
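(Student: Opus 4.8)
The plan is to exploit the rank-one structure of $M_+(i)$ to reduce everything to an elementary statement about overlaps of unit vectors. First I would write, for each $i$, $M_+(i)=\mu_i\,|\phi_i\rangle\langle\phi_i|$ with $\mu_i\ge 0$ and $\||\phi_i\rangle\|=1$ (keeping $\mu_i=0$ as an admissible degenerate case); then $\mathrm{tr}\,M_+(i)=\mu_i$, and by (\ref{meas-ortho-1}),
\[
M_-(i)=\frac{\mu_i}{d-1}\bigl(\mathbf{1}-|\phi_i\rangle\langle\phi_i|\bigr).
\]
Both $M_+(i)$ and $M_-(i)$ are positive semidefinite, so the left-hand side of (\ref{meas-conclusive}) is a product of two nonnegative real numbers; it vanishes for all $i$ if and only if, for each $i$, at least one of $\langle\psi_{in}|M_+(i)|\psi_{in}\rangle$ and $\langle\psi_{in}|M_-(i)|\psi_{in}\rangle$ is zero.

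Next I would analyze these two alternatives using the standard fact that for $A\ge 0$, $\langle\psi|A|\psi\rangle=0$ forces $A|\psi\rangle=0$. If $\langle\psi_{in}|M_+(i)|\psi_{in}\rangle=0$, this gives $M_+(i)|\psi_{in}\rangle=0$, which is (\ref{MM=0}). If instead $\langle\psi_{in}|M_-(i)|\psi_{in}\rangle=0$, then $M_-(i)|\psi_{in}\rangle=0$, i.e. $\mu_i\bigl(|\psi_{in}\rangle-\langle\phi_i|\psi_{in}\rangle|\phi_i\rangle\bigr)=0$; hence either $\mu_i=0$, in which case $M_+(i)=0$ and (\ref{MM=0}) holds trivially, or $|\psi_{in}\rangle=\langle\phi_i|\psi_{in}\rangle|\phi_i\rangle$, which (taking norms) forces $|\langle\phi_i|\psi_{in}\rangle|=1$, so $|\phi_i\rangle$ is a unit-modulus multiple of $|\psi_{in}\rangle$ and $M_+(i)=\mu_i|\psi_{in}\rangle\langle\psi_{in}|$, which is (\ref{M-prop-M}). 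This establishes the ``only if'' direction, with the ``or'' in the statement read inclusively.

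For the converse I would simply substitute back: if (\ref{MM=0}) holds for a given $i$ then the first factor in (\ref{meas-conclusive}) vanishes for that $i$; if (\ref{M-prop-M}) holds, then plugging $M_+(i)=c|\psi_{in}\rangle\langle\psi_{in}|$ into (\ref{meas-ortho-1}) gives $M_-(i)=\frac{c}{d-1}\bigl(\mathbf{1}-|\psi_{in}\rangle\langle\psi_{in}|\bigr)$, whence $M_-(i)|\psi_{in}\rangle=0$ and the second factor vanishes; either way (\ref{meas-conclusive}) holds. I do not expect a genuine obstacle here: the argument is essentially the equality case of Cauchy--Schwarz, and the only points needing a little care are phrasing the dichotomy index by index and tracking the degenerate case $\mu_i=0$, which lies in both alternatives (with $c=0$).
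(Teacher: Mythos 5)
Your argument is correct and takes essentially the same route as the paper's: for each $i$ the left-hand side of (\ref{meas-conclusive}) is a product of two nonnegative numbers, the case $\left\langle \psi_{in}\right\vert M_{+}\left(i\right)\left\vert \psi_{in}\right\rangle =0$ gives (\ref{MM=0}) by positivity, and the case $\left\langle \psi_{in}\right\vert M_{-}\left(i\right)\left\vert \psi_{in}\right\rangle =0$ is turned, via (\ref{meas-ortho-1}) and the rank-one hypothesis, into (\ref{M-prop-M}). Your write-up is merely a bit more explicit than the paper's (spectral form of $M_{+}\left(i\right)$, the degenerate case, and the converse by substitution), whereas the paper condenses the key step into the identity $\mathrm{tr}\,M_{+}\left(i\right)=\left\langle \psi_{in}\right\vert M_{+}\left(i\right)\left\vert \psi_{in}\right\rangle$ for a rank-one operator; the content is the same.
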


\begin{proof}
If $\left\langle \psi_{in}\right\vert M_{+}\left(  j\right)  \left\vert
\psi_{in}\right\rangle =0$, we have (\ref{MM=0}). On the other hand, suppose
$\left\langle \psi_{in}\right\vert M_{+}\left(  i\right)  \left\vert \psi
_{in}\right\rangle \neq0$. Then, for (\ref{meas-conclusive}) to be true,
$\left\langle \psi_{in}\right\vert M_{-}\left(  i\right)  \left\vert \psi
_{in}\right\rangle =0$ has to hold. Therefore, by (\ref{meas-ortho-1}),
\[
\mathrm{tr}\,M_{+}\left(  i\right)  =\left\langle \psi_{in}\right\vert
M_{+}\left(  i\right)  \left\vert \psi_{in}\right\rangle .
\]
Since $M_{+}\left(  i\right)  $'s rank is one, this holds if and only if
(\ref{M-prop-M}).
\end{proof}

Finally, we investigate Example \ref{ex:mes-rotate}. \thinspace Let
\[
M_{+}=\sum_{i=1}^{d}\alpha_{i}\left\vert e_{i}\right\rangle \left\langle
e_{i}\right\vert ,\,M_{+}=\sum_{i=1}^{d}\beta_{i}\left\vert e_{i}\right\rangle
\left\langle e_{i}\right\vert ,
\]
where $\left\{  \left\vert e_{i}\right\rangle \right\}  _{i=1}^{d}$ is an
orthonormal basis of $\mathcal{H}_{in}$. Below, we assume
\begin{equation}
\frac{\alpha_{1}}{\beta_{1}}>\frac{\alpha_{2}}{\beta_{2}}>\cdots>\frac
{\alpha_{d}}{\beta_{d}}. \label{a/b}%
\end{equation}

The proof of the following lemma is in Appendix\thinspace\ref{appendix:a/b}.

\begin{lemma}
\label{lem:a/b}Suppose $\alpha_{i}\geq0$, $\beta_{i}>0$, and (\ref{a/b})
holds. Suppose also $\sum_{i=1}^{m}\left\vert \gamma_{i}\right\vert ^{2}>0$.
Then,
\[
\frac{\alpha_{1}}{\beta_{1}}=\frac{\sum_{j=1}^{d}\left\vert \gamma
_{j}\right\vert ^{2}\alpha_{j}}{\sum_{j=1}^{d}\left\vert \gamma_{j}\right\vert
^{2}\beta_{j}}%
\]
holds if and only if $\left\vert \gamma_{1}\right\vert \neq0$ and
\[
\gamma_{2}=\gamma_{3}=\cdots=\gamma_{d}=0.
\]

\end{lemma}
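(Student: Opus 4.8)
The plan is to read the right-hand side as a weighted average of the ratios $\alpha_j/\beta_j$ and then invoke the elementary fact that a convex combination of a finite list of real numbers equals the maximum of the list exactly when all the weight is carried by the maximizers; since (\ref{a/b}) makes $\alpha_1/\beta_1$ the \emph{strict} maximum, "all weight on maximizers" will force all the weight onto the single index $j=1$. Concretely, I would set $w_j:=\left|\gamma_j\right|^2\ge 0$. By hypothesis $\sum_{j=1}^d w_j>0$, and since every $\beta_j>0$ this gives $S:=\sum_{j=1}^d w_j\beta_j>0$, so the quotient in the statement is well defined. Putting $v_j:=w_j\beta_j/S$, the $v_j$ are nonnegative with $\sum_{j=1}^d v_j=1$, and
\[
\frac{\sum_{j=1}^d w_j\alpha_j}{\sum_{j=1}^d w_j\beta_j}=\sum_{j=1}^d v_j\,\frac{\alpha_j}{\beta_j}.
\]

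Next, writing $r:=\alpha_1/\beta_1$, condition (\ref{a/b}) says $r\ge \alpha_j/\beta_j$ for all $j$, with \emph{strict} inequality for $j\ge 2$. Hence $r-\sum_j v_j(\alpha_j/\beta_j)=\sum_j v_j(r-\alpha_j/\beta_j)$ is a sum of nonnegative terms, so it vanishes if and only if $v_j(r-\alpha_j/\beta_j)=0$ for every $j$, which (using the strictness for $j\ge 2$) is equivalent to $v_j=0$ for all $j\ge 2$. Thus the claimed equality holds if and only if $v_j=0$ for $j=2,\dots,d$.

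It then remains to translate this back to $\gamma$: from $v_j=0$ and $\beta_j>0$ we get $w_j=0$, i.e. $\gamma_j=0$, for $j=2,\dots,d$, and since $\sum_j w_j>0$ this forces $w_1>0$, i.e. $\left|\gamma_1\right|\ne 0$; conversely, if $\left|\gamma_1\right|\ne 0$ and $\gamma_2=\cdots=\gamma_d=0$ the quotient collapses to $w_1\alpha_1/(w_1\beta_1)=\alpha_1/\beta_1$, giving equality. I do not anticipate a genuine obstacle here; the only points needing a moment's care are the strict positivity of $S=\sum_j w_j\beta_j$ (which legitimizes the division and is exactly where $\beta_j>0$, not merely $\beta_j\ge 0$, is used) and the observation that the strict inequalities in (\ref{a/b}) are what upgrade "weight only on maximizers" to "weight only on $j=1$."
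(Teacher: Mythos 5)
Your proof is correct, and it takes a genuinely different route from the paper's. You rewrite the quotient as a convex combination $\sum_j v_j\,(\alpha_j/\beta_j)$ with $v_j=\left\vert\gamma_j\right\vert^{2}\beta_j/\sum_k\left\vert\gamma_k\right\vert^{2}\beta_k$ and invoke, in one step, the fact that such a combination equals the strict maximum only when all weight sits on the maximizing index; the paper instead runs a recursion on tail sums, proving successively that
\[
\frac{\alpha_{k}}{\beta_{k}}\geq\frac{\sum_{i=k}^{d}\left\vert \gamma_{i}\right\vert ^{2}\alpha_{i}}{\sum_{i=k}^{d}\left\vert \gamma_{i}\right\vert ^{2}\beta_{i}}
\]
for $k=d-1,d-2,\dots,2$ (whenever the tail weight is non-zero), and only at the final step uses $\alpha_1/\beta_1>\alpha_2/\beta_2$ to force $\sum_{i=2}^{d}\left\vert\gamma_i\right\vert^{2}=0$. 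The two arguments establish the same underlying mediant-versus-maximum fact, but yours does it in a single convexity observation, which is shorter, makes transparent exactly where each hypothesis enters ($\beta_j>0$ for the positivity of the denominator, strictness of $\alpha_1/\beta_1>\alpha_j/\beta_j$ for $j\geq2$ to pin the weight on $j=1$), and shows that the full strict ordering in (\ref{a/b}) is not needed, only that $j=1$ is the unique maximizer of $\alpha_j/\beta_j$; you also spell out the trivial "if" direction, which the paper leaves implicit. The paper's recursion, for its part, produces the intermediate inequalities for every tail, which is mildly more information but is not used elsewhere.
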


\begin{lemma}
\label{lem:sumUBU=1}Suppose unitary matrices $\left\{  U_{i}\right\}
_{i=1}^{m}$ satisfies
\begin{equation}
\sum_{i=1}^{m}U_{i}AU_{i}^{\dagger}=\left(  c\mathrm{tr}\,A\right)
\mathbf{1.} \label{UAU=c1-2}%
\end{equation}
Then,
\[
\sum_{i=1}^{m}U_{i}^{\dagger}BU_{i}=\left(  c\mathrm{tr}\,B\right)
\mathbf{1.}%
\]

\end{lemma}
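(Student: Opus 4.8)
We must show that if unitary matrices $\{U_i\}_{i=1}^m$ satisfy $\sum_{i=1}^m U_i A U_i^\dagger = (c\,\mathrm{tr}\,A)\mathbf{1}$ for all $A$, then $\sum_{i=1}^m U_i^\dagger B U_i = (c\,\mathrm{tr}\,B)\mathbf{1}$ for all $B$.

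The plan is to exploit the duality between the two completely positive maps under the Hilbert--Schmidt (trace) pairing. Set $\Phi(A):=\sum_{i=1}^m U_iAU_i^\dagger$ and $\Psi(B):=\sum_{i=1}^m U_i^\dagger BU_i$; the hypothesis (\ref{UAU=c1-2}) says $\Phi(A)=(c\,\mathrm{tr}\,A)\mathbf{1}$ for every $A$, and we want the corresponding identity for $\Psi$.

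First I would observe that $\Phi$ and $\Psi$ are mutually adjoint with respect to the trace pairing: for arbitrary matrices $B,C$, cyclicity of the trace gives
\[
\mathrm{tr}\,\left(\Psi(B)\,C\right)=\sum_{i=1}^m\mathrm{tr}\,\left(U_i^\dagger BU_iC\right)=\sum_{i=1}^m\mathrm{tr}\,\left(B\,U_iCU_i^\dagger\right)=\mathrm{tr}\,\left(B\,\Phi(C)\right).
\]
Now substitute $\Phi(C)=(c\,\mathrm{tr}\,C)\mathbf{1}$ from the hypothesis to obtain
\[
\mathrm{tr}\,\left(\Psi(B)\,C\right)=(c\,\mathrm{tr}\,C)\,\mathrm{tr}\,B=\mathrm{tr}\,\left(\left(c\,\mathrm{tr}\,B\right)\mathbf{1}\cdot C\right).
\]
Since this holds for every $d\times d$ matrix $C$, and the bilinear form $(X,Y)\mapsto\mathrm{tr}\,(XY)$ is non-degenerate on the space of $d\times d$ matrices, we conclude $\Psi(B)=(c\,\mathrm{tr}\,B)\mathbf{1}$, which is exactly the assertion.

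There is essentially no obstacle here; the only points that deserve a word of care are the cyclicity of the trace used in the second equality (legitimate because the $U_i$ are square) and the non-degeneracy of the trace form, which is what licenses passing from equality of all pairings against $C$ to equality of the operators. One may additionally note, by taking $A=\mathbf{1}$ in (\ref{UAU=c1-2}) and tracing, that $c=m/d$, although the argument above does not use this.
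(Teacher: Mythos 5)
Your proof is correct and follows essentially the same route as the paper: the paper likewise pairs the hypothesis against an arbitrary matrix, writing $c\,\mathrm{tr}\,B\,\mathrm{tr}\,A=\sum_{i}\mathrm{tr}\,\bigl(B\,U_{i}AU_{i}^{\dagger}\bigr)=\sum_{i}\mathrm{tr}\,\bigl(U_{i}^{\dagger}BU_{i}\,A\bigr)$ and concludes from the arbitrariness of $A$ via non-degeneracy of the trace form. Your remarks on cyclicity, non-degeneracy, and $c=m/d$ are accurate but not needed beyond what the paper already uses.
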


\begin{proof}
By (\ref{UAU=c1-2}), we have%
\begin{align*}
c\,\mathrm{tr}\,B\mathrm{tr}\,A  &  =\sum_{i=1}^{m}\mathrm{tr}\,\,BU_{i}%
AU_{i}^{\dagger}\,\\
&  =\sum_{i=1}^{m}\mathrm{tr}\,\,U_{i}^{\dagger}BU_{i}A.
\end{align*}
Since this holds for any $A$, we have the assertion.
\end{proof}

\begin{proposition}
\label{prop:mes-rotate}In Example \ref{ex:mes-rotate}, suppose $\alpha_{i}%
\geq0$, $\beta_{i}>0$. Also suppose (\ref{a/b}) holds. Then, $\left\vert
\psi_{in}\right\rangle \left\vert \psi_{R}\right\rangle \succeq^{c}\left\vert
\Phi_{d}\right\rangle $ is equivalent to the following: there is a surjection
$f:\left\{  1,\cdots,m\right\}  \rightarrow\left\{  1,\cdots,d\right\}  $, a
state vector $\left\vert \varphi\right\rangle $ and unimodular complex numbers
$\omega_{i}$ ($i=1$,$\cdots$,$d$) such that
\begin{align*}
\left\vert e_{f\left(  i\right)  }\right\rangle  &  =\omega_{i}U_{i}^{\dagger
}\left\vert \psi_{in}\right\rangle \,,\,\,(i=1,2,\cdots,m),\\
c  &  =\left\vert \left\{  i\,\,;\,\,f\left(  i\right)  =j\right\}
\right\vert \,\,(j=1,2,\cdots,d).
\end{align*}

\end{proposition}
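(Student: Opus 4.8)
The plan is to turn $|\psi_{in}\rangle|\psi_R\rangle\succeq^{c}|\Phi_d\rangle$ into an equality case of an $\ell^1$-contraction and then read off the combinatorics. First I would drop the spectator factor: $\Lambda_\theta$ acts only on $\mathcal{H}_{in}$, so $\Lambda_\theta\otimes\mathbf{I}$ applied to $|\psi_{in}\rangle\langle\psi_{in}|\otimes|\psi_R\rangle\langle\psi_R|$ equals $\Lambda_\theta(|\psi_{in}\rangle\langle\psi_{in}|)\otimes|\psi_R\rangle\langle\psi_R|$, and tensoring by a fixed pure state leaves the comparison class unchanged; so the relevant family is $\{\Lambda_\theta(|\psi_{in}\rangle\langle\psi_{in}|)\}_{\theta\in\{+,-\}}$. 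By (\ref{measurement}) this family is diagonal in the basis $\{|i\rangle\}_{i=1}^{m}$, hence commuting, so Lemma~\ref{lem:commute} applies in both directions and $|\psi_{in}\rangle|\psi_R\rangle\succeq^{c}|\Phi_d\rangle$ is \emph{equivalent} to
\[
\bigl\|\Lambda_+(|\psi_{in}\rangle\langle\psi_{in}|)-s\,\Lambda_-(|\psi_{in}\rangle\langle\psi_{in}|)\bigr\|_1\ \geq\ \bigl\|\Lambda_+\otimes\mathbf{I}(|\Phi_d\rangle\langle\Phi_d|)-s\,\Lambda_-\otimes\mathbf{I}(|\Phi_d\rangle\langle\Phi_d|)\bigr\|_1
\]
for every $s\geq0$.

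Next I would evaluate the two trace norms. The right-hand side equals $\mathrm{tr}\,|M_+-s\,M_-|=\sum_{j=1}^{d}|\alpha_j-s\beta_j|$, exactly as in the computation of Example~\ref{ex:mes-rotate} (the bound there is attained when the input is $|\Phi_d\rangle$). For the left-hand side, put $|\phi_i\rangle:=U_i^{\dagger}|\psi_{in}\rangle$ and $p_{ij}:=\tfrac{1}{c}\,|\langle e_j|\phi_i\rangle|^{2}\geq0$; then (\ref{meas-rotate}) gives
\[
\bigl\|\Lambda_+(|\psi_{in}\rangle\langle\psi_{in}|)-s\,\Lambda_-(|\psi_{in}\rangle\langle\psi_{in}|)\bigr\|_1=\sum_{i=1}^{m}\Bigl|\sum_{j=1}^{d}p_{ij}(\alpha_j-s\beta_j)\Bigr|.
\]
The two identities I need are $\sum_i p_{ij}=1$ for every $j$ — which follows from (\ref{UAU=c1}) applied to $A=|e_j\rangle\langle e_j|$ (equivalently Lemma~\ref{lem:sumUBU=1}) — and $\sum_j p_{ij}=\tfrac1c\sum_j|\langle e_j|\phi_i\rangle|^{2}=\tfrac1c$ for every $i$; together these also force $m=cd$. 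The triangle inequality then gives $\sum_i|\sum_j p_{ij}(\alpha_j-s\beta_j)|\le\sum_j|\alpha_j-s\beta_j|\sum_i p_{ij}=\sum_j|\alpha_j-s\beta_j|$, so the left side never exceeds the right side; hence $|\psi_{in}\rangle|\psi_R\rangle\succeq^{c}|\Phi_d\rangle$ holds if and only if equality holds for \emph{every} $s\geq0$.

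The crux is the equality analysis. Equality in the triangle inequality at a fixed $s$ forces, for each $i$, that all terms $p_{ij}(\alpha_j-s\beta_j)$ with $p_{ij}>0$ share one sign. Since the ratios $\alpha_j/\beta_j$ are pairwise distinct by (\ref{a/b}), for any $j\neq k$ there is an open interval of $s$ on which $\alpha_j-s\beta_j$ and $\alpha_k-s\beta_k$ are both nonzero and of opposite sign; so a row $i$ with $p_{ij}>0$ and $p_{ik}>0$ would make the inequality strict for such $s$. Thus requiring equality for all $s\geq0$ forces every row of $(p_{ij})$ to have exactly one nonzero entry — one, not zero, because the row sum is $\tfrac1c>0$ — say at column $f(i)$, with $p_{i,f(i)}=\tfrac1c$ (this concentration statement is precisely Lemma~\ref{lem:a/b}). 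Unwinding $p_{i,f(i)}=\tfrac1c$ gives $|\langle e_{f(i)}|U_i^{\dagger}|\psi_{in}\rangle|=1$, i.e. $|e_{f(i)}\rangle=\omega_i\,U_i^{\dagger}|\psi_{in}\rangle$ with $|\omega_i|=1$, while $\sum_i p_{ij}=1$ reads $|\{i:f(i)=j\}|=c$, so in particular $f$ is onto. Conversely, given such $f$ and $\omega_i$ one has $p_{ij}=\delta_{j,f(i)}/c$, every row has a single nonzero entry, equality holds for all $s$, and hence $|\psi_{in}\rangle|\psi_R\rangle\succeq^{c}|\Phi_d\rangle$. I expect the only real obstacle to be this ``equality for all $s$'' step — prying the rigid one-column-per-row structure out of a one-parameter family of scalar inequalities — but the distinctness hypothesis (\ref{a/b}) is tailor-made for it, and this is exactly where Lemma~\ref{lem:a/b} does its work.
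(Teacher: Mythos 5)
Your proof is correct, and its skeleton is the paper's: reduce $\left\vert \psi_{in}\right\rangle \left\vert \psi_{R}\right\rangle \succeq^{c}\left\vert \Phi_{d}\right\rangle$ via Lemma\thinspace\ref{lem:commute} (both directions are available because the outputs of (\ref{measurement}) are diagonal, hence commuting) to the requirement $\sum_{j}\left\vert \alpha_{j}-s\beta_{j}\right\vert =\frac{1}{c}\sum_{i}\left\vert \left\langle \psi_{in}\right\vert U_{i}\left(  M_{+}-sM_{-}\right)  U_{i}^{\dagger}\left\vert \psi_{in}\right\rangle \right\vert$ for all $s\geq0$, and then exploit the distinctness of the ratios (\ref{a/b}). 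Where you diverge is the equality analysis. The paper defines the surjection $f$ by grouping the $m$ outcomes against the $d$ terms (``by defining $f$ properly''), applies Lemma\thinspace\ref{lem:a/b} to the row with the extreme ratio $\alpha_{1}/\beta_{1}$, then uses Lemma\thinspace\ref{lem:sumUBU=1} to pass to the orthogonal complement and recurses over $j=2,\cdots,d$. You instead encode everything in the matrix $p_{ij}=\frac{1}{c}\left\vert \left\langle e_{j}\right\vert U_{i}^{\dagger}\left\vert \psi_{in}\right\rangle \right\vert ^{2}$, get the column sums $\sum_{i}p_{ij}=1$ from (\ref{UAU=c1}) in one shot, and extract the row-concentration for all $i$ simultaneously from equality in the triangle inequality at values of $s$ lying strictly between two roots $\alpha_{k}/\beta_{k}<s<\alpha_{j}/\beta_{j}$, where the two affine terms have opposite signs. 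This is a self-contained sign argument rather than a literal invocation of Lemma\thinspace\ref{lem:a/b} (your parenthetical overstates the identification slightly), but it proves the same concentration statement and arguably more transparently: it makes explicit how $f$ arises, which the paper leaves terse, it removes the recursion, and it supplies the converse implication explicitly (computing that $p_{ij}=\delta_{j,f(i)}/c$ together with $\left\vert \left\{  i\,;\,f\left(  i\right)  =j\right\}  \right\vert =c$ restores equality for every $s$, whence $\succeq^{c}$ by the commuting case of Lemma\thinspace\ref{lem:commute}), whereas the paper's proof only spells out necessity. What the paper's route buys in exchange is the isolation of the scalar content in Lemma\thinspace\ref{lem:a/b}, which it reuses elsewhere.
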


\begin{proof}
For $\left\vert \psi_{in}\right\rangle \left\vert \psi_{R}\right\rangle
\succeq^{c}\left\vert \Phi_{d}\right\rangle $ to hold, we have to have
\begin{align*}
\mathrm{tr}\,\left\vert M_{+}-sM_{-}\right\vert  &  =\sum_{i=1}^{d}\left\vert
\alpha_{i}-s\beta_{i}\right\vert \\
&  =\frac{1}{c}\sum_{i=1}^{m}\,\left\vert \left\langle \psi_{in}\right\vert
U_{i}\left(  M_{+}-sM_{-}\right)  U_{i}^{\dagger}\left\vert \psi
_{in}\right\rangle \right\vert ,\,\forall s\geq0\text{.}%
\end{align*}
Therefore, by defining $f$ properly, we have to have%
\begin{align*}
\alpha_{j}  &  =\frac{1}{c}\sum_{i:\,f\left(  i\right)  =j}\,\left\langle
\psi_{in}\right\vert U_{i}M_{+}U_{i}^{\dagger}\left\vert \psi_{in}%
\right\rangle ,\\
\beta_{j}  &  =\,\frac{1}{c}\sum_{i:\,f\left(  i\right)  =j}\,\left\langle
\psi_{in}\right\vert U_{i}M_{-}U_{i}^{\dagger}\left\vert \psi_{in}%
\right\rangle ,\,\,\left(  j=1,\cdots,d\right)  .
\end{align*}
Thus, if $f\left(  i\right)  =1$,
\begin{align*}
\,\frac{\alpha_{1}}{\beta_{1}}  &  =\frac{\left\langle \psi_{in}\right\vert
U_{i}M_{+}U_{i}^{\dagger}\left\vert \psi_{in}\right\rangle }{\left\langle
\psi_{in}\right\vert U_{i}M_{-}U_{i}^{\dagger}\left\vert \psi_{in}%
\right\rangle }\\
&  =\frac{\sum_{j=1}^{d}\left\vert \gamma_{i,j}\right\vert ^{2}\alpha_{j}%
}{\sum_{j=1}^{d}\left\vert \gamma_{i,j}\right\vert ^{2}\beta_{j}},
\end{align*}
where
\[
U_{i}^{\dagger}\left\vert \psi_{in}\right\rangle =\sum_{j=1}^{d}\gamma
_{i,j}\left\vert e_{j}\right\rangle .
\]
By Lemma\thinspace\ref{lem:a/b}, then we should have
\[
\omega_{i}U_{i}^{\dagger}\left\vert \psi\right\rangle =\left\vert
e_{1}\right\rangle .
\]
Thus,%
\[
\alpha_{1}=\frac{1}{c}\sum_{i:\,f\left(  i\right)  =1}\,\left\langle \psi
_{in}\right\vert U_{i}M_{+}U_{i}^{\dagger}\left\vert \psi_{in}\right\rangle
=\frac{1}{c}\left\vert \left\{  i\,;\,f\left(  i\right)  =1\right\}
\right\vert \alpha_{1},
\]
or
\begin{equation}
c=\left\vert \left\{  i\,;\,f\left(  i\right)  =1\right\}  \right\vert .
\label{c=||}%
\end{equation}
Therefore,
\begin{equation}
\frac{1}{c}\sum_{i\,:\,f\left(  i\right)  =1}U_{i}^{\dagger}\left\vert
\psi_{in}\right\rangle \left\langle \psi_{in}\right\vert U_{i}=\left\vert
e_{1}\right\rangle \left\langle e_{1}\right\vert . \label{sum-psi=e}%
\end{equation}
Since by Lemma\thinspace\ref{lem:sumUBU=1}
\[
\frac{1}{c}\sum_{i\,=1}^{m}U_{i}^{\dagger}\left\vert \psi_{in}\right\rangle
\left\langle \psi_{in}\right\vert U_{i}=\mathbf{1}%
\]
holds, we should have
\begin{equation}
\frac{1}{c}\sum_{i\,:\,f\left(  i\right)  \neq1}U_{i}^{\dagger}\left\vert
\psi_{in}\right\rangle \left\langle \psi_{in}\right\vert U_{i}=\sum_{j=2}%
^{d}\left\vert e_{j}\right\rangle \left\langle e_{j}\right\vert .
\label{sum-psi-e2}%
\end{equation}

Therefore, with $f\left(  i\right)  >1$, we should have
\[
\omega_{i}U_{i}^{\dagger}\left\vert \psi_{in}\right\rangle =\sum_{j=2}%
^{d}\gamma_{j}\left\vert e_{j}\right\rangle ,
\]
and if $f\left(  i\right)  =2$,
\begin{align*}
\,\frac{\alpha_{2}}{\beta_{2}}  &  =\frac{\left\vert \left\langle \psi
_{in}\right\vert U_{i}M_{+}U_{i}^{\dagger}\left\vert \psi_{in}\right\rangle
\right\vert }{\left\vert \left\langle \psi_{in}\right\vert U_{i}M_{-}%
U_{i}^{\dagger}\left\vert \psi_{in}\right\rangle \right\vert }\\
&  =\frac{\sum_{j=2}^{d}\left\vert \gamma_{j}\right\vert ^{2}\alpha_{j}}%
{\sum_{j=2}^{d}\left\vert \gamma_{j}\right\vert ^{2}\beta_{j}}.
\end{align*}
Then, by Lemma\thinspace\ref{lem:a/b}, we should have
\[
\omega_{i}U_{i}^{\dagger}\left\vert \psi_{in}\right\rangle =\left\vert
e_{2}\right\rangle .
\]
Therefore, using the same argument as the one derived (\ref{c=||}),
(\ref{sum-psi=e}), and (\ref{sum-psi-e2}), we have
\begin{align*}
c  &  =\left\vert \left\{  i\,;\,f\left(  i\right)  =2\right\}  \right\vert
,\\
\frac{1}{c}\sum_{i\,:\,f\left(  i\right)  =2}U_{i}^{\dagger}\left\vert
\psi_{in}\right\rangle \left\langle \psi_{in}\right\vert U_{i}  &  =\left\vert
e_{2}\right\rangle \left\langle e_{2}\right\vert ,\\
\frac{1}{c}\sum_{i\,:\,f\left(  i\right)  \neq1,2}U_{i}^{\dagger}\left\vert
\psi_{in}\right\rangle \left\langle \psi_{in}\right\vert U_{i}  &  =\sum
_{i=3}^{d}\left\vert e_{i}\right\rangle \left\langle ei\right\vert .
\end{align*}
Recursively, for each $j$, we obtain
\begin{align*}
\omega_{i}U_{i}^{\dagger}\left\vert \psi_{in}\right\rangle  &  =\left\vert
e_{\tilde{f}\left(  i\right)  }\right\rangle \\
c  &  =\left\vert \left\{  i\,;\,f\left(  i\right)  =j\right\}  \right\vert \\
\frac{1}{c}\sum_{i\,:\,f\left(  i\right)  =j}U_{i}^{\dagger}\left\vert
\psi_{in}\right\rangle \left\langle \psi_{in}\right\vert U_{i}  &  =\left\vert
e_{j}\right\rangle \left\langle e_{j}\right\vert ,\\
\frac{1}{c}\sum_{i\,:\,f\left(  i\right)  \neq1,2,\cdots,j}U_{i}^{\dagger
}\left\vert \psi_{in}\right\rangle \left\langle \psi_{in}\right\vert U_{i}  &
=\sum_{i=j+1}^{d}\left\vert e_{i}\right\rangle \left\langle e_{i}\right\vert .
\end{align*}
Thus we obtain the assertion of the proposition.
\end{proof}

\subsection{Entanglement breaking channels which requires entanglement}

In \cite{Sacchi:2005:3}, they had shown that Bayes error probability of
hypothesis testing of a pair of entanglement breaking channel is smaller with
a maximally entangled input state than with any separable input states.
Likewise, we point out that a maximally entangled state is universally optimal
and strictly universally better than any separable state for some families of
entanglement breaking channels. Such families of entanglement breaking
channels can be composed using Theorem\thinspace\ref{th:e-not-needed} and
Propositions\thinspace\ref{prop:meas-ortho} and \ref{prop:mes-rotate}.

First, let us compose such family in the form of $\left\{  \Lambda
_{2,\xi_{\theta}}^{\mathrm{gp}}\right\}  _{\not \theta \in\Theta}$ using
Theorem\thinspace\ref{th:e-not-needed}. Observe $\Lambda_{2,\xi_{\theta}%
}^{\mathrm{gp}}$ is entanglement breaking if and only if \ $\Lambda
_{2,\xi_{\theta}}^{\mathrm{gp}}\otimes\mathbf{I}\left(  \left\vert \Phi
_{2}\right\rangle \left\langle \Phi_{2}\right\vert \right)  $ is separable. By
PPT criteria\thinspace\cite{Horodecki:1996}, this is equivalent to
\begin{equation}
\xi_{\theta}^{0}+\xi_{\theta}^{3}\geq\left\vert \xi_{\theta}^{1}-\xi_{\theta
}^{2}\right\vert ,\,\xi_{\theta}^{1}+\xi_{\theta}^{2}\geq\left\vert
\xi_{\theta}^{0}-\xi_{\theta}^{3}\right\vert , \label{gdep-e-break}%
\end{equation}
where $\xi_{\theta}^{0}:=1-\xi_{\theta}^{1}-\xi_{\theta}^{2}-\xi_{\theta}^{3}%
$. If a family$\left\{  \xi_{\theta}\right\}  _{\theta\in\Theta}$ satisfies
does not satisfy the hypothesis of Theorem\thinspace\ref{th:e-not-needed} is
not true, $\left\{  \Lambda_{2,\xi_{\theta}}^{\mathrm{gp}}\right\}
_{\not \theta \in\Theta}$ is an example of a family of channels with desired
properties. In particular, if $\left\{  \xi_{\theta}\right\}  _{\theta
\in\Theta}$ is not on the straight line, this is the case. Even if $\left\{
\xi_{\theta}\right\}  _{\theta\in\Theta}$ is on a straight line with
$\xi_{\theta_{1}}\neq\xi_{\theta_{2}}$, if no pair out of $\xi_{\theta_{1}%
}^{0}/\xi_{\theta_{2}}^{0}$, $\xi_{\theta_{1}}^{1}/\xi_{\theta_{2}}^{1}$,
$\xi_{\theta_{1}}^{2}/\xi_{\theta_{2}}^{2}$ and $\xi_{\theta_{1}}^{3}%
/\xi_{\theta_{2}}^{3}$ equals with each other, we also obtain an example of an
entanglement breaking channel with desired properties.

Second, consider POVM $\left\{  M_{+}\left(  i\right)  \right\}  $ \ such that
constituent operators are of unit rank and not orthogonal with each other.
Also, define POVM $\left\{  M_{-}\left(  i\right)  \right\}  $ by
(\ref{meas-ortho-1}). Then, by Proposition \ref{prop:meas-ortho}, the channel
family $\left\{  \Lambda_{\theta}\right\}  _{\theta\in\left\{  +,-\right\}  }$
defined via (\ref{measurement}) has desired property. For example, consider a
measurement with POVM
\begin{align*}
M_{+}\left(  1\right)   &  =\frac{1}{2a^{2}}\left[
\begin{array}
[c]{cc}%
a^{2} & ab\\
ab & b^{2}%
\end{array}
\right]  ,M_{+}\left(  2\right)  =\frac{1}{2a^{2}}\left[
\begin{array}
[c]{cc}%
a^{2} & -ab\\
-ab & b^{2}%
\end{array}
\right] \\
M_{+}\left(  3\right)   &  =\frac{1}{2a^{2}}\left[
\begin{array}
[c]{cc}%
0 & 0\\
0 & 2a^{2}-2b^{2}%
\end{array}
\right]  ,\\
M_{-}\left(  i\right)   &  =\mathrm{tr}\,M_{+}\left(  i\right)  \mathbf{1}%
-M_{+}\left(  i\right)  ,
\end{align*}
where $a>b>0$.

Finally, by Proposition\thinspace\ref{prop:mes-rotate}, we can add another set
of examples. \ Observe
\[
\sum_{i,j=0}^{d-1}X_{d}^{i}Z_{d}^{j}\,A\left(  X_{d}^{i}Z_{d}^{j}\right)
^{\dagger}=\mathbf{1},
\]
where $X_{d}$ , $Z_{d}$ are defined by (\ref{g-pauli}). By
Proposition\thinspace\ref{prop:mes-rotate}, if there is $\left\vert \psi
_{in}\right\rangle \left\vert \psi_{R}\right\rangle $ with $\left\vert
\psi_{in}\right\rangle \left\vert \psi_{R}\right\rangle \succeq^{c}\left\vert
\Phi_{d}\right\rangle $, we should have
\[
\left\vert e_{\tilde{f}\left(  i,j\right)  }\right\rangle =\omega_{ij}%
^{\prime}\left(  X_{d}^{i}Z_{d}^{j}\,\right)  ^{\dagger}\left\vert \psi
_{in}\right\rangle ,
\]
where $\tilde{f}\left(  i,j\right)  $ is a surjection onto $\left\{
1,\cdots,d\right\}  $ and $\omega_{ij}^{\prime}$ is a unimodular complex
number. Therefore, with $f\left(  i^{\prime},j^{\prime}\right)  =1$,%

\[
\left\vert e_{\tilde{f}\left(  i,j\right)  }\right\rangle =\omega_{ij}%
^{\prime}\overline{\omega_{i^{\prime}j^{\prime}}^{\prime}}\left(  X_{d}%
^{i}Z_{d}^{j}\,\right)  ^{\dagger}X_{d}^{i^{\prime}}Z_{d}^{j^{\prime}%
}\,\left\vert e_{1}\right\rangle .
\]
Therefore, by (\ref{g-pauli-property}), there is a surjection $f\left(
i,j\right)  $ onto $\left\{  1,\cdots,d\right\}  $ and a unimodular complex
number $\omega_{ij}$ with
\[
\left\vert e_{f\left(  i,j\right)  }\right\rangle =\omega_{ij}X_{d}^{i}%
Z_{d}^{j}\,\left\vert e_{1}\right\rangle .
\]

For example, let
\[
\left\vert e_{1}\right\rangle =\sum_{i=1}^{d}a_{i}\left\vert i\right\rangle ,
\]
where $a_{i}>0$ and $a_{i}\neq a_{j}$ ($i,j$). Then, $X_{d}\left\vert
e_{1}\right\rangle $ is neither parallel or orthogonal to $\left\vert
e_{1}\right\rangle $. Therefore, the conditions indicated by
Proposition\thinspace\ref{prop:mes-rotate} are not satisfied, and we have a
channel family with desired property.

\bigskip

\section{Iterative use of a channel}

\label{sec:iteration}

Allowed to use given channel $\Lambda_{\theta}$ for $n$ times, one may send in
identical $n$-copies of an input (\textit{identical repetition}), or create a
large entangled state in $\mathcal{H}_{in}^{\otimes n}$ and send to the
channels $\Lambda_{\theta}^{\otimes n}$ (\textit{parallel repetition}), or
modify the input depending on the output of the previous use of the channel
(\textit{sequential repetition}). By definition, an identical repetition is a
special case of a parallel repetition, which, in turn, is a special case of a
sequential repetition.

The final output state of the identical repetition with the input state
$\left\vert \psi\right\rangle ^{\otimes n}\in$ $\left(  \mathcal{H}%
_{in}\otimes\mathcal{H}_{R}\right)  ^{\otimes n}$ and the parallel repetition
with the input state $\left\vert \psi^{n}\right\rangle \in$ $\left(
\mathcal{H}_{in}\otimes\mathcal{H}_{R}\right)  ^{\otimes n}$ is%

\[
\rho_{\mathrm{if},\theta}^{n}=\left\{  \Lambda_{\theta}\otimes\mathbf{I}%
\left(  \left\vert \psi\right\rangle \left\langle \psi\right\vert \right)
\right\}  ^{\otimes n}\in\mathcal{B}\left(  \left(  \mathcal{H}_{out}%
\otimes\mathcal{H}_{R}\right)  ^{\otimes n}\right)  ,
\]
and%
\[
\rho_{\mathrm{pf},\theta}^{n}=\Lambda_{\theta}^{\otimes n}\otimes
\mathbf{I}\left(  \left\vert \psi^{n}\right\rangle \left\langle \psi
^{n}\right\vert \right)  \in\mathcal{B}\left(  \left(  \mathcal{H}%
_{out}\otimes\mathcal{H}_{R}\right)  ^{\otimes n}\right)  ,
\]
respectively. To describe the final output state of sequential repetition, we
introduce a series of Hilbert spaces $\left\{  \mathcal{H}_{in,i}\right\}
_{i=1}^{n}$, $\left\{  \mathcal{H}_{out,\,i}\right\}  _{i=1}^{n}$,
$\mathcal{H}_{R}^{n}$, where $\mathcal{H}_{in,\,i}\simeq\mathcal{H}_{in}$ and
$\mathcal{H}_{out,\,i}\simeq\mathcal{H}_{out}$ ($i=1,\cdots,n$), and a series
of completely positive trace preserving maps $\left\{  \Upsilon_{i}\right\}
_{i=1}^{n-1}$ from $\mathcal{B}\left(  \mathcal{H}_{out,i}\otimes
\mathcal{H}_{R}^{n}\right)  $ to $\mathcal{B}\left(  \mathcal{H}%
_{in,i+1}\otimes\mathcal{H}_{R}^{n}\right)  $. Here, dimension of
$\mathcal{H}_{R}^{n}$ is finite and large enough (in fact, $\dim
\mathcal{H}_{R}^{n}=\left(  \dim\mathcal{H}_{in}\right)  ^{n+1}\left(
\dim\mathcal{H}_{out}\right)  ^{n}$ \ is enough.) With the initial state
$\left\vert \psi\right\rangle \in\mathcal{H}_{in,\,1}\otimes\mathcal{H}%
_{R}^{n}$, the final output state of the sequential scheme is
\begin{align*}
\rho_{\mathrm{sf},\theta}^{n}  &  :=\left(  \Lambda_{\theta}\otimes
\mathbf{I}\right)  \circ \Upsilon_{n-1}\cdots\circ\left(  \Lambda_{\theta
}\otimes\mathbf{I}\right)  \circ \Upsilon_{2}\circ\left(  \Lambda_{\theta
}\otimes\mathbf{I}\right)  \circ \Upsilon_{1}\circ\left(  \Lambda_{\theta
}\otimes\mathbf{I}\right)  \left(  \left\vert \psi\right\rangle \left\langle
\psi\right\vert \right)  ,\\
&  \in\mathcal{B}\left(  \mathcal{H}_{out,\,n}\otimes\mathcal{H}_{R}%
^{n}\right)
\end{align*}
to which the measurement $M_{n}$ is applied.

\begin{theorem}
Let $\left\{  \Lambda_{\theta}\right\}  _{\theta\in\Theta}$ be covariant or
contravariant channels. Then, the universally optimal identical repetition
strategy achieves the figure of merit that can be achieved by the universally
optimal sequential repetition strategy. Here the optimal input state is
$\left\vert \psi_{opt}\right\rangle ^{\otimes n}$, where $\left\vert
\psi_{opt}\right\rangle $ is as of (\ref{covariant-opt}).
\end{theorem}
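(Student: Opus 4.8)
The plan is to prove the stronger statement that, for \emph{every} sequential repetition strategy, the output family $\{\rho_{\mathrm{sf},\theta}^{n}\}_{\theta\in\Theta}$ satisfies
\[
\bigl\{\rho_{\mathrm{if},\theta}^{n}\bigr\}_{\theta\in\Theta}\succeq^{c}\bigl\{\rho_{\mathrm{sf},\theta}^{n}\bigr\}_{\theta\in\Theta},
\]
where the identical-repetition output $\rho_{\mathrm{if},\theta}^{n}$ is built from the input $|\psi_{opt}\rangle^{\otimes n}$. Since an identical repetition is a special case of a sequential one, this forces the optimal figure of merit to agree for the two classes of strategies; and the fact that $|\psi_{opt}\rangle^{\otimes n}$ is universally optimal within the identical class is immediate by applying the single-use gadget of Theorem~\ref{th:covariant-opt-input} factor by factor. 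By Proposition~\ref{prop:randomize}, it then suffices to produce, for a fixed sequential strategy with maps $\{\Upsilon_{i}\}_{i=1}^{n-1}$ and initial state $|\psi\rangle$, one \emph{$\theta$-independent} CPTP map $\Gamma^{\mathrm{seq}}$ with $\Gamma^{\mathrm{seq}}(\rho_{\mathrm{if},\theta}^{n})=\rho_{\mathrm{sf},\theta}^{n}$ for all $\theta$.

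The essential ingredient is the gadget already built inside the proof of Theorem~\ref{th:covariant-opt-input}. Reading that proof with the external register kept abstract, it yields a single fixed CPTP map $T$ that, fed one copy of the resource state $(\Lambda_{\theta}\otimes\mathbf{I})(|\psi_{opt}\rangle\langle\psi_{opt}|)$ and an arbitrary state $\xi$ on $\mathcal{H}_{in}\otimes\mathcal{K}$, outputs exactly $(\Lambda_{\theta}\otimes\mathbf{I}_{\mathcal{K}})(\xi)$: it only performs the covariant measurement against the reference of the resource copy, followed by the conditional correction $V_{g}^{\dagger}$ on $\mathcal{H}_{out}$ (for contravariant channels one uses the contravariant analogue treated in the same proof). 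Neither step depends on $\theta$ or on which state is fed into the input port, and the ancilla $\mathcal{K}$ — which here will be $\mathcal{H}_{R}^{n}$ together with the resource copies not yet consumed — is carried through untouched.

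I would then assemble $\Gamma^{\mathrm{seq}}$ by interleaving: internally prepare the fixed state $|\psi\rangle$ on $\mathcal{H}_{in,1}\otimes\mathcal{H}_{R}^{n}$; apply $T$ with resource copy $1$ to get $(\Lambda_{\theta}\otimes\mathbf{I})(|\psi\rangle\langle\psi|)$ on $\mathcal{H}_{out,1}\otimes\mathcal{H}_{R}^{n}$; apply $\Upsilon_{1}$; apply $T$ with resource copy $2$; apply $\Upsilon_{2}$; and so on, finishing with $T$ on resource copy $n$. Since each $T$ reproduces exactly one factor $\Lambda_{\theta}\otimes\mathbf{I}$ and the $\Upsilon_{i}$ are inserted in the same order and places as in $\rho_{\mathrm{sf},\theta}^{n}$, the composite is CPTP, visibly independent of $\theta$, and sends $\rho_{\mathrm{if},\theta}^{n}=\bigotimes_{k=1}^{n}(\Lambda_{\theta}\otimes\mathbf{I})(|\psi_{opt}\rangle\langle\psi_{opt}|)$ to $\rho_{\mathrm{sf},\theta}^{n}$. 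Proposition~\ref{prop:randomize} then yields the comparison, and specializing to the optimal sequential strategy gives the theorem.

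The step I expect to be the main obstacle is the bookkeeping that certifies $T$ still behaves correctly when applied \emph{mid-protocol}, i.e.\ to a state that already depends on $\theta$ and is entangled with the large ancilla holding the remaining resource copies. This rests on two observations: first, $T$ is one fixed quantum operation whose action is linear in the state presented at its input port, so the verification carried out in the proof of Theorem~\ref{th:covariant-opt-input} for a pure input $|\psi\rangle$ extends verbatim to an arbitrary (even $\theta$-dependent) input; and second, the reference register $\mathcal{H}_{R}^{n}$ and the unused resources enter only through identity channels, so the only place the group representation is needed is on the input slots $\mathcal{H}_{in,k}\simeq\mathcal{H}_{in}$, where it is present by construction.
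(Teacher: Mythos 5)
Your proposal is correct and follows essentially the same route as the paper: invoke Proposition\,\ref{prop:randomize} and build the $\theta$-independent simulation map by interleaving the single-use gadget constructed in the proof of Theorem\,\ref{th:covariant-opt-input} (consuming one copy of $\Lambda_{\theta}\otimes\mathbf{I}\left(\left\vert\psi_{opt}\right\rangle\left\langle\psi_{opt}\right\vert\right)$ per step) with the sequential strategy's maps $\Upsilon_{i}$. Your explicit justification that the gadget, being a fixed channel acting linearly on its input port, still works mid-protocol on $\theta$-dependent states entangled with the ancilla is a point the paper leaves implicit, but it is the same argument.
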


\begin{proof}
By Proposition\thinspace\ref{prop:randomize}, we only have to compose a CPTP
map $\tilde{\Gamma}^{n}$ with
\[
\rho_{\mathrm{sf},\theta}^{n}=\tilde{\Gamma}^{n}\left(  \left\{
\Lambda_{\theta}\otimes\mathbf{I}\left(  \left\vert \psi_{opt}\right\rangle
\left\langle \psi_{opt}\right\vert \right)  \right\}  ^{\otimes n}\right)  ,
\]
where, with $\mathcal{H}_{R}\simeq\mathcal{H}_{in}$ ,
\[
\left\{  \Lambda_{\theta}\otimes\mathbf{I}\left(  \left\vert \psi
_{opt}\right\rangle \left\langle \psi_{opt}\right\vert \right)  \right\}
^{\otimes n}\in\mathcal{B}\left(  \left(  \mathcal{H}_{out}\otimes
\mathcal{H}_{R}\right)  ^{\otimes n}\right)  .
\]

The composition of $\tilde{\Gamma}^{n}$ is as follows. Define $\mathcal{H}%
_{in,i}^{\prime}$, $\mathcal{H}_{R}^{\prime n}$ with the same dimension as
$\mathcal{H}_{in,i}$, $\mathcal{H}_{R}^{n}$ which would have used in the
sequential repetition protocol resulting the final state $\rho_{\mathrm{sf}%
,\theta}^{n}$. Prepare $\left\vert \psi\right\rangle $ in $\mathcal{H}%
_{in,1}^{\prime}\otimes\mathcal{H}_{R}^{\prime n}$, and apply $\Gamma$, which
is composed in the proof of Theorem\thinspace\ref{th:covariant-opt-input},
jointly to $\mathcal{H}_{in,1}^{\prime}$-part of $\left\vert \psi\right\rangle
$ and $\mathcal{H}_{R}$-part of $\Lambda_{\theta}\otimes\mathbf{I}\left(
\left\vert \psi_{opt}\right\rangle \left\langle \psi_{opt}\right\vert \right)
\in\mathcal{B}\left(  \mathcal{H}_{out}\otimes\mathcal{H}_{R}\right)  $,
producing $\left(  \Lambda_{\theta}\otimes\mathbf{I}\right)  \left(
\left\vert \psi\right\rangle \left\langle \psi\right\vert \right)  $ in the
space $\mathcal{H}_{out}\otimes\mathcal{H}_{R}^{n\prime}$. Then apply
$\Upsilon_{1}$, producing $\Upsilon_{1}\circ\left(  \Lambda_{\theta}%
\otimes\mathbf{I}\right)  \left(  \left\vert \psi\right\rangle \left\langle
\psi\right\vert \right)  $ in $\mathcal{H}_{in,2}^{\prime}\otimes
\mathcal{H}_{R}^{n\prime}$. Repeating this for $n$ times, composition of
$\tilde{\Gamma}^{n}$ is done.
\end{proof}

\begin{proposition}
Consider the family $\left\{  \Lambda_{\theta}\right\}  _{\theta\in\left\{
+,-\right\}  }$ in Examples\thinspace\ref{ex:ortho-measure} and
\ref{ex:mes-rotate}. Then, a universally optimal input state for parallel
repetition is $\left\vert \Phi_{d}\right\rangle ^{\otimes n}$ (identical repetition).
\end{proposition}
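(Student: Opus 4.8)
The plan is to observe that the parallel-repetition family $\{\Lambda_{\theta}^{\otimes n}\}_{\theta\in\{+,-\}}$ is itself of the form considered in Section~\ref{sec:measurement}, and in fact stays inside whichever of the two classes (Example~\ref{ex:ortho-measure} or Example~\ref{ex:mes-rotate}) the original pair belongs to, only with $d$ and $m$ replaced by $d^{n}$ and $m^{n}$. Concretely, $\Lambda_{\theta}^{\otimes n}$ is the measurement channel (\ref{measurement}) with output space $(\mathbb{C}^{m})^{\otimes n}$ and POVM elements $M_{\theta}^{(n)}(\vec{i}):=M_{\theta}(i_{1})\otimes\cdots\otimes M_{\theta}(i_{n})$, $\vec{i}\in\{1,\dots,m\}^{n}$; and a parallel-repetition protocol with an arbitrary input $\rho_{in}\in\mathcal{B}\left(\left(\mathcal{H}_{in}\otimes\mathcal{H}_{R}\right)^{\otimes n}\right)$ is, after the canonical reshuffle of tensor factors grouping the $n$ reference systems together, nothing but feeding $\rho_{in}$ through $\Lambda_{\theta}^{\otimes n}\otimes\mathbf{I}$. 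Since the reshuffle is a fixed unitary independent of $\theta$, it preserves the order $\succeq^{c}$ between inputs, and under it the identical-repetition input $\left\vert\Phi_{d}\right\rangle^{\otimes n}$ is carried exactly to the single maximally entangled state $\left\vert\Phi_{d^{n}}\right\rangle$ on $\mathcal{H}_{in}^{\otimes n}\otimes\mathcal{H}_{R}^{\otimes n}$. Hence it suffices to show that $\left\vert\Phi_{d^{n}}\right\rangle$ is universally optimal for the single channel family $\{\Lambda_{\theta}^{\otimes n}\}$.

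For the family of Example~\ref{ex:ortho-measure}, the hypothesis $M_{+}(i)M_{-}(i)=0$ for all $i$ gives $M_{+}^{(n)}(\vec{i})M_{-}^{(n)}(\vec{i})=\bigotimes_{k}M_{+}(i_{k})M_{-}(i_{k})=0$ for every $\vec{i}$, so $\{\Lambda_{\theta}^{\otimes n}\}$ again satisfies (\ref{meas-ortho}) and Example~\ref{ex:ortho-measure} (equivalently, condition (\ref{c=0}) with $\rho_{\psi}=\mathbf{1}/d^{n}$) yields the universal optimality of $\left\vert\Phi_{d^{n}}\right\rangle$. For the family of Example~\ref{ex:mes-rotate}, put $\widetilde{U}_{\vec{i}}:=U_{i_{1}}\otimes\cdots\otimes U_{i_{n}}$; then $M_{\theta}^{(n)}(\vec{i})=c^{-n}\,\widetilde{U}_{\vec{i}}\,M_{\theta}^{\otimes n}\,\widetilde{U}_{\vec{i}}^{\dagger}$, relation (\ref{UAU=c1}) tensorizes to $\sum_{\vec{i}}\widetilde{U}_{\vec{i}}A\widetilde{U}_{\vec{i}}^{\dagger}=c^{n}\left(\mathrm{tr}\,A\right)\mathbf{1}$, and $[M_{+}^{\otimes n},M_{-}^{\otimes n}]=0$, $M_{\theta}^{\otimes n}\geq 0$, $\mathrm{tr}\,M_{\theta}^{\otimes n}=1$. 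Thus $\{\Lambda_{\theta}^{\otimes n}\}$ is an instance of Example~\ref{ex:mes-rotate} with $(d,m,c,\{U_{i}\},M_{\theta})$ replaced by $(d^{n},m^{n},c^{n},\{\widetilde{U}_{\vec{i}}\},M_{\theta}^{\otimes n})$, so Example~\ref{ex:mes-rotate} again gives that $\left\vert\Phi_{d^{n}}\right\rangle$ is universally optimal. Undoing the reshuffle, $\left\vert\Phi_{d}\right\rangle^{\otimes n}\equiv^{c}\left\vert\Phi_{d^{n}}\right\rangle$ is therefore universally optimal for parallel repetition, i.e.\ identical repetition with $n$ maximally entangled pairs is as good as any parallel strategy.

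I do not expect a genuine obstacle here: the whole content is the observation that the two classes of measurement families are closed under tensor powers, and the proof piggybacks on Examples~\ref{ex:ortho-measure} and~\ref{ex:mes-rotate} (hence, ultimately, on Proposition~\ref{prop:randomize} and Lemma~\ref{lem:commute}) without any new estimate. The only points demanding care are the bookkeeping checks that the product POVMs $\{M_{\theta}^{(n)}(\vec{i})\}$ really inherit the defining conditions --- orthogonality in one case, the averaging identity (\ref{UAU=c1}) together with commutativity and trace normalization in the other --- under $n$-fold tensoring, and that the identification $\left\vert\Phi_{d}\right\rangle^{\otimes n}\simeq\left\vert\Phi_{d^{n}}\right\rangle$ is implemented by a reshuffle unitary that, on the reference side, is a local unitary and so preserves $\equiv^{c}$.
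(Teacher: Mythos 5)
Your proposal is correct and is essentially the paper's own argument: the paper's proof is just the one-line observation that the classes of Examples~\ref{ex:ortho-measure} and \ref{ex:mes-rotate} are closed under tensor powers, so that $\left\vert\Phi_{d^{n}}\right\rangle=\left\vert\Phi_{d}\right\rangle^{\otimes n}$ is universally optimal for $\{\Lambda_{\theta}^{\otimes n}\}$. You simply spell out the bookkeeping (tensorization of (\ref{meas-ortho}) and of (\ref{UAU=c1}) with commutativity and normalization, plus the $\theta$-independent reshuffle identifying $\left\vert\Phi_{d}\right\rangle^{\otimes n}$ with $\left\vert\Phi_{d^{n}}\right\rangle$) that the paper leaves implicit.
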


\begin{proof}
If $\ \Lambda_{\theta}$ is in the form of (\ref{meas-ortho}), or of
(\ref{meas-ortho}), so is $\Lambda_{\theta}^{\otimes n}$. Therefore,
$\left\vert \Phi_{d^{n}}\right\rangle =\left\vert \Phi_{d}\right\rangle
^{\otimes n}$ is optimal.
\end{proof}

This proposition motivates following definition of \textit{classical
adaptation}: Given $\Lambda_{\theta}^{\otimes n}$, we divide this into
$\Lambda_{\theta}^{\otimes n_{1}}$, $\Lambda_{\theta}^{\otimes n_{2}}$%
,$\cdots$,$\Lambda_{\theta}^{\otimes n_{m}}$, with $\sum_{i=1}^{m}n_{i}=n$. We
know that preparing input state separately in each block $\left\vert \psi
_{1}\right\rangle \in\mathcal{H}_{in}^{\otimes n_{1}}$, $\left\vert \psi
_{2}\right\rangle \in\mathcal{H}_{in}^{\otimes n_{1}}$,$\cdots$,$\left\vert
\psi_{m}\right\rangle \in\mathcal{H}_{in}^{\otimes n_{m}}$ can achieve the
same as the optimal parallel repetition. So the question arises whether we can
do better by choosing $\left\vert \psi_{j}^{x^{j-1}}\right\rangle $ depending
on the data $x^{j-1}=\left(  x_{1},x_{2},\cdots,x_{j-1}\right)  $ from
measurements $M_{1}$, $M_{2}^{x^{1}}$, $\cdots$,$M_{j-1}^{x^{j-2}}$ applied to
$\left\vert \psi_{1}\right\rangle $, $\left\vert \psi_{2}^{x^{1}}\right\rangle
$, $\cdots$, $\left\vert \psi_{j-1}^{x^{j-2}}\right\rangle $, respectively.
(Note here the measurement at $j$th step is depends on the previous data
sequence $x^{j-1}=\left(  x_{1},x_{2},\cdots,x_{j-1}\right)  $.)

\begin{theorem}
Consider a channel family in Example\thinspace\ref{ex:ortho-measure} or
\ref{ex:mes-rotate}. Then, classical adaptation does not improve identical repetition.
\end{theorem}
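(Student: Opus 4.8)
The plan is to prove the stronger fact that, for both families, the output family of the identical repetition with input $\vert\Phi_{d}\rangle^{\otimes n}$ is $\succeq^{c}$ to the data family produced by \emph{any} classically adaptive protocol. Since identical repetition is itself the special classically adaptive protocol consisting of a single block ($m=1$, $n_{1}=n$) with input $\vert\Phi_{d}\rangle^{\otimes n}$ and an arbitrary final measurement, this yields equality of the optimal figure of merit for every loss function, prior and decision space, which is the assertion. So fix a classically adaptive protocol: blocks $\Lambda_{\theta}^{\otimes n_{1}},\dots,\Lambda_{\theta}^{\otimes n_{m}}$ with $\sum_{j}n_{j}=n$, input $\vert\psi_{j}^{x^{j-1}}\rangle$ to the $j$-th block chosen as a function of the data $x^{j-1}=(x_{1},\dots,x_{j-1})$ recorded so far, and measurement $M_{j}^{x^{j-1}}$ on that block's output producing $x_{j}$. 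Because each $M_{j}^{x^{j-1}}$ acts only on its own block, the joint law of $\vec{x}=(x_{1},\dots,x_{m})$ under $\theta$ factorizes:
\[
P^{\mathrm{ad}}_{\theta}(\vec{x})=\prod_{j=1}^{m}p_{\theta}\!\left(x_{j}\mid x^{j-1}\right),\qquad p_{\theta}\!\left(\,\cdot\mid x^{j-1}\right)=P^{M_{j}^{x^{j-1}}}_{\Lambda_{\theta}^{\otimes n_{j}}\otimes\mathbf{I}\left(\vert\psi_{j}^{x^{j-1}}\rangle\langle\psi_{j}^{x^{j-1}}\vert\right)}.
\]

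The first step is a block-wise reduction to the identical input. For both families $\Lambda_{\theta}^{\otimes n_{j}}$ again belongs to the class covered by the preceding Proposition: in Example~\ref{ex:ortho-measure} the product POVM elements $\bigotimes_{l}M_{\theta}(i_{l})$ still satisfy (\ref{meas-ortho}), while in Example~\ref{ex:mes-rotate} one has $\bigotimes_{l}M_{\theta}(i_{l})=c^{-n_{j}}\left(\bigotimes_{l}U_{i_{l}}\right)M_{\theta}^{\otimes n_{j}}\left(\bigotimes_{l}U_{i_{l}}\right)^{\dagger}$ with $M_{\theta}^{\otimes n_{j}}$ the $n_{j}$-fold tensor power of the operator $M_{\theta}$ of that example, where $[M_{+}^{\otimes n_{j}},M_{-}^{\otimes n_{j}}]=0$, $\mathrm{tr}\,M_{\theta}^{\otimes n_{j}}=1$, and $\sum_{i_{1},\dots,i_{n_{j}}}\left(\bigotimes_{l}U_{i_{l}}\right)A\left(\bigotimes_{l}U_{i_{l}}\right)^{\dagger}=c^{n_{j}}(\mathrm{tr}\,A)\mathbf{1}$ by iterating (\ref{UAU=c1}). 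Hence $\vert\Phi_{d}\rangle^{\otimes n_{j}}$ is universally optimal for $\Lambda_{\theta}^{\otimes n_{j}}$, so
\[
\left(\Lambda_{\theta}\otimes\mathbf{I}\left(\vert\Phi_{d}\rangle\langle\Phi_{d}\vert\right)\right)^{\otimes n_{j}}\succeq^{c}\Lambda_{\theta}^{\otimes n_{j}}\otimes\mathbf{I}\left(\vert\psi_{j}^{x^{j-1}}\rangle\langle\psi_{j}^{x^{j-1}}\vert\right),
\]
and by Lemma~\ref{lem:sufficient} there is, for each value of $x^{j-1}$, a single measurement $\tilde{M}_{j}^{x^{j-1}}$ on $\left(\Lambda_{\theta}\otimes\mathbf{I}\left(\vert\Phi_{d}\rangle\langle\Phi_{d}\vert\right)\right)^{\otimes n_{j}}$ whose outcome distribution equals $p_{\theta}\!\left(\,\cdot\mid x^{j-1}\right)$ simultaneously for $\theta=+,-$. (For Example~\ref{ex:ortho-measure} one may instead simply note that already a single copy of $\Lambda_{\theta}\otimes\mathbf{I}\left(\vert\Phi_{d}\rangle\langle\Phi_{d}\vert\right)$ identifies $\theta$ with certainty.)

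The second step assembles these into one measurement on $\rho_{\mathrm{if},\theta}^{n}=\bigotimes_{j=1}^{m}\left(\Lambda_{\theta}\otimes\mathbf{I}\left(\vert\Phi_{d}\rangle\langle\Phi_{d}\vert\right)\right)^{\otimes n_{j}}$, the output of the identical repetition: measure the first block with $\tilde{M}_{1}$, record $x_{1}$; measure the second block with $\tilde{M}_{2}^{x_{1}}$, record $x_{2}$; and so on through block $m$. As $\rho_{\mathrm{if},\theta}^{n}$ is a product across the blocks, measuring one block does not disturb the others and the outcome probabilities multiply, so this sequential scheme is a single POVM $\tilde{M}$ with $P^{\tilde{M}}_{\rho_{\mathrm{if},\theta}^{n}}(\vec{x})=\prod_{j}p_{\theta}\!\left(x_{j}\mid x^{j-1}\right)=P^{\mathrm{ad}}_{\theta}(\vec{x})$. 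Consequently any decision rule $\hat{\theta}(\vec{x})$ realizes, by measuring $\tilde{M}$ on $\rho_{\mathrm{if},\theta}^{n}$ and then applying $\hat{\theta}$, exactly the risk of the adaptive protocol, so the optimal risk of identical repetition is no larger than that of any adaptive protocol; the reverse inequality is trivial. This proves the theorem.

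The step requiring most care is the block-wise reduction, i.e.\ verifying that a tensor power $\Lambda_{\theta}^{\otimes n_{j}}$ does not leave the class of Examples~\ref{ex:ortho-measure}--\ref{ex:mes-rotate} so that the preceding Proposition applies (keeping track, in Example~\ref{ex:mes-rotate}, of the rotated commuting POVM and the iterated identity). Everything else is bookkeeping with Lemma~\ref{lem:sufficient} and the product structure of $\rho_{\mathrm{if},\theta}^{n}$, together with the observation that classical adaptation grants each $M_{j}^{x^{j-1}}$ access only to its own block, which is precisely what makes both $P^{\mathrm{ad}}_{\theta}$ and $P^{\tilde{M}}_{\rho_{\mathrm{if},\theta}^{n}}$ factorize block by block.
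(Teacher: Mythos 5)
Your proof is correct, but it follows a genuinely different route from the paper's. Both arguments rest on the same block-wise fact (tensor powers of the channels of Examples\thinspace\ref{ex:ortho-measure} and \ref{ex:mes-rotate} stay in the same class, so $\left\vert \Phi_{d}\right\rangle ^{\otimes n_{j}}$ is universally optimal for each block $\Lambda_{\theta}^{\otimes n_{j}}$, which is exactly the preceding proposition), but they exploit it differently. The paper runs a backward-induction (dynamic-programming) argument on the Bayes risk: it conditions on the accumulated data $x^{m-1}$, rewrites the last-step optimization as a Bayes problem with the posterior as prior, observes that the optimal last-block input is $\left\vert \Phi_{d}\right\rangle \left\langle \Phi_{d}\right\vert ^{\otimes n_{m}}$ independently of $x^{m-1}$, merges the last two blocks, and iterates until the adaptation disappears. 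You instead use Lemma\thinspace\ref{lem:sufficient} to simulate, for every history $x^{j-1}$, the adaptive block statistics $p_{\theta}\left(  \cdot\,|\,x^{j-1}\right)  $ exactly by a measurement $\tilde{M}_{j}^{x^{j-1}}$ on the corresponding block of $\rho_{\mathrm{if},\theta}^{n}$, and then assemble these feed-forward local measurements into one POVM on the product state, reproducing the adaptive protocol's joint data distribution exactly. Your route buys a stronger and cleaner conclusion: since the entire classical output law is matched for every $\theta$, optimality of identical repetition follows simultaneously for all losses, priors, minimax and Neyman--Pearson criteria, and you avoid the attainment-of-infimum issue implicit in the paper's "infimum over $\rho_{x^{m-1}}$ can be achieved" step; the paper's induction, on the other hand, makes explicit how the adaptation is eliminated block by block within the Bayes-risk formulation it states. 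The one step you rightly flag as delicate --- that $\Lambda_{\theta}^{\otimes n_{j}}$ is again of the form (\ref{meas-ortho}) or (\ref{meas-rotate}), with $\left[  M_{+}^{\otimes n_{j}},M_{-}^{\otimes n_{j}}\right]  =0$ and the iterated identity with constant $c^{n_{j}}$ --- is verified correctly, and is the same closure property the paper invokes in the proposition immediately preceding the theorem.
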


\begin{proof}
Let $\vec{M}^{j}:=\left\{  M_{1},M_{2}^{x^{j}},\cdots,M_{j}^{x^{j-1}}\right\}
_{x^{j-1}}$ and $\vec{\psi}^{j}:=\left\{  \left\vert \psi_{1}\right\rangle
,\left\vert \psi_{2}^{x^{1}}\right\rangle ,\cdots,\left\vert \psi
_{j-1}^{x^{j-1}}\right\rangle \right\}  _{x^{j-1}}$. Also, $p_{\theta,\vec
{M}^{m},\vec{\psi}^{m}}\left(  t\right)  $ is the probability of choosing the
decision $t$ when sequence of adaptive measurements $\vec{M}^{m}$ and inputs
$\vec{\psi}^{m}$ are chosen. Then, with the prior distribution $\pi\left(
\theta\right)  $, the minimized risk is
\begin{align*}
&  \inf_{\vec{M}^{m},\vec{\psi}^{m}}\sum_{\theta,t}\pi\left(  \theta\right)
\,p_{\theta,\vec{M}^{m},\vec{\psi}^{m}}\left(  t\right)  l_{\theta}\left(
t\,\right) \\
&  =\inf_{\vec{M}^{m-1},\vec{\psi}^{m-1}}\sum_{x^{m-1}}\,\inf_{M_{m}^{x^{m-1}%
},\rho_{x^{m-1}}}\sum_{t}\sum_{\theta}\pi\left(  \theta\right)  \,\times\\
&  p_{\theta,\vec{M}^{m-1},\vec{\psi}^{m-1}}\left(  x^{m-1}\right)
\mathrm{tr}\,\left\{  \Lambda^{\otimes n_{m}}\otimes\mathbf{I}\left(
\rho_{x^{m-1}}\right)  M_{m}^{x^{m-1}}\left(  t\right)  \right\}  l_{\theta
}\left(  t\,\right)  ,
\end{align*}
Let us denote the marginal distribution of $x^{m-1}$ and conditional
distribution of $\theta$ given $x^{m-1}$ by%
\begin{align*}
\,p_{\pi,\vec{M}^{m-1},\vec{\psi}^{m-1}}\left(  x^{m-1}\right)   &
:=\sum_{\theta}\pi\left(  \theta\right)  \,p_{\theta,\vec{M}^{m-1},\vec{\psi
}^{m-1}}\left(  x^{m-1}\right)  ,\\
\tilde{\pi}_{\vec{M}^{m-1},\vec{\psi}^{m-1}}\,\left(  \theta|x^{m-1}\right)
&  :=\pi\left(  \theta\right)  \,p_{\theta,\vec{M}^{m-1},\vec{\psi}^{m-1}%
}\left(  x^{m-1}\right)  /\,p_{\pi,\vec{M}^{m-1},\vec{\psi}^{m-1}}\left(
x^{m-1}\right)  ,
\end{align*}
respectively. Then, the minimized risk is
\begin{align*}
&  \inf_{\vec{M}^{m},\vec{\psi}^{m}}\sum_{\theta,t}\pi\left(  \theta\right)
\,p_{\theta,\vec{M}^{m},\vec{\psi}^{m}}\left(  t\right)  l_{\theta}\left(
t\right) \\
&  =\inf_{\vec{M}^{m-1},\vec{\psi}^{m-1}}\sum_{x^{m-1}}\,p_{\pi,\vec{M}%
^{m-1},\vec{\psi}^{m-1}}\left(  x^{m-1}\right)  \times\\
&  \inf_{M_{m}^{x^{m-1}},\rho_{x^{m-1}}}\sum_{\theta,t}\tilde{\pi}_{\vec
{M}^{m-1},\vec{\psi}^{m-1}}\,\left(  \theta|x^{m-1}\right)  \mathrm{tr}%
\,\left\{  \Lambda^{\otimes n_{m}}\otimes\mathbf{I}\left(  \rho_{x^{m-1}%
}\right)  M_{m}^{x^{m-1}}\left(  t\right)  \right\}  l_{\theta}\left(
t\right)  .
\end{align*}
By definition of a universally optimal state, infimum over $\rho_{x^{m-1}}$
can be achieved by $\rho_{x^{m-1}}=\left\vert \Phi_{d}\right\rangle
\left\langle \Phi_{d}\right\vert ^{\otimes n_{m}}$ , which does not depends on
the data sequence $x^{m-1}$. Therefore, we can merge the last two steps into
one; depending on $x^{m-2}$, we send $\rho_{x^{m-2}}\otimes\left\vert \Phi
_{d}\right\rangle \left\langle \Phi_{d}\right\vert ^{\otimes n_{m}}$ into
$\Lambda^{\otimes n_{m-1}+n_{m}}\otimes\mathbf{I}$ and apply $M_{m-1}%
^{x^{m-2}}$ and $M_{m}^{x^{m-1}}$, successively. Repeating this process, we
can get rid of classical adaptation.
\end{proof}

\appendix

\section{Proof of Lemma\thinspace\ref{lem:parallel}}

\label{appendix:pf-lem}

\begin{lemma}
\label{lem:polygon}Suppose $x_{i}\geq0$ ($i=1,\cdots,d$) Then, $Ang\left(
x\right)  \neq\emptyset$ if and only if
\begin{equation}
x_{1}^{\downarrow}\leq\sum_{i=2}^{d}x_{i}^{\downarrow}. \label{x<sumx}%
\end{equation}

\end{lemma}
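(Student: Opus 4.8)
The plan is to reinterpret membership in $Ang(x)$ as the existence of a closed planar polygon with prescribed edge lengths and then invoke the elementary polygon inequality. Note first that $\vec\omega\in Ang(x)$ exactly says $\sum_{i=1}^{d-1}x_i\omega_i+x_d\cdot 1=0$ with each $|\omega_i|=1$, i.e.\ the $d$ plane vectors $x_1\omega_1,\dots,x_{d-1}\omega_{d-1},\,x_d\cdot 1$ form a closed polygon; here the first $d-1$ edges have free direction and the last has direction fixed to $1$. Since whether a closed polygon with given side lengths exists is unaffected by a global rotation of all edges, this is equivalent to: there exist unit $\omega_1,\dots,\omega_d$ (now with $\omega_d$ also free) such that $\sum_{i=1}^d x_i\omega_i=0$. (Formally, from a free-orientation solution one recovers a member of $Ang(x)$ by multiplying all $\omega_i$ by $\overline{\omega_d}$ when $x_d>0$, and the case $x_d=0$ is immediate.) So it suffices to prove that a closed polygon with side lengths $x_1,\dots,x_d$ exists if and only if $x_1^{\downarrow}\le\sum_{i=2}^d x_i^{\downarrow}$.

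For necessity I would pick an index $k$ with $x_k=x_1^{\downarrow}=\max_i x_i$; from $x_k\omega_k=-\sum_{i\neq k}x_i\omega_i$ and $|\omega_i|=1$ one gets $x_1^{\downarrow}=x_k=\bigl|\sum_{i\neq k}x_i\omega_i\bigr|\le\sum_{i\neq k}x_i=\sum_{i=2}^d x_i^{\downarrow}$.

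The real content is sufficiency. After the reduction above I may relabel so that $x_1\ge x_2\ge\cdots\ge x_d\ge 0$, and the hypothesis reads $x_1\le x_2+\cdots+x_d$. The idea is to merge the edges $x_2,\dots,x_d$ into two consecutive blocks whose total lengths, together with $x_1$, satisfy the triangle inequalities. Put $D_j:=\sum_{i=2}^{j+1}x_i-\sum_{i=j+2}^{d}x_i$ for $0\le j\le d-1$; then $D_0=-\sum_{i\ge 2}x_i\le -x_1$, $D_{d-1}=\sum_{i\ge 2}x_i\ge x_1$, and $D_j-D_{j-1}=2x_{j+1}\le 2x_1$. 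Taking the least $j$ with $D_j\ge -x_1$ yields $D_j\in[-x_1,x_1]$. Set $A:=\{2,\dots,j+1\}$, $B:=\{j+2,\dots,d\}$, $a:=\sum_{i\in A}x_i$, $b:=\sum_{i\in B}x_i$, so $|a-b|=|D_j|\le x_1\le a+b$. Hence $x_1,a,b$ form a (possibly degenerate) triangle: since $\{\,|a\zeta_2+b\zeta_3|:|\zeta_2|=|\zeta_3|=1\,\}=[\,|a-b|,\,a+b\,]\ni x_1$, a global rephasing produces unit $\zeta_2,\zeta_3$ with $a\zeta_2+b\zeta_3=-x_1$; take $\zeta_1=1$. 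Finally define $\omega_1:=\zeta_1$, $\omega_i:=\zeta_2$ for $i\in A$, $\omega_i:=\zeta_3$ for $i\in B$; then $\sum_{i=1}^d x_i\omega_i=x_1\zeta_1+a\zeta_2+b\zeta_3=0$, closing the polygon.

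The one genuinely nontrivial step is this two-block split; everything else is bookkeeping, and the degenerate situations ($d\le 2$, some $x_i=0$, or all $x_i=0$) are handled uniformly by the same argument, so no separate base case is needed. In writing this up I would be careful to perform the reduction to the symmetric closed-polygon problem \emph{before} relabelling, since $Ang(x)$ itself is not symmetric under a relabelling that moves the index $d$.
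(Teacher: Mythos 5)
Your proposal is correct, but it takes a genuinely different route from the paper. The paper proves only the ``if'' direction (declaring ``only if'' obvious) by induction on $d$: with the entries sorted, it merges the two smallest lengths $x_{d-1}^{\downarrow}+x_{d}^{\downarrow}$ into a single edge, splits into the two cases $x_{d-1}^{\downarrow}+x_{d}^{\downarrow}\leq x_{1}^{\downarrow}$ and $>x_{1}^{\downarrow}$ (in the latter case the merged edge becomes the new largest entry), and bottoms out at $d=3$ with the triangle inequality. You instead give a direct, non-inductive argument: after reducing to the phase-free closed-polygon problem, you choose by a discrete intermediate-value argument on $D_j$ a partition of $\{2,\dots,d\}$ into two consecutive blocks whose sums $a,b$ satisfy $|a-b|\leq x_1\leq a+b$, and then collapse each block onto a single phase, so the whole problem reduces in one step to a (possibly degenerate) triangle. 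Two points where your write-up is actually more careful than the paper's: you prove the necessity explicitly (via $x_k=\bigl|\sum_{i\neq k}x_i\omega_i\bigr|\leq\sum_{i\neq k}x_i$), and you explicitly justify the global-rephasing reduction that makes relabelling legitimate, since $Ang(x)$ pins the direction of the $d$-th edge and is therefore not symmetric under permutations moving the index $d$ --- a point the paper's induction uses tacitly when it reorders coordinates. What the paper's induction buys is brevity of the key step; what yours buys is a self-contained one-shot construction that also covers the degenerate cases uniformly. Both are valid proofs of the lemma.
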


\begin{proof}
Obviously, we only have to prove 'if'. If $d=3$, the assertion follows from
triangle inequality. Suppose the assertion is true for $d-1$, or for any
$y=\left(  y_{1},y_{2},\cdots,y_{d-1}\right)  $ with
\[
y_{1}\geq y_{2}\geq\cdots\geq y_{d-1}%
\]
and
\[
y_{1}\leq\sum_{i=2}^{d-1}y_{i},
\]
we have $Ang\left(  y\right)  \neq\emptyset$. Suppose $x_{d-1}^{\downarrow
}+x_{d}^{\downarrow}\leq x_{1}^{\downarrow}$ and
\[
x_{1}^{\downarrow}\leq\sum_{i=2}^{d}x_{i}^{\downarrow}=\sum_{i=2}^{d-2}%
x_{i}^{\downarrow}+x_{d-1}^{\downarrow}+x_{d}^{\downarrow}.
\]
hold. Then,
\[
y_{1}:=x_{1},\,y_{2}:=x_{2},\cdots,y_{d-2}:=x_{d-2},\,y_{d-1}=x_{d-1}%
^{\downarrow}+x_{d}^{\downarrow},
\]
$Ang\left(  y\right)  \neq\emptyset$ by the hypothesis of induction.
Therefore, $Ang\left(  x\right)  \neq\emptyset$ holds. On the other hand,
suppose $x_{d-1}^{\downarrow}+x_{d}^{\downarrow}>x_{1}^{\downarrow}$. Observe
$x_{1}^{\downarrow}\geq x_{2}^{\downarrow}\geq\cdots\geq x_{d}^{\downarrow
}\geq0$ yields
\[
\sum_{i=1}^{d-2}x_{i}^{\downarrow}\geq x_{d-1}^{\downarrow}+x_{d}^{\downarrow
}.
\]
Therefore, with
\[
y_{1}:=x_{d-1}^{\downarrow}+x_{d}^{\downarrow},\,y_{2}=x_{1}^{\downarrow
},\,y_{3}=x_{2}^{\downarrow},\cdots,y_{d-1}=x_{d-2}^{\downarrow},
\]
by the hypothesis of induction, $Ang\left(  y\right)  \neq\emptyset$ holds.
Therefore, $Ang\left(  x\right)  \neq\emptyset$ holds. After all, we have the assertion.
\end{proof}

\begin{lemma}
\label{lem:triangle}Suppose $d=3$ and\ \ $x_{i}\geq0$, $x_{1}\geq x_{2}\geq
x_{3}$. Then, if $x_{1}<x_{2}+x_{3}$,
\[
Ang\left(  x\right)  =\left\{  \left(  \omega_{1},\omega_{2}\right)  ,\left(
\overline{\omega_{1}},\overline{\omega_{2}}\right)  \right\}  .
\]
If $x_{1}=x_{2}+x_{3}$,
\[
Ang\left(  x\right)  =\left\{  \left(  -1,1,1\right)  \right\}  .
\]

\end{lemma}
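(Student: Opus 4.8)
The plan is to solve the defining equation of $Ang(x)$ by hand. For $d=3$ an element of $Ang(x)$ is a pair $(\omega_1,\omega_2)$ with $|\omega_1|=|\omega_2|=1$ and $x_1\omega_1+x_2\omega_2+x_3=0$. I would first record the symmetry: the coefficients being real, $(\omega_1,\omega_2)\mapsto(\overline{\omega_1},\overline{\omega_2})$ permutes $Ang(x)$, and a fixed point of this involution is a solution with $\omega_1,\omega_2\in\{1,-1\}$. I would also note that we may assume $x_3>0$: in the case $x_1<x_2+x_3$ the inequalities $x_1\ge x_2$ and $x_1<x_2+x_3$ already force $x_3>0$ (hence $x_1\ge x_2>0$ as well), while the case $x_1=x_2+x_3$ with $x_3=0$ forces $x_1=x_2$ and is the degenerate situation excluded in the intended applications.

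The main step is a modulus computation pinning down $\omega_2$. Writing $x_1\omega_1=-(x_2\omega_2+x_3)$ and equating squared moduli gives $x_1^2=x_2^2+x_3^2+2x_2x_3\,\Re\omega_2$, so necessarily $\Re\omega_2=c$, where $c:=(x_1^2-x_2^2-x_3^2)/(2x_2x_3)$. I would then check $c\le 1\iff x_1\le x_2+x_3$ (the hypothesis) and $c\ge -1\iff x_1\ge|x_2-x_3|=x_2-x_3$ (automatic from $x_1\ge x_2$), so $|c|\le 1$. Conversely, for any $\omega_2$ on the unit circle with $\Re\omega_2=c$, the point $\omega_1:=-(x_2\omega_2+x_3)/x_1$ satisfies $|x_1\omega_1|^2=x_2^2+x_3^2+2x_2x_3c=x_1^2$, so $|\omega_1|=1$ and $(\omega_1,\omega_2)\in Ang(x)$. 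Hence, using $x_1,x_2>0$ so that nothing is lost on dividing, $Ang(x)$ is in bijection with $\{\,\omega_2:\ |\omega_2|=1,\ \Re\omega_2=c\,\}$.

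The count then follows. If $x_1<x_2+x_3$, then $c<1$, and also $c>-1$ since $c=-1$ would force $x_1=x_2-x_3<x_2\le x_1$; thus the fibre is $\{\,c\pm i\sqrt{1-c^2}\,\}$, two distinct mutually conjugate points, and $Ang(x)=\{(\omega_1,\omega_2),(\overline{\omega_1},\overline{\omega_2})\}$ with the two elements distinct. If $x_1=x_2+x_3$, then $c=1$, the fibre is the single point $\omega_2=1$, and then $\omega_1=-(x_2+x_3)/x_1=-1$, so $Ang(x)=\{(-1,1)\}$, i.e.\ $\{(-1,1,1)\}$ in the triple notation of the statement (the trailing $1$ being the unit attached to the constant term $x_3$).

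I do not expect a genuine obstacle here, since the whole argument collapses to one squared-modulus identity; the only care needed is (i) checking that $\omega_2\mapsto(\omega_1,\omega_2)$ is an honest bijection onto $Ang(x)$, which is where $x_1,x_2>0$ is used, and (ii) the boundary bookkeeping separating $c=1$ (one solution) from $|c|<1$ (two solutions) and ruling out $c=-1$. Geometrically this is exactly the elementary fact that two circles of radii $x_1$ and $x_2$ with centres a distance $x_3$ apart meet in two points precisely when $x_1<x_2+x_3$ (given $x_1\ge x_2\ge x_3>0$) and are internally tangent when $x_1=x_2+x_3$; one could present the proof in those terms instead.
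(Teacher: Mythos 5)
Your proof is correct, and it fills a gap rather than duplicating anything: the paper states Lemma\thinspace\ref{lem:triangle} with no proof at all, evidently regarding it as the elementary fact that two circles of radii $x_{1}$ and $x_{2}$ whose centres are $x_{3}$ apart meet in two (conjugate) points or are tangent. Your squared-modulus computation $x_{1}^{2}=x_{2}^{2}+x_{3}^{2}+2x_{2}x_{3}\,\mathrm{Re}\,\omega_{2}$, the verification that $\left\vert c\right\vert \leq1$ exactly under the stated inequalities, and the explicit bijection $\omega_{2}\mapsto\left(  \omega_{1},\omega_{2}\right)  $ (using $x_{1},x_{2}>0$) is a clean, complete way to make that precise, and it also recovers the count (two conjugate solutions when $x_{1}<x_{2}+x_{3}$, the single solution $\left(  -1,1\right)  $ when $x_{1}=x_{2}+x_{3}$), consistent with the non-emptiness criterion of Lemma\thinspace\ref{lem:polygon}. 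One point worth keeping explicit: as you note, the equality case with $x_{3}=0$, $x_{1}=x_{2}>0$ genuinely violates the stated conclusion ($\omega_{1}=-\omega_{2}$ gives a whole circle of solutions), so the lemma tacitly needs $x_{3}>0$; this is harmless for the paper, since in its only application (Lemma\thinspace\ref{lem:other-than-triangle}) the three coefficients fed into Lemma\thinspace\ref{lem:triangle} are strictly positive, but your proof should state the restriction rather than only remark on it. Also, your reading of the paper's $\left(  -1,1,1\right)  $ as the pair $\left(  -1,1\right)  $ with a trailing unit attached to $x_{3}$ is the right interpretation of what is simply a typographical slip in the statement.
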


\bigskip

\begin{lemma}
\label{lem:smooth}Suppose $d\geq4$, $x_{1}\geq x_{2}\geq\cdots\geq x_{d}>0$,
and
\begin{equation}
x_{1}<\sum_{i=2}^{d}x_{i}. \label{x1<sum-x}%
\end{equation}
Then, $Ang\left(  x\right)  $ is a ($d-3$)-dimensional smooth manifold.
\end{lemma}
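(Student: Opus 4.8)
The plan is to realize $Ang(x)$ as the zero set of a submersion and invoke the regular value theorem. Let $\mathbb{T}:=\{\vec\omega\in\mathbb{C}^{d-1}\,;\,|\omega_j|=1,\ j=1,\dots,d-1\}$, a compact smooth manifold of real dimension $d-1$, and define the smooth map $F:\mathbb{T}\to\mathbb{C}\cong\mathbb{R}^2$ by $F(\vec\omega):=\sum_{j=1}^{d-1}x_j\omega_j+x_d$, so that $Ang(x)=F^{-1}(0)$. If $0$ is a regular value of $F$, then $F^{-1}(0)$ is a smooth submanifold of $\mathbb{T}$ of dimension $(d-1)-2=d-3$, which is the assertion; and it is nonempty by Lemma \ref{lem:polygon}, since $x_1<\sum_{i=2}^{d}x_i$ gives $x_1^{\downarrow}=x_1\le\sum_{i=2}^{d}x_i^{\downarrow}$. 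So everything reduces to showing $0$ is a regular value.

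First I would compute the differential. In the angular chart $\omega_j=e^{\sqrt{-1}\,\phi_j}$ one has $\partial F/\partial\phi_j=\sqrt{-1}\,x_j\omega_j$, so the image of $dF_{\vec\omega}$, regarded as a real-linear map into $\mathbb{R}^2$, is $\mathrm{span}_{\mathbb{R}}\{\sqrt{-1}\,x_j\omega_j\,;\,j=1,\dots,d-1\}$. Since every $x_j>0$ and multiplication by $\sqrt{-1}$ is an $\mathbb{R}$-linear isomorphism of $\mathbb{C}$, this span is all of $\mathbb{C}$ unless $\omega_1,\dots,\omega_{d-1}$ are pairwise real-collinear, i.e.\ unless there exist a unit $\zeta\in\mathbb{C}$ and signs $\epsilon_j\in\{+1,-1\}$ with $\omega_j=\epsilon_j\zeta$ for every $j$. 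Hence $dF_{\vec\omega}$ is surjective exactly away from these "collinear" configurations.

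The crux — and the step I expect to be the main obstacle — is to show that no collinear configuration lies in $F^{-1}(0)$. Suppose $\vec\omega\in F^{-1}(0)$ with $\omega_j=\epsilon_j\zeta$. Then $\zeta\sum_{j=1}^{d-1}\epsilon_j x_j=-x_d$; the factor $\sum_j\epsilon_j x_j$ is real and $-x_d$ is a negative real, and it cannot be $0$ since $x_d>0$, so $\zeta=-x_d/\sum_j\epsilon_j x_j$ is real, $\zeta=\pm1$. Absorbing that sign and adjoining $\epsilon_d=\pm1$ accordingly, one is reduced to the existence of signs $\epsilon_1,\dots,\epsilon_d\in\{+1,-1\}$ with $\sum_{j=1}^{d}\epsilon_j x_j=0$, i.e.\ a closed collinear $d$-gon with side lengths $x_1,\dots,x_d$. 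Taking $\epsilon_1=+1$ without loss of generality and splitting $\{2,\dots,d\}$ into the indices receiving sign opposite to, resp.\ equal to, that of $x_1$, this says $x_1=\sum_{\mathrm{opp}}x_j-\sum_{\mathrm{eq}}x_j$; here I would try to use the monotone ordering $x_1\ge\cdots\ge x_d>0$ to force any such equality into the boundary case $x_1^{\downarrow}=\sum_{i\ge2}x_i^{\downarrow}$ that the strict hypothesis (and, for $d=3$, Lemma \ref{lem:triangle}) excludes — a case analysis on the sign pattern, which is the delicate combinatorial point. Granting this, $dF$ is onto at every point of $F^{-1}(0)$, so $0$ is a regular value; the implicit function theorem then gives, near each $\vec\omega\in Ang(x)$, a chart onto an open subset of $\mathbb{R}^{d-3}$ by eliminating two of the $\phi_j$, and smoothness of the overlap maps is automatic from smoothness of $F$. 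This yields the claim.
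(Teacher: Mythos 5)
Your reduction is set up correctly: on the torus $\{\vec\omega\,;\,|\omega_j|=1\}$ the map $F(\vec\omega)=\sum_{j=1}^{d-1}x_j\omega_j+x_d$ has $Ang(x)=F^{-1}(0)$, its critical points are exactly the configurations with all $\omega_j$ real multiples of one unit $\zeta$ (since the partials are $\sqrt{-1}\,x_j\omega_j$ and $x_j>0$), nonemptiness follows from Lemma~\ref{lem:polygon}, and the dimension count $(d-1)-2=d-3$ is right. This is also a genuinely different route from the paper, which never checks surjectivity of a differential but instead uses the moduli $r_k=|z_k|$ of partial sums as local coordinates and shows the admissible set of $\vec r$ has nonempty interior. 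However, the entire weight of your argument rests on the step you explicitly defer (``granting this''): that no collinear configuration lies in $F^{-1}(0)$. That claim is false under the hypotheses of Lemma~\ref{lem:smooth}. Take $d=4$ and $x=(1,1,1,1)$: the ordering holds and $x_1=1<3=\sum_{i\geq2}x_i$, yet $\vec\omega=(1,-1,-1)\in Ang(x)$, and at this point the partials $\sqrt{-1}\,x_j\omega_j$ are all purely imaginary, so $dF$ has rank $1$. The hypothesis (\ref{x1<sum-x}) excludes only the one vanishing signed sum $x_1-\sum_{i\geq2}x_i=0$, whereas regularity of $0$ requires $\sum_{i=1}^{d}\epsilon_i x_i\neq0$ for \emph{every} sign pattern $\epsilon\in\{\pm1\}^d$, and nothing in the hypotheses prevents the other patterns from vanishing. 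So the ``delicate combinatorial point'' you hoped to settle cannot be established.

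The gap is not repairable within your strategy, because at such a point $Ang(x)$ genuinely fails to be a manifold: in the example, the two circles $\{(\omega,-1,-\omega)\}$ and $\{(\omega,-\omega,-1)\}$ ($\omega$ unimodular) both lie in $Ang(x)$ and meet only at $(1,-1,-1)$, which is impossible inside a $1$-dimensional manifold (two arcs through a point of a $1$-manifold must agree on a neighborhood of that point); a second-order expansion at $(1,-1,-1)$ confirms the zero set is locally a transversal crossing. Consequently your argument proves the lemma only under the stronger hypothesis that $\sum_{i=1}^{d}\epsilon_i x_i\neq0$ for all sign choices, in which case the regular-value proof is complete and arguably cleaner than the paper's coordinate construction. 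Note that the same degenerate configurations are a blind spot of the paper's own proof as well (its correspondence $\vec r\mapsto\vec\omega$ is smooth only at nondegenerate triangles), so what you have uncovered is really an issue with the statement of the lemma — which matters for its use in Theorem~\ref{th:su(d)}, where, e.g., $x=(1/4,1/4,1/4,1/4)$ is admissible — and not merely a defect of your method; still, as written, your proposal leaves its decisive step unproven, and that step is false as stated.
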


\begin{proof}
Let%
\begin{align*}
z_{k}  &  :=\sum_{i=k}^{d-1}x_{i}\omega_{i}+x_{d},\,\ \left(  k=3,\cdots
,d-1\right)  ,\\
z_{d}  &  :=x_{d},\,\,\,r_{k}:=\left\vert z_{k}\right\vert ,\\
\,\vec{r}  &  :=\left(  r_{3},r_{4},\cdots,r_{d-1}\right)  .
\end{align*}
Suppose $\vec{r}$ is fixed. Then, length of each edge of each triangle
$z_{1}z_{k-1}z_{k}$ ($k=3,\cdots,d$) is decided, and $\vec{\omega}$ can take
only finite possible values. Also, the map from $\vec{r}$ $\ $to $\vec{\omega
}$ is smooth. Therefore, we use $\vec{r}$ as a local coordinate of $Ang\left(
x\right)  $. Let $A\left(  x\right)  $ be the set of all $\vec{r}$s such that
$\vec{\omega}\in Ang\left(  x\right)  $. Below, we show the interior $A\left(
x\right)  ^{\circ}$ of $A\left(  x\right)  $ is non-empty. Then, the assertion
of the lemma immediately follows.

An element of $A\left(  x\right)  $ is constructed as follows. We first fix
$\omega_{d-1}$, $r_{d-1}$, then $\omega_{d-2}$, $r_{d-2}$, ..., $\omega_{k+1}%
$, $r_{k+1}$. We choose $r_{k}$ so that the following (\ref{r<r<r}) and
(\ref{x<r<x}) are satisfied (then, $\omega_{k}$ can take only one of two
possible values.); First, by
\begin{equation}
r_{k}=\left\vert z_{k+1}+x_{k}\omega_{k}\right\vert , \label{rk}%
\end{equation}
existence of $\omega_{k}$ is equivalent to
\begin{equation}
\left\vert r_{k+1}-x_{k}\right\vert \leq r_{k}\leq r_{k+1}+x_{k}.
\label{r<r<r}%
\end{equation}
Also, for $\omega_{k-1}$,$\cdots$,$\omega_{1}$ to exist, by Lemma\thinspace
\ref{lem:polygon}, it is necessary and sufficient that
\begin{equation}
x_{1}-\sum_{i=2}^{k-1}x_{i}\leq r_{k}\leq\sum_{i=1}^{k-1}x_{i}. \label{x<r<x}%
\end{equation}
Therefore, $A\left(  x\right)  $ is the set of $\vec{r}$s with (\ref{r<r<r})
and (\ref{x<r<x}) for each $k=3$,$\cdots$,$d-1$. Therefore, $A\left(
x\right)  ^{\circ}$ is the set of all $\vec{r}$s with
\begin{equation}
\left\vert r_{k+1}-x_{k}\right\vert <r_{k}<r_{k+1}+x_{k}, \label{r<r<r-2}%
\end{equation}
and
\begin{equation}
x_{1}-\sum_{i=2}^{k-1}x_{i}<r_{k}<\sum_{i=1}^{k-1}x_{i} \label{x<r<x-2}%
\end{equation}
for each $k=3$,$\cdots$,$d-1$.

This $A\left(  x\right)  ^{\circ}$ is non-empty due to the following reasons.
By (\ref{x1<sum-x}), we have
\[
x_{1}-\sum_{i=2}^{d-2}x_{i}<r_{d}+x_{d-1}.
\]
Also, by $x_{d-2}\geq x_{d-1}\geq r_{d}>0$, we have
\[
\left\vert r_{d}-x_{d-1}\right\vert <\max\left\{  r_{d},x_{d-1}\right\}  \leq
x_{d-2}<\sum_{i=1}^{d-2}x_{i}.
\]
Therefore, combining these, the overlap of the set
\[
\left\{  r_{d-1}\,;\,x_{1}-\sum_{i=2}^{d-2}x_{i}<r_{d-1}<\sum_{i=1}^{d-2}%
x_{i}\right\}  ,
\]
and the set
\[
\left\{  r_{d-1}\,;\,\left\vert r_{d}-x_{d-1}\right\vert <r_{d-1}%
<r_{d}+x_{d-1}\right\}
\]
is not empty. Recursively, suppose $r_{k}$ with (\ref{r<r<r-2}) and
(\ref{x<r<x-2}) exists. Then, by (\ref{r<r<r-2}),%
\[
r_{k}-x_{k-1}<r_{k}<\sum_{i=1}^{k-2}x_{i},\,\,x_{1}-\sum_{i=2}^{k-2}%
x_{i}<r_{k}+x_{k-1}.
\]
Also, by $x_{1}\geq x_{2}\geq\cdots\geq x_{d}$ and $x_{i}>0$,
\[
x_{k-1}-r_{k}\leq x_{k-1}<\sum_{i=1}^{k-2}x_{i}\,.
\]
Therefore,
\[
\left\vert r_{k}-x_{k-1}\right\vert <\sum_{i=1}^{k-2}x_{i},\,x_{1}-\sum
_{i=2}^{k-2}x_{i}<r_{k}+x_{k-1}.
\]
Therefore, there is $r_{k-1}$ with
\[
\left\vert r_{k}-x_{k-1}\right\vert <r_{k-1}<r_{k}+x_{k-1}%
\]
and
\[
x_{1}-\sum_{i=2}^{k-2}x_{i}<r_{k}<\sum_{i=1}^{k-2}x_{i}.
\]
Therefore, there exists $\vec{r}$ such that (\ref{r<r<r-2}) and (\ref{x<r<x-2}%
) hold for each $k$, or equivalently, $A\left(  x\right)  ^{\circ}$ is non-empty.
\end{proof}

\begin{lemma}
\label{lem:other-than-triangle}Suppose $d\geq4$ and $x_{i}>0$ ($i=1,\cdots
,d$). Then $Ang\left(  x\right)  $ contains an element $\vec{\omega}$ such
that the set $\left\{  \omega_{1},\omega_{2},\cdots,\omega_{d-1}\right\}  $
contains at least three distinct elements.
\end{lemma}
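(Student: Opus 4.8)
The plan is to produce a point of $Ang(x)$ whose coordinates $\omega_{1},\dots,\omega_{d-1}$ are \emph{pairwise} distinct; since $d-1\ge 3$ this already gives more than the asserted three distinct values. First I would re-sort so that $x_{1}\ge x_{2}\ge\cdots\ge x_{d}>0$ and work under the hypothesis $x_{1}<\sum_{i=2}^{d}x_{i}$. This strict polygon inequality is exactly what is needed (and is available in the application inside the proof of Lemma~\ref{lem:parallel}): without it $Ang(x)$ collapses, e.g. for $x=(3,1,1,1)$ one has $Ang(x)=\{(-1,1,1)\}$, so the bare positivity assumption is genuinely insufficient. Under the strict inequality, $Ang(x)\ne\emptyset$ by Lemma~\ref{lem:polygon}, and by Lemma~\ref{lem:smooth} it is a smooth manifold of dimension $d-3$.

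The key step is to understand the ``collision'' loci. For a pair $i\ne j$ in $\{1,\dots,d-1\}$ put $Z_{ij}:=\{\vec\omega\in Ang(x):\omega_{i}=\omega_{j}\}$, a closed subset of $Ang(x)$. Writing $\zeta$ for the common value $\omega_{i}=\omega_{j}$, the defining equation $\sum_{k=1}^{d-1}x_{k}\omega_{k}+x_{d}=0$ becomes $\sum_{k\notin\{i,j\}}x_{k}\omega_{k}+(x_{i}+x_{j})\zeta+x_{d}=0$, which identifies $Z_{ij}$ with $Ang(y)$ for the $(d-1)$-tuple $y$ obtained from $x$ by replacing the two entries $x_{i},x_{j}$ by the single entry $x_{i}+x_{j}$ and keeping $x_{d}$ as the pinned last entry. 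All entries of $y$ are positive and $x_{d}$ is still (one of) the smallest of them, so after re-sorting I may apply: Lemma~\ref{lem:smooth} when $d-1\ge4$ and $y$ satisfies the strict polygon inequality, giving $\dim Z_{ij}=(d-1)-3=d-4$; Lemma~\ref{lem:triangle} when $d-1=3$, giving $\dim Z_{ij}=0=d-4$; and Lemma~\ref{lem:polygon} when the polygon inequality fails for $y$, in which case $Ang(y)$ is empty or a single point. In every case $\dim Z_{ij}\le d-4$.

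To finish, note that $Ang(x)$ is a $(d-3)$-manifold, hence a Baire space, while each $Z_{ij}$ is a closed subset of dimension $\le d-4$ and so has empty interior in $Ang(x)$. Therefore the finite union $\bigcup_{i<j}Z_{ij}$ is a proper subset of $Ang(x)$, and any $\vec\omega\in Ang(x)\setminus\bigcup_{i<j}Z_{ij}$ satisfies $\omega_{i}\ne\omega_{j}$ for all $i\ne j$; thus $\{\omega_{1},\dots,\omega_{d-1}\}$ has $d-1\ge3$ elements, which proves the lemma.

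I expect the two genuinely non-mechanical points to be (a) recognizing that the stated hypothesis has to be sharpened to the strict polygon inequality, and (b) the structural identification $Z_{ij}\cong Ang(\text{merged tuple})$, which is what lets the argument bootstrap off Lemma~\ref{lem:smooth}; once these are in hand the dimension count is routine. A more computational alternative would fix $\omega_{3}=\cdots=\omega_{d-1}$ equal to a single well-chosen unit vector so as to reduce to fitting a triangle with sides $x_{1}$, $x_{2}$ and a third side of controllable length, then perturb to pull $\omega_{1},\omega_{2}$ off that common value; but the explicit verification of the relevant triangle inequalities there is messier, so I would prefer the topological argument above.
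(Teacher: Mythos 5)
Your proposal is correct in substance, but it is a genuinely different argument from the paper's, so a comparison is in order. The paper proves the lemma by contradiction: if every $\vec{\omega}\in Ang\left(x\right)$ used at most two distinct phases, then for each subset $I\subset\left\{1,\dots,d-1\right\}$ the defining equation collapses to the triangle relation $\left(\sum_{i\in I}x_{i}\right)\nu_{I}+\left(\sum_{i\in I^{c}}x_{i}\right)\nu_{I}^{\prime}+x_{d}=0$, which by Lemma~\ref{lem:triangle} admits at most two solutions $\left(\nu_{I},\nu_{I}^{\prime}\right)$; ranging over the finitely many subsets $I$ would make $Ang\left(x\right)$ a finite set, contradicting the dimension $d-3\geq1$ supplied by Lemma~\ref{lem:smooth}. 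You instead identify each pairwise collision locus $Z_{ij}$ with $Ang$ of the merged tuple, bound its dimension by $d-4$ via Lemmas~\ref{lem:polygon}, \ref{lem:triangle} and \ref{lem:smooth}, and take a generic point; this needs more bookkeeping but proves the stronger statement that generically all $\omega_{i}$ are pairwise distinct. Both routes pivot on Lemma~\ref{lem:smooth}, and your insistence on sharpening the hypothesis to the strict polygon inequality $x_{1}^{\downarrow}<\sum_{i=2}^{d}x_{i}^{\downarrow}$ is a genuine catch: with only $x_{i}>0$ the lemma is false (your $x=\left(3,1,1,1\right)$, where $Ang\left(x\right)=\left\{\left(-1,1,1\right)\right\}$), the paper's own proof uses this hypothesis tacitly through Lemma~\ref{lem:smooth}, and the sharpened hypothesis does hold where the lemma is invoked, inside the proof of Lemma~\ref{lem:parallel}.

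Two minor repairs to your write-up. First, re-sorting is not entirely free, because the definition of $Ang$ pins the last entry (its phase is frozen at $1$): moving $x_{d}$ requires the bijection that divides all phases by some $\omega_{j}$, which is a diffeomorphism --- so every dimension count transfers --- but it does not literally preserve the number of distinct non-pinned phases, and for $d=4$ one distinct value could be lost when passing back to the original ordering. The clean fix is to also delete the loci $\left\{\vec{\omega}\,;\,\omega_{i}=1\right\}$ (collisions with the pinned side), which your merging argument covers verbatim with $x_{i}$ merged into $x_{d}$; then a generic point has all $d$ phases pairwise distinct, a property invariant under re-pinning. Second, in the boundary case where the merged tuple satisfies the polygon inequality with equality, the claim that $Ang\left(y\right)$ is a single point is not literally Lemma~\ref{lem:polygon}; it follows from the easy alignment argument and deserves one sentence. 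Neither point affects the core of your argument.
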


\begin{proof}
Suppose $\omega_{1}$, $\omega_{2}$, $\cdots$, $\omega_{d-1}$ can take at most
two distinct values for any element of $Ang\left(  x\right)  $. Let $I$ be a
subset of $\left\{  1,\cdots,d-1\right\}  $ , and $\omega_{i}=\nu_{I}$
($\,i\in I$), $\omega_{i}=\nu_{I}^{\prime}$ ($\,i\in I^{c}$). Then, $\left(
\nu_{I},\nu_{I}^{\prime}\right)  $ is decided by Lemma\thinspace
\ref{lem:triangle}. Moving $I$ over all the subsets of $\left\{
1,\cdots,d-1\right\}  $, $\left(  \nu_{I},\nu_{I}^{\prime}\right)  $ can move
over discretely many values. This contradicts with Lemma\thinspace
\ref{lem:smooth}.
\end{proof}

\bigskip Now, we are in the position to state the proof of Lemma\thinspace
\ref{lem:parallel}.

\begin{proof}
\textbf{(Lemma\thinspace\ref{lem:parallel})} When $d=3$, the assertion is
trivial. So suppose $d\geq4$. Let
\[
\vec{\omega}\left(  t\right)  =\left(  e^{\sqrt{-1}\eta_{1}\left(  t\right)
},e^{\sqrt{-1}\eta_{2}\left(  t\right)  },\cdots,e^{\sqrt{-1}\eta_{d-1}\left(
t\right)  }\right)  \in Ang\left(  x\right)  \subset Ang\left(  x^{\prime
}\right)  ,
\]
where $\eta_{i}\left(  t\right)  $ are smooth functions. (Such smooth
parameter $t$ exists due to Lemma\thinspace\ref{lem:smooth}.) Then,
\[
\sum_{i=1}^{d-1}x_{i}e^{\sqrt{-1}\eta_{i}}+x_{d}=\sum_{i=1}^{d-1}x_{i}%
^{\prime}e^{\sqrt{-1}\eta_{i}}+x_{d}^{\prime}=0.
\]
Differentiating by $t$,
\begin{equation}
\sum_{i=1}^{d-1}x_{i}\overset{\cdot}{\eta}_{i}e^{\sqrt{-1}\eta_{i}}=\sum
_{i=1}^{d-1}x_{i}^{\prime}\overset{\cdot}{\eta}_{i}e^{\sqrt{-1}\eta_{i}}=0.
\label{doteta}%
\end{equation}

Due to Lemma\thinspace\ref{lem:smooth}, with
\[
\tilde{\eta}:=\left(  \overset{\cdot}{\eta}_{1},\overset{\cdot}{\eta}%
_{2},\cdots,\overset{\cdot}{\eta}_{d-1}\right)  ,
\]
$\mathrm{span}\left\{  \tilde{\eta};\text{(\ref{doteta}) holds}\right\}  $ is
$d-3$ dimensional. Therefore, its orthogonal complement in $%
\mathbb{R}
^{d-1}$ is at most two dimensional. By (\ref{doteta}),
\begin{align*}
&  \,(x_{1}\cos\eta_{1},\cdots,x_{d-1}\cos\eta_{d-1}\,),\,(x_{1}\sin\eta
_{1}\,,\cdots,x_{d-1}\sin\eta_{d-1}\,),\\
&  (x_{1}^{\prime}\cos\eta_{1}\,,\cdots,x_{d-1}^{\prime}\cos\eta
_{d-1}\,),\,(x_{1}^{\prime}\sin\eta_{1}\,,\cdots,\,x_{d-1}^{\prime}\sin
\eta_{d-1}),
\end{align*}
are orthogonal to $\mathrm{span}\left\{  \tilde{\eta};\text{(\ref{doteta})
holds }\right\}  $. By Lemma\thinspace\ref{lem:other-than-triangle}, we can
choose $\eta_{i}$ so that the set $\left\{  \eta_{1},\eta_{2},\cdots
,\eta_{d-1}\right\}  $ contains at least three distinct values. Therefore,
$\,(x_{1}\cos\eta_{1},\cdots,x_{d-1}\cos\eta_{d-1}\,)$ and $(x_{1}\sin\eta
_{1}\,,\cdots,x_{d-1}\sin\eta_{d-1}\,)$ are linearly independent, thus can be
chosen as a basis of orthogonal complement of $\mathrm{span}\left\{
\tilde{\eta};\text{(\ref{doteta}) holds}\right\}  $. Therefore, there are
$a_{1}$,$\cdots$,$a_{4}$ with
\begin{align*}
(x_{1}^{\prime}\cos\eta_{1}\,,\cdots,x_{d-1}^{\prime}\cos\eta_{d-1}\,)  &
=a_{1}\,(x_{1}\cos\eta_{1},\cdots,x_{d-1}\cos\eta_{d-1}\,)+a_{2}(x_{1}\sin
\eta_{1}\,,\cdots,x_{d-1}\sin\eta_{d-1}\,),\\
(x_{1}^{\prime}\sin\eta_{1}\,,\cdots,\,x_{d-1}^{\prime}\sin\eta_{d-1})  &
=a_{3}(x_{1}\cos\eta_{1},\cdots,x_{d-1}\cos\eta_{d-1})+a_{4}(x_{1}\sin\eta
_{1}\,,\cdots,x_{d-1}\sin\eta_{d-1}\,).
\end{align*}
Therefore,
\begin{align*}
&  \left(  a_{1}\cos\eta_{i}\sin\eta_{i}+a_{2}\sin^{2}\eta_{i}-a_{3}\cos
^{2}\eta_{i}-a_{4}\cos\eta_{i}\sin\eta_{i}\right)  x_{i}\\
&  =\left(  \frac{a_{1}-a_{4}}{2}\sin2\eta_{i}-\frac{a_{2}+a_{3}}{2}\cos
2\eta_{i}+\frac{a_{2}-a_{3}}{2}\right)  x_{i}=0.
\end{align*}
Since $x_{i}>0$, we have
\[
\left(  a_{1}-a_{4}\right)  \sin2\eta_{i}-\left(  a_{2}+a_{3}\right)
\cos2\eta_{i}+a_{2}-a_{3}=0.
\]
Therefore, $a_{1}-a_{4}=a_{2}+a_{3}=0$, since the set $\left\{  \eta_{1}%
,\eta_{2},\cdots,\eta_{d-1}\right\}  $ contains at least three distinct values
by Lemma\thinspace\ref{lem:other-than-triangle}. Therefore,
\[
a_{1}-a_{4}=a_{2}+a_{3}=a_{2}-a_{3}=0,
\]
which means%
\begin{align*}
(\cos\eta_{1}\,x_{1}^{\prime},\cdots,\cos\eta_{d-1}\,x_{d-1}^{\prime})  &
=a_{1}(\cos\eta_{1}\,x_{1},\cdots,\cos\eta_{d-1}\,x_{d-1}),\\
(\sin\eta_{1}\,x_{1}^{\prime},\cdots,\sin\eta_{d-1}\,x_{d-1}^{\prime})  &
=a_{1}(\sin\eta_{1}\,x_{1},\cdots,\sin\eta_{d-1}\,x_{d-1}).
\end{align*}
Since one of $\cos\eta_{i}$ and $\sin\eta_{i}$ is always non-zero, we have the assertion.
\end{proof}

\bigskip

\section{Proof of Lemma\thinspace\ref{lem:a/b}}

\label{appendix:a/b}

\begin{proof}
\textbf{(Lemma\thinspace\ref{lem:a/b})}

Observe%
\[
\frac{\alpha_{d-1}}{\beta_{d-1}}-\frac{\sum_{i=d-1}^{d}\left\vert \gamma
_{i}\right\vert ^{2}\alpha_{i}}{\sum_{i=d-1}^{d}\left\vert c\gamma\right\vert
^{2}\beta_{i}}=\frac{\left\vert \gamma_{d}\right\vert ^{2}\beta_{d}}%
{\sum_{i=d-1}^{d}\left\vert \gamma_{i}\right\vert ^{2}\beta_{i}}\left(
\frac{\alpha_{d-1}}{\beta_{d-1}}-\frac{\alpha_{d}}{\beta_{d}}\right)  .
\]
Therefore, if $\sum_{i=d-1}^{d}\left\vert \gamma_{i}\right\vert ^{2}\neq0$, we
have
\[
\frac{\alpha_{d-1}}{\beta_{d-1}}\geq\frac{\sum_{i=d-1}^{d}\left\vert
\gamma_{i}\right\vert ^{2}\alpha_{i}}{\sum_{i=d-1}^{d}\left\vert \gamma
_{i}\right\vert ^{2}\beta_{i}}.
\]
Next, observe
\begin{align*}
\frac{\alpha_{d-2}}{\beta_{d-2}}-\frac{\sum_{i=d-2}^{d}\left\vert \gamma
_{i}\right\vert ^{2}\alpha_{i}}{\sum_{i=d-2}^{d}\left\vert \gamma
_{i}\right\vert ^{2}\beta_{i}}  &  =\frac{\sum_{i=d-1}^{d}\left\vert
\gamma_{i}\right\vert ^{2}\beta_{i}}{\sum_{i=d-2}^{d}\left\vert \gamma
_{i}\right\vert ^{2}\beta_{i}}\left(  \frac{\alpha_{d-2}}{\beta_{d-2}}%
-\frac{\sum_{i=d-1}^{d}\left\vert \gamma_{i}\right\vert ^{2}\alpha_{i}}%
{\sum_{i=d-1}^{d}\left\vert \gamma_{i}\right\vert ^{2}\beta_{i}}\right) \\
&  \geq\frac{\sum_{i=d-1}^{d}\left\vert \gamma_{i}\right\vert ^{2}\beta_{i}%
}{\sum_{i=d-2}^{d}\left\vert \gamma_{i}\right\vert ^{2}\beta_{i}}\left(
\frac{\alpha_{d-2}}{\beta_{d-2}}-\frac{\alpha_{d-1}}{\beta_{d-1}}\right)
\end{align*}
Therefore, if $\sum_{i=d-2}^{d}\left\vert \gamma_{i}\right\vert ^{2}\neq0$, we
have%
\[
\frac{\alpha_{d-2}}{\beta_{d-2}}\geq\frac{\sum_{i=d-2}^{d}\left\vert
\gamma_{i}\right\vert ^{2}\alpha_{i}}{\sum_{i=d-2}^{d}\left\vert \gamma
_{i}\right\vert ^{2}\beta_{i}}.
\]
Recursively, if $\sum_{i=2}^{d}\left\vert \gamma_{i}\right\vert ^{2}\neq0$, we
have
\[
\frac{\alpha_{2}}{\beta_{2}}\geq\frac{\sum_{i=2}^{d}\left\vert \gamma
_{i}\right\vert ^{2}\alpha_{i}}{\sum_{i=2}^{d}\left\vert \gamma_{i}\right\vert
^{2}\beta_{i}}.
\]
Observe
\begin{align*}
&  \frac{\alpha_{1}}{\beta_{1}}-\frac{\sum_{i=1}^{d}\left\vert \gamma
_{i}\right\vert ^{2}\alpha_{i}}{\sum_{i=1}^{d}\left\vert \gamma_{i}\right\vert
^{2}\beta_{i}}\\
&  =\frac{\sum_{i=2}^{d}\left\vert \gamma_{i}\right\vert ^{2}\beta_{i}}%
{\sum_{i=1}^{d}\left\vert \gamma_{i}\right\vert ^{2}\beta_{i}}\left(
\frac{\alpha_{1}}{\beta_{1}}-\frac{\sum_{i=2}^{d}\left\vert \gamma
_{i}\right\vert ^{2}\alpha_{i}}{\sum_{i=2}^{d}\left\vert \gamma_{i}\right\vert
^{2}\beta_{i}}\right) \\
&  \geq\frac{\sum_{i=2}^{d}\left\vert \gamma_{i}\right\vert ^{2}\beta_{i}%
}{\sum_{i=1}^{d}\left\vert \gamma_{i}\right\vert ^{2}\beta_{i}}\left(
\frac{\alpha_{1}}{\beta_{1}}-\frac{\alpha_{2}}{\beta_{2}}\right)
\end{align*}
Therefore, due to (\ref{a/b}),
\[
\frac{\alpha_{1}}{\beta_{1}}=\frac{\sum_{i=1}^{d}\left\vert \gamma
_{i}\right\vert ^{2}\alpha_{i}}{\sum_{i=1}^{d}\left\vert \gamma_{i}\right\vert
^{2}\beta_{i}}%
\]
implies $\sum_{i=2}^{d}\left\vert \gamma_{i}\right\vert ^{2}=0$. Thus we have
the assertion.
\end{proof}

\end{document}